\pdfoutput=1

\documentclass[%
    11pt,
    article,
    showpacs,preprintnumbers,
    showkeys,
    nofootinbib,
    amsmath,amssymb,
    aps,
    prb,
    floatfix,
    longbibliography,
]{revtex4-2}

\usepackage[english]{babel}
\usepackage[utf8]{inputenc}
\usepackage{lmodern}
\usepackage{graphicx}
\usepackage{bm}
\usepackage{hyperref}
\usepackage{xcolor}
\usepackage{epsfig}
\usepackage{amsmath}
\usepackage{amsfonts}
\usepackage{enumitem} 
\usepackage{amssymb}
\usepackage{amsthm}
\usepackage{tikz}
\usetikzlibrary{fit, positioning, shapes.geometric, arrows.meta, calc, backgrounds}
\usepackage{dsfont}
\usepackage[all]{nowidow}
\usepackage{pifont}
\usepackage{algorithm}
\usepackage{algpseudocode}
\usepackage[capitalise]{cleveref}
\usepackage[export]{adjustbox}
\usepackage{xspace}
\usepackage{nicefrac}

\usepackage{doi}
\usepackage{lipsum}
\usepackage{setspace}
\usepackage{titlesec}
\usepackage[normalem]{ulem}
\renewcommand{\thesubsection}{\arabic{subsection}}

\usepackage[letterspace=130]{microtype}

\definecolor{refColor}{HTML}{087ECC}
\definecolor{figColor}{HTML}{CC082D}
\definecolor{urlColor}{HTML}{08CC91}

\hypersetup{
    hypertexnames=false,           
    unicode=false,                 
    pdftoolbar=true,               
    pdfmenubar=true,               
    pdffitwindow=false,            
    pdfstartview={FitH},           
    pdftitle={},                   
    pdfauthor={Nicolai Lang},      
    pdfsubject={Quantum physics},  
    pdfcreator={Nicolai Lang},     
    pdfproducer={Nicolai Lang},    
    pdfkeywords={},                
    pdfnewwindow=true,             
    colorlinks=true,               
    linkcolor=figColor,            
    citecolor=refColor,            
    filecolor=magenta,             
    urlcolor=urlColor              
}

\newcommand{\bra}[1]{\mathinner{\langle{#1}|}}
\newcommand{\ket}[1]{\mathinner{|{#1}\rangle}}
\newcommand{\braket}[1]{\mathinner{\langle{#1}\rangle}}

\renewcommand{\vec}[1]{\mathbf{#1}}

\renewcommand{\vec}[1]{{\boldsymbol{#1}}}

\newcommand{\Tr}[2]{\operatorname{Tr}_{#2}\left[#1\right]}
\newcommand{\E}{\mathbb{E}}

\newcommand{\com}[2]{\left[#1,#2\right]}
\newcommand{\comm}[2]{\left[#1,#2\right]}

\renewcommand{\d}{\mathrm{d}}
\renewcommand{\H}{\mathcal{H}}
\newcommand{\NL}{\mathfrak{S}_L}
\newcommand{\N}{\mathbf{N}}

\newcommand{\B}{\mathbb{B}}
\renewcommand{\S}{\mathbb{S}}
\renewcommand{\L}{\mathbb{L}}
\renewcommand{\P}{\mathbb{P}}
\renewcommand{\d}{d}
\renewcommand{\E}{\mathbb{E}}
\newcommand{\V}{\mathbb{V}}

\newcommand{\GG}{\mathfrak{G}}
\newcommand{\GB}{\mathfrak{B}}
\newcommand{\GH}{\mathfrak{H}}

\newcommand{\vsigma}{\vec{\sigma}}
\newcommand{\cL}{\textsf{const}_L}
\newcommand{\G}{\mathcal{G}}
\newcommand{\barNL}{\mathfrak{S}^*_{2L}}
\newcommand{\barN}{\mathbf{N}^*}
\newcommand{\barn}{{n^*}}
\newcommand{\barB}{\mathbb{B}^*}
\newcommand{\s}{{\pmb{1}}}

\newcommand{\nO}{{n_0}}

\renewcommand{\O}{\Omega}
\newcommand{\OSTA}{\O_0}
\newcommand{\OGGS}{\O_2}
\newcommand{\OOPA}{\O_1}
\newcommand{\OSPB}{\O_3}
\newcommand{\OOLA}{\O_1'}
\newcommand{\OOLB}{\O_2'}
\newcommand{\OOLD}{\O_4'}
\newcommand{\OSLE}{\O_5'}
\newcommand{\OORC}{\O_3''}

\newcommand{\LOTA}{L_0}
\newcommand{\LGGS}{L_2}
\newcommand{\LOPA}{L_1}
\newcommand{\LOPB}{L_3}
\newcommand{\LOLA}{L_1'}
\newcommand{\LOLB}{L_2'}
\newcommand{\LOLD}{L_4'}
\newcommand{\LOLE}{L_5'}
\newcommand{\LORC}{L_3''}

\newcommand{\DE}{\Delta E}
\newcommand{\lb}{\mathrm{bd}}
\newcommand{\DElb}{\Delta E_\lb}

\newcommand{\Id}{\mathds{1}}
\newcommand{\spn}[1]{\operatorname{span}\left\{\,#1\,\right\}}
\newcommand{\p}{{\hat p}}
\newcommand{\hP}{{P}}
\newcommand{\hlf}{\tfrac{1}{2}}
\DeclareRobustCommand{\etal}{\textit{et~al.}\@}

\newcommand{\Deltamax}{\Delta_\mathrm{max}}
\newcommand{\Deltamin}{\Delta_\mathrm{min}}
\newcommand{\group}[1]{\mathrm{#1}}
\newcommand{\vep}{\varepsilon}

\newtheorem{theorem}{Theorem}
\newtheorem{lemma}{Lemma}
\newtheorem{applem}{Lemma}

\newtheorem{definition}{Definition}

\newtheorem{proposition}{Proposition}

\crefname{notation}{Notation}{Notations}
\Crefname{notation}{Notation}{Notations}
\crefname{applem}{Lemma}{Lemmas}
\Crefname{applem}{Lemma}{Lemmas}
\crefrangeformat{applem}{Lemmas~#3#1#4 to~#5#2#6}
\Crefrangeformat{applem}{Lemmas~#3#1#4 to~#5#2#6}
\crefrangemultiformat{applem}%
    {Lemmas~#3#1#4 to~#5#2#6}%
    {, #3#1#4 to~#5#2#6}%
    {, #3#1#4 to~#5#2#6}%
    {, and~#3#1#4 to~#5#2#6}
\Crefrangemultiformat{applem}%
    {Lemmas~#3#1#4 to~#5#2#6}%
    {, #3#1#4 to~#5#2#6}%
    {, #3#1#4 to~#5#2#6}%
    {, and~#3#1#4 to~#5#2#6}

\graphicspath{{./fig/}{./data/}}
\makeatletter\renewcommand{\bibfont}{\footnotesize\@clubpenalty\clubpenalty}\makeatother

\begin{document}

\title{Excitation gap of a blockade structure with \texorpdfstring{$\mathbb{Z}_2$}{Z2} topological order}

\author{Simon Fell}
\email{simon.fell@itp3.uni-stuttgart.de}
\author{Tobias F. Maier}
\author{Hans Peter Büchler}
\author{Nicolai Lang}

\affiliation{%
    Institute for Theoretical Physics III 
    and Center for Integrated Quantum Science and Technology,\\
    University of Stuttgart, 70550 Stuttgart, Germany
}

\date{\today}


\begin{spacing}{1.0}
\begin{abstract}
    Mathematically rigorous statements on the spectral gap of quantum many-body
    systems in the thermodynamic limit are notoriously difficult to prove --
    yet they are of fundamental importance for classifying quantum phases of
    matter. Here we prove the existence of a finite excitation gap for a
    particular Hamiltonian which was proposed in
    \href{https://doi.org/10.1103/dtlf-2q82}{T.~F.~Maier~\etal, PRX Quantum \textbf{6}, 030340 (2025)}
    and is motivated by the Rydberg platform. The Hamiltonian exhibits only
    two-body blockade interactions between two-level systems and has a topologically
    ordered ground state in the toric code phase. We show that our result also
    applies to a broader class of blockade Hamiltonians which realize
    non-Abelian quantum double phases and were proposed in
    \href{https://doi.org/10.1103/99sy-kggw}{H.~P.~Büchler~\etal, Phys.~Rev.~B \textbf{114}, 065113 (2026)}.
    The proof builds on known gap stability results and exploits the local
    symmetry of the studied models.
\end{abstract}
\end{spacing}

\maketitle

\onehalfspacing

\section{Introduction}
\label{sec:introduction}

Proving the existence of spectral gaps for quantum many-body systems in the
thermodynamic limit is notoriously difficult~\cite{spectralgap1, spectralgap2,
spectralgap3, yarotsky2006ground, gosset2016local, lemm2019spectral,
lemm2020finite, koma1997spectral, bachmann2015product}. Yet the presence of a
gap is of fundamental importance for the classification of quantum phases of
matter and crucial for their stability. In particular, quantum phases with
topological order have recently sparked a lot of interest~\cite{klich2010stability, 
bravyi2010topological, bravyi2011short, michalakis2013, nachtergaele2022},
since their classification goes beyond Landau's symmetry-based paradigm and
they exhibit fundamentally new properties, such as robust ground state
degeneracies and fractional excitations. In two spatial dimensions, bosonic
topological orders are characterized by the fusion algebra and statistics of
their localized excitations (so-called \emph{anyons})
-- a structure that is captured by unitary modular tensor categories (UMTCs)~\cite{wang2010topological, 
rowell2009classification, kitaev2006anyons}.
Notably, a spectral gap in the thermodynamic limit is crucial for well-defined
localized excitations, and thereby for the applicability of this powerful
machinery. In this paper, we prove the existence of a gap for a particular
class of Hamiltonians that realize Abelian and non-Abelian topological orders
with physically realistic two-body interactions.

Exactly solvable commuting-projector Hamiltonians with many-body interactions
have proven useful for studying the renormalization fixpoint wave functions of
(non-chiral) bosonic topological phases. The paradigmatic example is the toric
code Hamiltonian proposed by Kitaev~\cite{kitaev2003fault}, which
realizes the topological order of the quantum double
$\mathcal{D}(\mathbb{Z}_2)$ in its ground state. This particular
commuting-projector Hamiltonian is exactly solvable, gapped, and relies on
nearest-neighbor multi-body interactions -- which makes it hard to realize.
By contrast, interacting many-body Hamiltonians that require only
\emph{two-body} interactions are physically more realistic -- but typically not
exactly solvable; in particular, their ground state is usually not a
renormalization fixpoint of the topological phase~\cite{bravyi2005commutative,
kitaev2006anyons, bombin2009quantum}. The study of such models is therefore
often restricted to approximate methods based on perturbation theory and
numerics.
Recently, some of us bridged this gap by engineering a Hamiltonian with only
two-body blockade interactions between two-level systems that, for weak
transverse fields, \emph{provably} realizes the toric code topological order
beyond the fixpoint~\cite{maier2025}. This construction was later generalized
to the complete family of quantum double phases~\cite{buchler2026quantum}. At
the fixpoint, the ground states of this family are equal-weight superpositions
(condensates) of extended objects that are characterized by local constraints;
for the toric code, the extended objects are simply \emph{loops}.
The construction of blockade-based Hamiltonians that realize these condensates
away from the fixpoint is then based on two pillars. First, the local
constraints are enforced energetically by two-body blockade interactions and
local ``self energies''.  
These interactions are constructed such that the resulting Hamiltonian features
\emph{local symmetries} acting on the faces of the underlying lattice. Second,
quantum fluctuations are introduced by a weak, uniform transverse field. The
symmetries then enforce the perturbative generation of off-diagonal
\emph{plaquette operators} which stabilize the equal-weight superpositions of
the excitation patterns that satisfy the local constraints.
A notable advantage of this class of Hamiltonians is that, despite not being
exactly solvable, their local symmetries provide enough leverage to prove
topological order in the many-body ground state~\cite{maier2025}. This result
rests on a theorem by Michalakis and Zwolak~\cite{michalakis2013}, which
establishes the stability of a gap in the thermodynamic limit under certain
conditions and for sufficiently small (but finite) perturbations. For the
proposed blockade Hamiltonians, this stability also implies the existence of a
gap, and an exponentially suppressed ground state splitting, \emph{within the
sectors} of the Hilbert space defined by the local symmetries.  Whether these
blockade Hamiltonians actually stabilize a \emph{gapped} quantum phase
therefore depends on whether or not the lowest-energy states in different
symmetry sectors are separated by a finite energy gap in the thermodynamic
limit. While such a gap is expected on the basis of a heuristic
Schrieffer--Wolff transformation~\cite{maier2025, buchler2026quantum}, it has
not yet been rigorously established.

Here we close this gap (pun intended) by proving the existence of a finite
excitation gap in the thermodynamic limit for the particular blockade structure
constructed in Ref.~\cite{maier2025}. Combined with the results presented
there, this finally proves that the blockade Hamiltonian realizes a gapped
topological phase with $\mathcal{D}(\mathbb{Z}_2)$ toric code order for weak
but finite transverse fields. In particular, it shows that the engineered local
symmetries are \emph{not} crucial for the existence of the topological order --
they are a convenient feature for analytical access. The proof is conceptually
split into two parts.
First, we employ the gap stability theorem \emph{within} each symmetry sector
to show that the lowest-energy states are separated by finite excitation gaps
from the excited states, with an exponentially suppressed ground state
splitting. In the second part, we focus on the symmetric sector (the trivial
representation of the local symmetries) and show that its dimensionality
naturally exceeds that of all other symmetry sectors. Combined with general
features of blockade Hamiltonians, we show that this implies a finite energy
shift separating the lowest-energy state in the symmetric sector from all other
symmetry sectors.  We then show that this shift provides a lower bound on the
excitation gap for small enough transverse fields. The main technical part of
the proof is dedicated to finding a lower bound on the energy shift that is
both positive and independent of the system size.

\begin{figure}[tbp]
    \scalebox{0.8}{\input{tree.tex}}
    \caption{%
    \textbf{Overview.} 
    Conceptual flow to establish the main result (\cref{the:ExcitationGap})
    which rests on
    \cref{pro:SpectrumWithinSS,pro:GlobalGS,pro:GapBetweenSS}.
    \cref{pro:SpectrumWithinSS,pro:GlobalGS} are corollaries of the
    more general \cref{lem:GapStability_cpy,lem:GroundState_cpy}, respectively.
    \cref{pro:GapBetweenSS} builds on
    \cref{lem:AsymmetricSectors_cpy,lem:LowerBound_cpy,lem:Induction_cpy}.
    The exposition in this paper is structured from top to bottom and left to
    right. The arrows indicate the dependence between the lemmas, propositions
    and the theorem. 
    Note that for ${\textsf{i}},{\textsf{j}} \in
    \{\textsf{\ref{lem:GapStability_cpy}}, \textsf{\ref{lem:GroundState_cpy}},
    \textsf{\ref{lem:LowerBound_cpy}}, \textsf{\ref{lem:Induction_cpy}}\}$ with
    ${\textsf{i}} < {\textsf{j}}$, Lemma {\textsf{i}} is used to prove Lemma
    {\textsf{j}} (arrows on the bottom). 
    The technical
    \cref{lem:CommutationsRelations,lem:OperatorNorms,lem:Bounds,lem:Stabilizer,lem:ConditionsForGapStability,lem:SpectralFlow}
    from the \ref{sec:Appendix} are not shown.
    }
    \label{fig:structure}
\end{figure}

The remainder of the paper is organized as follows. In \cref{sec:system} we
define the Hamiltonian under study and review key concepts that were introduced
in Refs.~\cite{maier2025, buchler2026quantum}. There we also fix the notation
that is used throughout the paper. In \cref{sec:structure} we lay out the main
goal of the paper and sketch the overall structure and strategy of the proof.
The subsequent sections follow the ``top-down'' approach illustrated in
\cref{fig:structure}, starting with a high-level formulation of the main
theorem in \cref{sec:MainResult}, with its proof based on three propositions. In
\cref{sec:Propositions} we present the proofs of these propositions, which, in
turn, rest on five lemmas. These lemmas are then proven in the technical
\cref{sec:TechnicalPart}. In \cref{sec:nonabelian} we generalize the proof to
both Abelian and non-Abelian quantum double models. We close in
\cref{sec:conclusion} with a discussion of the scope of the proof, potential
generalizations, and open problems. The \ref{sec:Appendix} provides derivations
of commutation relations and (in)equalities, and lengthy (but straightforward)
checks of assumptions used in the main text.

\section{System and definitions} 
\label{sec:system}

We start by defining the Hamiltonian we study throughout the paper. For a
motivation and explicit construction, we refer the reader to
Ref.~\cite{maier2025}. In this section we also introduce concepts and notation
used throughout the paper. 

\subsection{Blockade graphs, Hilbert space, and Hamiltonian} 
\label{subsec:hamiltonian}

Consider a finite, vertex-weighted, undirected graph $\GG \equiv (\V, \E, W)$
with vertex set $\V$, edge set $\E$, and positive vertex weights
$W:\V\to\mathbb{R}^+$ with $W:\V\ni i\mapsto\Delta_i$. We associate a Hilbert
space and a Hamiltonian to this graph as follows: 

We identify each vertex $i\in\V$ with a two-level system described by the
Hilbert space $\mathbb{C}_i^2$. We refer to these two-level systems as
\emph{qubits} throughout the paper, but emphasize that this term is a
placeholder for generic two-level systems which may be realized by electronic
states, spin states, mode occupancies etc. The states of the qubit $i$ are
denoted by $\ket{n}_i$ with $n\in\{0,1\}$; we refer to $\ket{0}_i$ as the
\emph{ground state} and $\ket{1}_i$ as the \emph{excited state}.  The full
Hilbert space is then given by $\H := \bigotimes_{i\in\V}\mathbb{C}^2_i$ with
natural basis $\ket{\vec{n}} := \ket{n_1, n_2, \ldots}$. 
The Pauli matrices acting on the qubit $i$ are denoted by $\sigma_i^\alpha$ for
$\alpha\in\{x,y,z\}$. In particular,
$\sigma_i^x=\ket{1}\bra{0}_i+\ket{0}\bra{1}_i$ and
$\sigma_i^z=\ket{0}\bra{0}_i-\ket{1}\bra{1}_i$. The number operator is given by
$\hat{n}_i := \tfrac{1}{2}(\Id-\sigma_i^z)=\ket{1}\bra{1}_i$ so that
$\hat{n}_i\ket{\vec{n}}=n_i\ket{\vec{n}}$.  
We then define the \emph{(classical) blockade Hamiltonian} acting on $\H$ as
\begin{align}
    H_0
    :=
    \overbrace{U\!\!\sum_{\{i,j\} \in \E}\!\!\hat{n}_i\hat{n}_j}^{=:\,H_U}
    -\overbrace{\sum_{i\in\V}\Delta_i\,\hat{n}_i}^{=:\,H_\Delta}\,,
    \label{eq:H0}
\end{align}
where the interaction strength $U\in\mathbb{R}^+$ is a fixed parameter chosen
such that $U\gg\max_{i \in \V}\Delta_i$. Physically, the edges $\{i,j\} \in \E$
are identified with pairwise ``blockade interactions'' between the qubits, and
the vertex weights $\Delta_i$ correspond to ``detunings'' of the qubits. 
In this context, $\GG$ is referred to as a \emph{blockade graph}. The
\emph{(quantum) blockade Hamiltonian} is then defined as the sum of the
classical blockade Hamiltonian together with a uniform \emph{transverse field}
$\Omega\ge 0$ which induces quantum fluctuations:
\begin{align}
    H(\Omega)
    :=
    H_0+\overbrace{\Omega\sum_{i\in\V}\sigma_i^x}^{=:\,H_\Omega}\,.
    \label{eq:HOmega}
\end{align}
In the following, we often omit the dependence on $\Omega$ and $\GG$, simply
write $H$, and refer to this Hamiltonian as a \emph{blockade Hamiltonian}.
Hamiltonians of the form \eqref{eq:HOmega} can be realized by artificial
structures of neutral atoms that are trapped by optical tweezers and excited
into Rydberg states by lasers~\cite{browaeys2020manybody, kaufman2021tweezer};
in this context, the blockade interaction is called \emph{Rydberg blockade},
which is caused by strong van der Waals interactions between the highly
polarizable Rydberg states~\cite{jaksch2000fast, saffman2010rydberg}.

Because $\Delta_i>0$ and $U\gg\max_{i \in \V}\Delta_i$, it is easy to see
that the ground state manifold of the classical blockade Hamiltonian $H_0$ is
spanned by states $\ket{\vec{n}}$ with patterns $\vec{n}$ of excited qubits
that are in one-to-one correspondence with the \emph{maximum-weight independent
sets} (MWISs) of $\GG$, i.e., the subsets of pairwise disconnected vertices
that maximize the total weight. Note that the ground state energy of $H_0$ is
always negative.
The complexity and versatility of blockade Hamiltonians~\eqref{eq:HOmega}
derives from the freedom to design the blockade graph $\GG$. In this paper, we
are interested in a particular family of translationally invariant blockade
graphs that was constructed in Ref.~\cite{maier2025} to realize the toric code
topological order as the ground state of $H(\Omega)$ for finite but small
$\Omega$.  We turn to the description of these blockade graphs next.

\subsection{Blockade graph and honeycomb grid}
\label{subsec:SpatialStructure}

\begin{figure}[tbp]
    \includegraphics[width=0.8\linewidth]{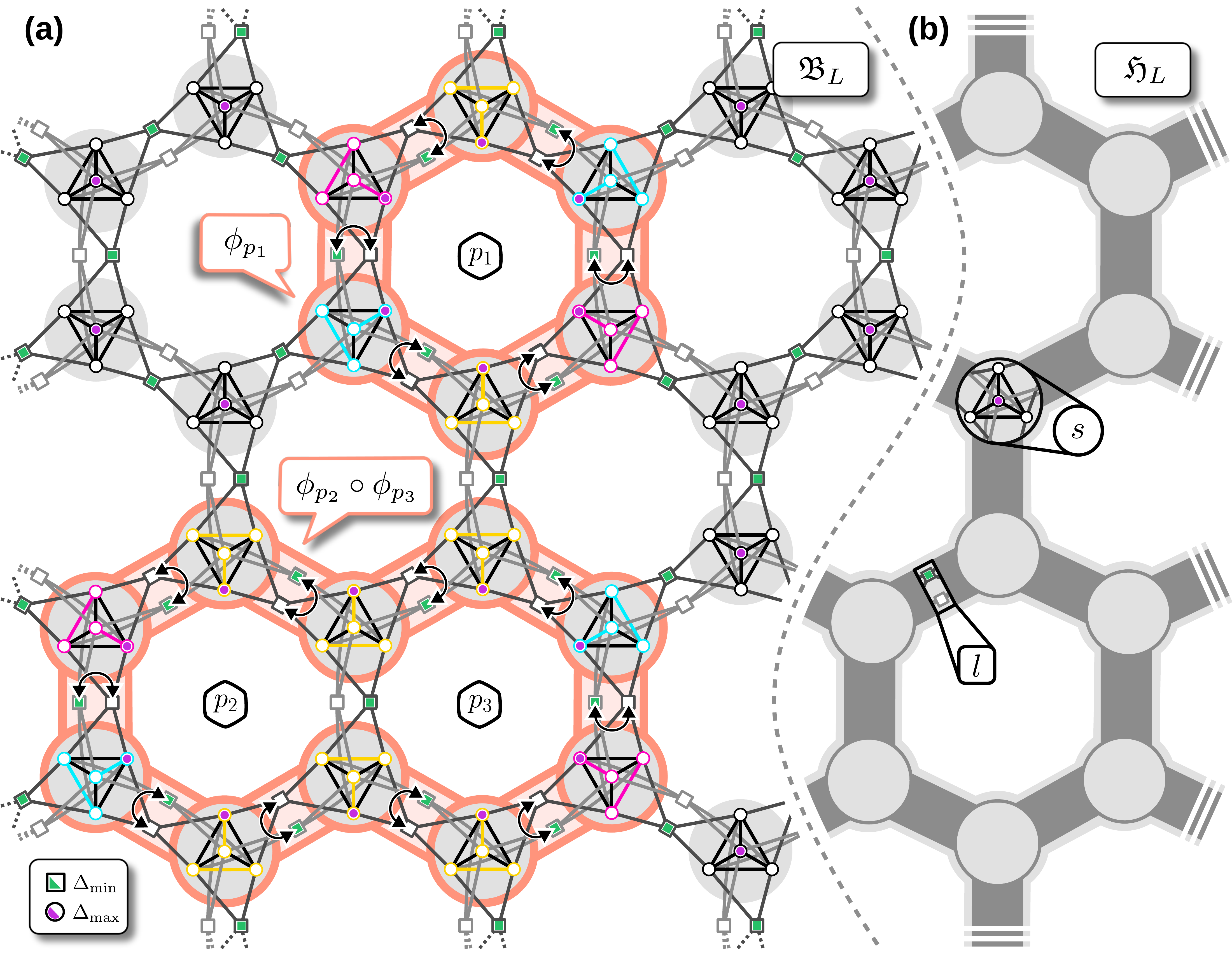}
    \caption{%
        \textbf{Blockade graphs and honeycomb grid.}
        \textbf{(a)}~We are interested in the vertex-weighted blockade graphs
        $\GB_L = (\V, \E, W)$ where we identify the vertices $i \in \V$ with
        qubits (squares and circles), the edges $\{i, j\} \in \E$ with the
        blockades (gray and black lines), and the vertex weights
        $\Delta_i=W(i)$ with the detunings (squares $\Deltamin$ and circles
        $\Deltamax \equiv 2\Deltamin$). The green/purple coloring of the
        vertices illustrates an exemplary MWIS, which can be identified with a
        ground state of $H_0$. 
        Along the boundary (orange) of the plaquettes $p_1$, $p_2$ and $p_3$,
        the light-gray qubits on the bounding links are excited; this
        corresponds to a ``loop excitation.'' It can be obtained from the
        loop-free MWIS by applying the graph automorphisms $\phi_{p_1}$ and
        $\phi_{p_2} \circ \phi_{p_3}$.
        \textbf{(b)}~After coarse-graining, we identify the sites $s \in \S$
        with four qubits from $\V$ each (gray circles), and the links $l \in
        \L$ with two qubits (thick gray lines). The constructed graph $\GH_L =
        (\S, \L)$ is a \emph{honeycomb tiling graph}.
    }
    \label{fig:CoarseGraining}
\end{figure}

In this paper, we focus on the family of translationally invariant blockade
graphs $\GB_L \equiv (\V, \E, W)$ proposed in Ref.~\cite{maier2025} and
depicted in \cref{fig:CoarseGraining}~(a) for periodic boundary conditions
(PBCs). Here, the parameter $\mathbb{N}\ni L \sim \sqrt{|\V|}$ denotes the
linear size of the system (see below) and we suppress the $L$-dependence of
$\V$, $\E$, and $W$. The associated Hilbert space is denoted by $\H_L$ and
the derived Hamiltonian \eqref{eq:HOmega} is denoted by $H_L$ (we often omit
the subscript $L$ to streamline the notation). A quantity $X$ that is
independent of $L$ is denoted by $X=\cL$. 

We also define a coarse-graining of $\GB_L$ by identifying subsets of
$\V$ with the vertices and edges of the \emph{honeycomb tiling graph}
$\GH_L \equiv (\S, \L)$, see \cref{fig:CoarseGraining}~(b). The
vertices $s \in \S$ are called \emph{sites} and the edges $l \in \L$ are called
\emph{links} (of the honeycomb grid). Then each site $s =
\{i_1,i_2,i_3,i_4\}\subset\V$ corresponds to four qubits from $\V$ and each
link $l=\{i_1,i_2\}\subset\V$ to two qubits. The \emph{faces} (or
\emph{plaquettes}) of the honeycomb graph are denoted by $p \in \P$ and
correspond to sets of $36=6\times(2+4)$ qubits from $\V$ that bound $p$. With
this interpretation, we can write $s\subset p$ ($l\subset p$) if site $s$ (link
$l$) bounds plaquette $p$. 
Note that the vertices (= qubits) in $\GB_L$ have only two distinct weights
$\Delta_i$ (= detunings) which we denote by $\Deltamin$ on the links and
$\Deltamax \equiv 2\Deltamin$ on the sites [squares and circles in
\cref{fig:CoarseGraining}~(a), respectively].
Finally, since a honeycomb grid contains two sites for every unit cell, we can
define the linear system size as $L := \sqrt{|\S|/2}$, i.e., $\S \simeq
\mathbb{Z}_L^2 \times \{A, B\}$ with two sublattices $A$ and $B$.

\subsection{Local symmetries and symmetry sectors}
\label{subsec:LocalSymmetries}

It was shown in Ref.~\cite{maier2025} that the automorphism group
$\group{Aut}(\GB_L)$ has an extensive (in $L$) number of local generators
$\phi_p$ associated to plaquettes $p\in\P$ of the underlying honeycomb grid.
Recall that the automorphisms $\phi$ of a vertex-weighted graph are vertex
permutations $\phi:\V\xrightarrow{\sim}\V$ that leave the edges and detunings
invariant: $\{i,j\}\in\E\,\Leftrightarrow\,\{\phi(i),\phi(j)\}\in\E$ and
$\Delta_i=\Delta_{\phi(i)}$. One can convince oneself [see
\cref{fig:CoarseGraining}~(a)] that for each plaquette $p\in\P$ there is a
permutation $\phi_p$ of vertices bounding the plaquette with this property.
The permutation $\phi_p$ is composed of vertex transpositions and acts locally in the sense
that $\phi_p(s) = s$ and $\phi_p(l) = l$ for sites $s \subset p$ and links $l
\subset p$ bounding the plaquette $p \in \P$; for vertices $i \in \V\backslash p$
outside the plaquette $\phi_p(i) = i$, i.e., $\phi_p$ acts as the
identity. These plaquette automorphisms can be concatenated to yield \emph{loop
automorphisms} (since adjacent plaquette automorphisms transpose vertices on
shared links twice); an example is shown in \cref{fig:CoarseGraining}~(a).

The graph automorphisms naturally define an action on the associated Hilbert
space via $U_p\ket{\vec
n}:=\ket{n_{\phi_p^{-1}(1)},n_{\phi_p^{-1}(2)},\ldots}\equiv\ket{\phi_p\cdot\vec
n}$; the inverse is needed for $\phi\mapsto U_\phi$ to be a left action
(it is immaterial here since $\phi_p^{-1}=\phi_p$).
This representation is unitary $U_pU_p^\dag=\Id$ and Hermitian $U_p^\dag=U_p$;
furthermore, it is easy to show that
\begin{align}
    \comm{H}{U_p}=0
    \qquad\text{and}\qquad
    \comm{U_p}{U_q}=0
    \qquad\text{and}\qquad
    U_p^2=\Id
    \label{eq:symmetries}
\end{align}
for all plaquettes $p,q\in\P$; it follows that all $U_p$ have eigenvalues $\pm
1$. Since all $U_p$ commute with the blockade Hamiltonian (even for
$\Omega>0$), they are referred to as \emph{plaquette symmetries}.
Note that for periodic boundaries one has the additional constraint
\begin{align}
    \prod_{p\in\P}U_p=\Id\,.
    \label{eq:prodUp}
\end{align}
Indeed, the two qubits on every link are exchanged twice, thus leaving them
invariant, and the permutations on the sites are fully determined by the
permutations on the adjacent links [see \cref{fig:CoarseGraining}~(a)]. Because
of \cref{eq:symmetries}, it makes sense to define the \emph{symmetry sectors}
\begin{align}
    \H^\vsigma
    := \spn{%
        \ket{\psi} \in \H_L
        \; \vert \; \forall_{p \in \P}: \; U_p\ket{\psi} = \sigma_p\ket{\psi} 
    }
    \quad\text{with}\quad
    \H_L=\bigoplus_\vsigma\H^\vsigma
    \label{eq:SymmetrySectors}
\end{align}
and corresponding projectors $P^\vsigma$. 
Here, $\vsigma \in \{\pm 1\}^{|\P|}$ fixes an eigenvalue for each plaquette
operator $U_p$ with $p \in \P$. Due to \cref{eq:prodUp}, we can restrict
ourselves in the following to choices $\vsigma$ that fulfill $\prod_{p \in \P}
\sigma_p=1$ (otherwise $\H^\vsigma = \{0\}$ is the trivial Hilbert space). The
\emph{symmetric sector} $\H^\s$ with $\sigma_p = +1$ for all $p \in \P$ is
labeled by $\vsigma = \s$ with projector $P^\s$. Symmetry sectors $\H^\vsigma$
that are \emph{not} the symmetric sector are called \emph{asymmetric sectors}.

\section{Goals and preliminaries}
\label{sec:structure}

The goal of this paper is to prove that the spectrum of the sequence of
Hamiltonians $H_L$, defined via \cref{eq:H0,eq:HOmega} and the family of
blockade graphs $\GB_L$ in \cref{fig:CoarseGraining}~(a), has a finite gap
above the (potentially degenerate) ground state space in the thermodynamic
limit $L\to\infty$. To define these concepts rigorously, we introduce the
notation $E_L = \inf\,\mathrm{spec}(H_L)$ to denote the ground state energy
(GSE) of $H_L$; instead of $L\to\infty$ we simply write $\lim_L$. We can then
define what it means for a sequence of Hamiltonians to be gapped:

\begin{definition}\label{def:ExcitationGap}
    Let $\gamma > 0$ and $L^* \in \mathbb{N}$. A sequence of Hamiltonians
    $(H_L)_{L \ge L^*}$ is called \emph{gapped} with \emph{asymptotic gap}
    $\gamma$, if there exists a nonnegative sequence $(\delta_L)_{L \ge L^*}$
    with $\lim_L\delta_L = 0$, and there exists a strictly larger sequence
    $(\gamma_L)_{L \ge L^*}$ with $\gamma_L > \delta_L$ and $\lim_L\gamma_L =
    \gamma$, such that
    \begin{align}
        \forall_{L \ge L^*}:\; \mathrm{spec}(H_L) \subset [E_L, E_L + \delta_L] \cup [E_L + \gamma_L, \infty)\,.
    \end{align}
    If the sequence $(H_L)_{L \ge L^*}$ is gapped, for $L \ge L^*$ we say $H_L$
    is gapped by at least $\gamma_L-\delta_L$ with ground state splitting at
    most $\delta_L$.
\end{definition}

This definition decomposes the spectrum of $H_L$ for $L \ge L^*$ into two
nonempty disjoint sets. The eigenstates whose eigenenergies lie within $[E_L,
E_L + \delta_L]$ span the \emph{ground state space} (GSS) $\G_L$. Its dimension
is called the \emph{ground state degeneracy} (GSD). The eigenstates with
eigenenergy exactly $E_L$ are called the \emph{ground states} (GSs)
$\ket{\psi_L}$ (this is often a single state in $\G_L$). The remaining spectrum
within $[E_L + \gamma_L, \infty)$ is called the \emph{high-energy spectrum}.
With this nomenclature, the goal of the paper is to prove that, for the family
of Hamiltonians $H_L$ defined by $\GB_L$, there exists a parameter $L^* \in
\mathbb{N}$ such that $(H_L)_{L \ge L^*}$ is gapped. In the following, we
suppress the explicit parametric dependence on $L$ for these quantities.

\subsection{Proof strategy}
\label{subsec:SymmetrySectors}

The full proof of this statement is quite technical, see \cref{fig:structure}.
The strategy of the proof is to split the Hilbert space $\H_L$ into the
symmetry sectors that were introduced in \cref{eq:SymmetrySectors}.
The full Hamiltonian $H$ commutes with $U_p$ [\cref{eq:symmetries}] and
therefore with the projectors $P^\vsigma$ (\cref{lem:CommutationsRelations}).
Hence the Hamiltonian is block-diagonal in the symmetry sectors. It is
therefore useful to define the projection of the Hamiltonian onto a symmetry
sector by $H^\vsigma := P^\vsigma H P^\vsigma = HP^\vsigma$, with GSs
$\ket{\psi^{\vsigma}}$ of GSE $E^{\vsigma}$ and GSS $\mathcal{G}^{\vsigma}$ (if
$H^\vsigma$ is gapped). Note that $E^\vsigma$ is always negative
(\cref{lem:Stabilizer}). In particular for the symmetric sector, we denote the
Hamiltonian by $H^\s$, with GSs $\ket{\psi^\s}$ of GSE $E^\s$ and GSS
$\mathcal{G}^\s$ (if $H^\s$ is gapped). 

\noindent The proof of the main result (\cref{the:ExcitationGap}) in
\cref{sec:MainResult} is then split into three steps:
\begin{itemize}

\item The first step is to show that, for sufficiently large linear system size
    $L$, $H^\s$ is gapped within the symmetric sector
    (\cref{pro:SpectrumWithinSS}).

\item Secondly, we show that the ground state of $H_L$ is unique and in the
    symmetric sector $\H^\s$ (\cref{pro:GlobalGS}).

\item Finally, we show that the lowest eigenenergies in the \emph{asymmetric}
    sectors $\H^{\vsigma\neq\s}$ are separated from $E^\s$ by a constant
    independent of the system size, i.e., that $E^\vsigma - E^\s \ge \cL > 0$
    for all $\vsigma \neq \s$ (\cref{pro:GapBetweenSS}). 

\end{itemize}
\noindent \cref{pro:SpectrumWithinSS,pro:GlobalGS,pro:GapBetweenSS} are
explained and proven in
\cref{subsec:SpectrumWithinSS,subsec:GlobalGS,subsec:GapBetweenSS},
respectively. These proofs are based in turn on
\cref{lem:GapStability_cpy,lem:GroundState_cpy,lem:AsymmetricSectors_cpy,lem:LowerBound_cpy,lem:Induction_cpy}
which are proven in the technical
\cref{subsec:GapStability,subsec:GroundState,subsec:AsymmetricSectors,subsec:LowerBound,subsec:Induction},
respectively. These dependencies are illustrated in \cref{fig:structure}.

\subsection{Constrained Hilbert space and Hamiltonian}
\label{subsec:RestrictedHilbertSpace}

The proof of the gap \emph{between} symmetry sectors (\cref{pro:GapBetweenSS})
is the most complicated part of the paper and based on the insight that every
state with a completely de-excited (``empty'') plaquette $\hat p$ -- i.e., on
which all qubits are in state $\ket{0}_i$ -- is necessarily symmetric under
$U_\p$ so that $\sigma_\p = +1$.  We then perform an induction in the number of
excitations on plaquette $\p$ to lower-bound the eigenenergies for partially
occupied $\p$. This motivates the definition of a family of auxiliary
Hamiltonians $H_k^q$ for which the off-diagonal elements of $\Omega \sum_{i \in
\p} \sigma_i^x$ on a fixed plaquette $\p \in \P$ are constrained to a specific
number of excitations:

\begin{definition}[Constrained Hilbert space and Hamiltonian]  
    \label{def:restH}

    Consider a fixed plaquette $\p\in\P$ of the honeycomb graph. Let
    $\mathbb{Z}_2 = \{0,1\}$, let $k \le q$ be two nonnegative integers, and
    let $[\mathbb{Z}_2^{|\V|}]_k^q := \{\vec{n} \in \mathbb{Z}_2^{|\V|}\,\vert\,k
    \le \textstyle\sum_{i \in \p} n_i \le q\}$ be the set of all excitation
    patterns $\vec n$ with at least $k$ and at most $q$ excitations on $\hat
    p$. We can then define the corresponding constrained Hilbert space 
    \begin{align}
        \H_k^q 
        := \spn{\ket{\vec{n}} \in \H\,\Big|\,\vec{n} \in [\mathbb{Z}_2^{|\V|}]_k^q}
    \end{align}
    with associated projector $P^q_k$. The constrained Hamiltonian is defined as
    \begin{align}
        H_k^q 
        :=
        \overbrace{%
        H_0 
        + \Omega 
        \sum_{i \notin \p} \sigma_i^x 
        }^{=:\,H_\p}
        + \Omega\,P_k^q \overbrace{\sum_{i \in \p} \sigma_i^x}^{=: \sigma_\p^x} P_k^q 
        \,,
        \label{eq:GeneralizedHamiltonian}
    \end{align}
    where the transitions on the fixed plaquette $\p$ are constrained to $\H_k^q$.

\end{definition}

For the constrained Hamiltonian $H_k^q$, we denote its GSs by $\ket{\psi_k^q}$
with GSE $E_k^q$, and the GSS by $\mathcal{G}_k^q$ (if $H_k^q$ is gapped).
Crucially, $H_k^q$ and $P^\vsigma$ commute (\cref{lem:CommutationsRelations}).
This makes the constrained Hamiltonian \eqref{eq:GeneralizedHamiltonian}
compatible with the symmetry sectors \eqref{eq:SymmetrySectors}. Hence it will
be useful to introduce the Hamiltonian $[H_k^q]^\vsigma := P^\vsigma H_k^q
P^\vsigma = H_k^q P^\vsigma$; we denote its GSs by $\ket{[\psi_k^q]^\vsigma}$
with GSE $[E_k^q]^\vsigma$, and the GSS by $[\G_k^q]^\vsigma$ (if
$[H_k^q]^\vsigma$ is gapped).

The Hamiltonian $H_\p$ is the part of $H^q_k$ that is diagonal on $\p$; note
that $H_k^k = H_\p$ (independent of $k$). For $q = |\p|$ we can write
symbolically $q=\infty$ (indicating the maximum number of excitations per
plaquette). For $(k, q) = (0, \infty)$, we have $H = H_0^\infty$ and we recover
the full Hamiltonian~\eqref{eq:HOmega}. 
For $\Omega = 0$ we recover the classical Hamiltonian $H_0 = H_k^q(0)$
(independent of $k$ and $q$). We write $\ket{\psi_0}$ for its GSs and $E_0$ for
the GSE. Since this classical Hamiltonian only enforces local loop constraints~\cite{maier2025}, 
it features an extensively degenerate GSS $\G_0$ with
vanishing state splitting (namely the space of all closed-loop patterns).

For the plaquette $\p$, we define $\hat{n}_\p := \sum_{i \in \p} \hat{n}_i$ and
label its eigenvalue $n_\p := \sum_{i \in \p} n_i$, which counts the number of
excitations on the plaquette. This number is constant across all
\emph{classical} ground states by construction; we denote this number by $n_0
:= \langle \hat n_\p \rangle_{\G_0}$.  For the particular family of blockade graphs
$\GB_L$ in \cref{fig:CoarseGraining}~(a) then $n_0 = 12$ -- a single
excitation per site and link bounding $\p$.

\section{Main result}
\label{sec:MainResult}

\noindent The main result of this paper is the following theorem:
\begin{theorem}[Excitation Gap] 
    \label{the:ExcitationGap}
    There exist $\OSTA > 0$ and $\LOTA \in \mathbb{N}$ such that for $\Omega
    \in (0, \OSTA)$ and $L \ge \LOTA$ the Hamiltonian $H$ is gapped.  Its
    ground state space $\G$ lies within the symmetric sector $\H^\s$ and has a
    $4$-fold (topological) degeneracy.
\end{theorem}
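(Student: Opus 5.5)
The proof of \cref{the:ExcitationGap} assembles three ingredients along the strategy of \cref{subsec:SymmetrySectors}. Since $H$ commutes with every projector $P^\vsigma$, it is block diagonal, and
\begin{align}
    \mathrm{spec}(H)=\bigcup_\vsigma \mathrm{spec}(H^\vsigma).
\end{align}
It therefore suffices to control the symmetric block $H^\s$ together with the lowest energies $E^\vsigma$ of all asymmetric blocks.

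\emph{Step 1: spectrum within the symmetric sector.} I would show that $H^\s$ is gapped with a $4$-fold ground state space, with splitting $\delta_L$ exponentially small in $L$ and gap $\gamma^\s_L\to\gamma^\s=\cL>0$. To do this, I restrict $H_0$ to the symmetric sector of the classical ground state space $\G_0$. A Schrieffer--Wolff type argument at order $\Omega^{m}$, with $m$ the length of the shortest plaquette flip, shows that the effective Hamiltonian on $P^\s\G_0$ is a sum of commuting plaquette terms $-c\,\Omega^{m}U_p$, up to a constant. On $P^\s\G_0$ these terms are all frustration free, so they project onto the loop condensates. The four-fold degeneracy then comes from the four homology classes of closed loops on the torus: $P^\s$ acts within each winding class as the equal-weight projector. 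Gap stability in the Michalakis--Zwolak form (local topological quantum order plus local gap) is then invoked to give the stated bounds for all $\Omega<\Omega_1$ and $L\ge L_1$. This is the content of \cref{pro:SpectrumWithinSS}.

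\emph{Step 2: location of the global ground state.} I would show that $E_L=E^\s$, i.e.\ that the global ground state lies in $\H^\s$. One route is a Perron--Frobenius argument. After the sign change $\sigma^x\to-\sigma^x$, which is unitarily implemented by $\prod_i\sigma^z_i$ and commutes with every $U_p$, the operator $H$ becomes stoquastic. Its ground state can then be chosen with nonnegative amplitudes, and averaging it over the group generated by the $U_p$ gives a nonzero symmetric vector of no higher energy. Hence $E^\s\le E^\vsigma$ for all $\vsigma$; this is \cref{pro:GlobalGS}.

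\emph{Step 3: gap between sectors (the main obstacle).} The key estimate is
\begin{align}
    E^\vsigma-E^\s\ge\Delta_{\mathrm{sec}}=\cL>0 \qquad\text{uniformly in } L \text{ and in } \vsigma\ne\s.
\end{align}
Let $\vsigma\neq\s$, and pick $\p$ with $\sigma_\p=-1$. Any state with $n_\p=0$ satisfies $U_\p$-eigenvalue $+1$, so every state in $\H^\vsigma$ has zero weight on the empty-plaquette configurations. I would then proceed with the constrained Hamiltonians $H^q_k$. Plaquette $\p$ must carry at least $n_0=12$ excitations in the low-energy states, and the fluctuations through lower occupations are exactly what lowers the symmetric energy at order $\Omega^{m}$. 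The asymmetric sector instead suffers a sign cancellation in the plaquette term, which gives an energy shift of size $\sim\Omega^{m}$.

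To make this rigorous, I would compare $E^\vsigma$ with $[E^{\infty}_{1}]^\vsigma$ and lower-bound it, by induction in $k$, by $E^\s$ plus a positive constant. The induction would use the block structure of $H^q_k$ in the $n_\p$ grading, with Schur-complement bounds at each level, and would rely on the operator-norm bounds of the appendix to ensure that no constant depends on $L$. I expect this induction to be the hardest step, because the cancellation must be extracted locally at $\p$ while the rest of the system remains fully quantum. The target is a bound of the form $\Delta_{\mathrm{sec}}\ge c\,\Omega^{m}-C\,\Omega^{m+1}$.

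\emph{Conclusion.} Choose $\OSTA=\min(\Omega_1,\Omega_2,\ldots)$ so that the bound of Step 3 is positive, and take $\LOTA=\max(L_1,\ldots)$. Setting
\begin{align}
    \gamma_L=\min\bigl(\gamma^\s_L,\Delta_{\mathrm{sec}}\bigr)
\end{align}
and $\delta_L$ as in Step 1, Definition~\ref{def:ExcitationGap} is satisfied, and the ground state space $\G=\G^\s$ is $4$-fold degenerate.
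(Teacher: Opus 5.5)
Your outer assembly matches the paper's proof: block-diagonalize $H$ over the sectors, use the intra-sector gap of $H^\s$ (Proposition~1), show $E=E^\s$ (Proposition~2), bound $E^\vsigma-E^\s$ uniformly (Proposition~3), and set $\gamma_L=\min(\gamma_L^\s,\Delta_{\mathrm{sec}})$. You should also choose $L_0$ large enough that $\delta_L<\gamma_L$. Your averaging argument in Step~2 is a valid variant of the paper's Perron--Frobenius step, and it does not even need irreducibility.

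The genuine gap is Step~3, which is the heart of the theorem. You correctly note that $\H^\vsigma\perp\H_0^0$ when $\sigma_\p=-1$, but you do not exploit it. First, comparing $E^\vsigma$ with $[E_1^\infty]^\vsigma$ does no work: since $\H^\vsigma\le\H_1^\infty$, one has $[E_1^\infty]^\vsigma=E^\vsigma$ identically. The useful comparison is with the \emph{unrestricted} ground energy $E_1^\infty$ of $H_1^\infty$. This quantity is independent of $\vsigma$, and by Perron--Frobenius it is attained by a symmetric state with amplitudes $(-1)^{|\vec n|}c_{\vec n}$, $c_{\vec n}>0$. That reduces everything to a purely symmetric question: does opening the $n_\p=0$ channel lower the energy by a size-independent amount, i.e.\ is $E_1^\infty-E_0^\infty$ bounded below by a positive constant? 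Second, your proposed mechanism is precisely the heuristic Schrieffer--Wolff picture the paper says has not been made rigorous: a sign cancellation of an order-$\Omega^m$ plaquette term, extracted by Schur complements, aiming at $c\,\Omega^m-C\,\Omega^{m+1}$. Making it work would require resolvent bounds on the many-body blocks of fixed $n_\p$ and $L$-uniform control of high-order remainders. Nothing in your sketch supplies either.

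The missing idea is that only a \emph{variational upper bound} on $E_0^\infty$ is needed. Take the trial state $\alpha\ket{\psi_1^\infty}+\beta\,\sigma_\p^-P_1^1\ket{\psi_1^\infty}/c_\perp$. The Perron--Frobenius sign structure makes the relevant matrix elements sign-definite, so the blockade ($U$) contributions can simply be dropped. This gives $E_1^\infty-E_0^\infty\ge\sqrt{M^2+D^2/4}-D/2$ with $M\ge\Omega\,\varepsilon_1$ and $D\le\Deltamax+\Omega(|\p|+1)^2/(4\varepsilon_1)$, where $\varepsilon_1=\lVert P_1^1\ket{\psi_1^\infty}\rVert$.

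The only many-body input is then an $L$-independent bound $\varepsilon_1\ge\varepsilon_\lb>0$, obtained by downward induction in $k$:
\begin{itemize}
\item The start, $\varepsilon_{n_0}^2\ge 1-2|\p|\Omega/\Deltamin$, comes from the gap of $[H_{n_0}^{n_0}]^\s$.
\item Each step uses the sandwich $\DE_k(\varepsilon_k)\le E_k^\infty-E_{k-1}^\infty\le(|\p|+1)\Omega\,\varepsilon_{k-1}$.
\end{itemize}
The resulting $\DElb$ is very small, not of order $\Omega^m$, but it is positive and size-independent, which is all the theorem needs.

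A smaller issue affects Step~1. Michalakis--Zwolak cannot be applied to $H_0$, because its $2^{|\P|+1}$-fold ground space is locally distinguishable. Nor can it be applied to a Schrieffer--Wolff effective Hamiltonian, which is not the actual Hamiltonian. The paper instead adds the local commuting penalty $\omega\sum_p(\Id-\sigma_pU_p)/2$, which vanishes on $\H^\vsigma$. This yields an intra-sector gap of at least $\Deltamin/2$, not of order $\Omega^m$, and a spectral-flow argument keeps the ground space inside the sector.
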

\noindent This statement can be conceptually split into three propositions that are proven
separately in the remainder of the paper.
The first proposition, which is proven in \cref{subsec:SpectrumWithinSS},
concerns the spectrum \emph{within} each symmetry sector $\H^\vsigma$ and
asserts that the symmetry-projected Hamiltonian $H^\vsigma$ is gapped:
\begin{proposition}[Spectrum within symmetry sectors]
    \label{pro:SpectrumWithinSS}
    There exist $\OOPA > 0$ and $\LOPA \in \mathbb{N}$ such that for $\Omega
    \in [0, \OOPA]$ and $L \ge \LOPA$ the following holds: Within each symmetry
    sector $\H^\vsigma$, the Hamiltonian $H^\vsigma$ is gapped by at least
    $\Deltamin/2$ and the ground state space $\G^\vsigma$ is $4$-fold
    (topologically) degenerate. 
\end{proposition}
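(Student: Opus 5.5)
Proposition~\ref{pro:SpectrumWithinSS} will be obtained as a corollary of a general gap-stability statement (the later \cref{lem:GapStability_cpy}), applied sector by sector. The plan is to treat $H^\vsigma = H_0 P^\vsigma + \Omega\sum_i \sigma^x_i P^\vsigma$ as a perturbation of the classical blockade Hamiltonian restricted to $\H^\vsigma$. We then verify the hypotheses of the Michalakis--Zwolak stability theorem~\cite{michalakis2013} in the form used in Ref.~\cite{maier2025}.

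The first step is to make the unperturbed problem frustration-free. Integrating out the high-energy blockade-violating configurations (of order $U$), the low-energy sector is described at the fixpoint by the closed-loop patterns $\G_0$. A Schrieffer--Wolff argument along the lines of Ref.~\cite{maier2025} generates, in order $\Omega^{m}$ with $m=\cL$ equal to the number of flips needed to move a loop across a plaquette, the off-diagonal plaquette terms $\propto U_p$. Within $\H^\vsigma$ these act as $\sigma_p$, which fixes the equal-weight superposition with given signs.

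The cleaner route is the one already established in Ref.~\cite{maier2025}. There, one writes $H = H_0 + \Omega V$ with $V$ a sum of uniformly bounded local terms. One uses that $H_0$ is a sum of commuting local projectors (site/link loop constraints) with local gap at least $\Deltamin$. One then checks the local topological quantum order (LTQO) and local-gap conditions after the unitary spectral-flow dressing (the appendix lemmas \cref{lem:ConditionsForGapStability,lem:SpectralFlow}). The key point is that the local constraints of $H_0$, combined with the fixed sector labels $\sigma_p$, define a commuting-projector parent Hamiltonian on $\H^\vsigma$. This parent Hamiltonian is unitarily equivalent, on each sector, to a toric code with stabilizer eigenvalues pinned. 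Its ground space in each sector therefore consists exactly of the four states labelled by the two noncontractible $\mathbb{Z}_2$ loop parities, and these are locally indistinguishable. I will phrase this as \cref{lem:Stabilizer}: $P^\vsigma$ together with the loop constraints stabilizes a 4-dimensional code with TQO. The 4-fold degeneracy at $\Omega$ small then follows from the stability theorem, since it preserves the GSD and yields splitting $\delta_L \le e^{-cL}$ and gap $\gamma_L \ge \Deltamin - O(\Omega) \ge \Deltamin/2$ once $\Omega \le \OOPA$ and $L \ge \LOPA$.

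The main obstacle is uniformity in $\vsigma$. The theorem must be applied with constants ($\OOPA$, $\LOPA$, the local-gap bound, and the decay of the LTQO error) that do not depend on which of the exponentially many sectors we consider. The sector projectors $P^\vsigma$ are nonlocal as a product but factorize into commuting local terms $(1+\sigma_pU_p)/2$. I would therefore absorb them as additional commuting local projectors into the unperturbed Hamiltonian. The unperturbed Hamiltonian is then $H_0 + \lambda\sum_p (1-\sigma_pU_p)/2$, which commutes with $H$ and does not alter the spectrum on $\H^\vsigma$. Since $\sigma_p$ only flips signs of terms and the local geometry is translation-invariant, all local bounds are the same for every $\vsigma$, which yields uniform constants.
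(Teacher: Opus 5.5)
Your final route is the paper's: the proposition is the case $(k,q)=(0,\infty)$ of \cref{lem:GapStability}, proved by applying the Michalakis--Zwolak theorem to $\tilde H_0^{[\vsigma]}(\omega)=H_0+\omega\sum_p(\Id-\sigma_pU_p)/2$ with perturbation $\Omega\sum_i\sigma_i^x$. LTQO in sector $\vsigma$ is reduced to the symmetric sector by a local unitary.

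As written, however, the proposal skips a step that is genuinely needed. The stability theorem acts on the full tensor-product space. It gives you a gapped $4$-dimensional low-energy band of $\tilde H(\Omega)=\tilde H_0^{[\vsigma]}(\omega)+\Omega\sum_i\sigma_i^x$. That $\tilde H$ and $H$ coincide on $\H^\vsigma$ says nothing about whether this band lies in $\H^\vsigma$. Other sectors are penalized only by $2\omega=O(1)$, whereas the perturbation has extensive norm. So no norm bound excludes low-lying states of other sectors descending into the band. If the band were spread over several sectors, $H^\vsigma$ would have fewer than four low states (possibly none, leaving its gap uncontrolled). Both the $4$-fold degeneracy and the gap claim would then fail.

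The paper closes this in step (3) of the proof of \cref{lem:GapStability}. At $\Omega=0$ the ground space is $\G_0\cap\H^\vsigma$. Since $\tilde H(\Omega)$ commutes with $P^\vsigma$ and is uniformly gapped on the whole interval, the spectral-flow unitary commutes with $P^\vsigma$ and keeps the ground space inside $\H^\vsigma$ (\cref{lem:SpectralFlow}). A continuity/counting argument on sector-resolved eigenvalues would work equally well. You cite that lemma only as a ``dressing'' before checking LTQO and the local gap, which is not its role. Conditions (A1)--(A6) concern the unperturbed frustration-free Hamiltonian and are verified directly.

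Some smaller corrections:
\begin{itemize}
\item Fix $\omega$ (your $\lambda$) to $\Deltamin/2$. The theorem guarantees only half of the unperturbed gap $\min\{\Deltamin,2\omega\}$, so you need $2\omega\ge\Deltamin$ to obtain $\Deltamin/2$. A bound of the form $\Deltamin-O(\Omega)$ is not what the theorem delivers.
\item Drop the Schrieffer--Wolff step. It is not rigorous in the thermodynamic limit, and it is unnecessary: $H_0$ is already a frustration-free sum of commuting diagonal local terms, and the plaquette projectors are inserted by hand through the auxiliary term.
\item Uniformity of LTQO in $\vsigma$ does not follow from translation invariance, since a generic $\vsigma$ breaks it. It follows because the map to the symmetric sector is a product of single-qubit $\sigma^z$'s, one per link along dual paths pairing the plaquettes with $\sigma_p=-1$. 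Within the loop space $\G_0|_A$ this product anticommutes with exactly those $U_p$. It also factorizes between $B_{\barn}(r)$ and $B_{\barn}(r+d)\setminus B_{\barn}(r)$, so it drops out of the trace norm and LTQO holds with the same function $F$ in every sector.
\end{itemize}
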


\noindent The second proposition, which is proven in \cref{subsec:GlobalGS},
asserts the uniqueness and symmetry of the global ground state of $H$:
\begin{proposition}[Global ground state]
    \label{pro:GlobalGS}
    There exist $\OGGS > 0$ and $\LGGS \in \mathbb{N}$ such that for $\Omega
    \in (0, \OGGS]$ and $L \ge \LGGS$ the following holds: The ground state
    $\ket{\psi}$ of the full Hamiltonian $H$ is non-degenerate and lies within
    the symmetric sector $\H^\s$, i.e., $\ket{\psi} = \ket{\psi^\s}$ is also
    the unique ground state of $H^\s$.
\end{proposition}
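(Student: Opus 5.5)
The plan is a Perron--Frobenius argument in the natural basis $\ket{\vec n}$, which then forces the ground state into the symmetric sector. In this basis $H_0$ is diagonal. The transverse-field term $\Omega\sum_i\sigma_i^x$ has nonnegative off-diagonal entries equal to $\Omega$ between configurations differing on a single qubit. We therefore work with $-H$ and shift it by a constant, $A := c\,\Id - H$ with $c$ large enough (e.g.\ $c \ge \max\,\mathrm{spec}(H_0)$), so that $A$ is entrywise nonnegative. Because single spin flips connect any two configurations in $\mathbb{Z}_2^{|\V|}$, the matrix $A$ is irreducible as soon as $\Omega>0$. Perron--Frobenius then shows that the largest eigenvalue of $A$, i.e.\ the lowest eigenvalue $E_L$ of $H$, is simple. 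It also gives an eigenvector $\ket{\psi}=\sum_{\vec n}\psi(\vec n)\ket{\vec n}$ with strictly positive amplitudes $\psi(\vec n)>0$. This settles non-degeneracy for every finite $L$ and every $\Omega>0$, with no smallness needed yet.

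Next I show that $\ket{\psi}$ lies in $\H^\s$. Each $U_p$ permutes basis states, $U_p\ket{\vec n}=\ket{\phi_p\cdot\vec n}$, and commutes with $H$ by \cref{eq:symmetries}. Hence $U_p\ket{\psi}$ is again a ground state. By uniqueness, $U_p\ket{\psi}=\sigma_p\ket{\psi}$ with $\sigma_p=\pm1$. Comparing amplitudes gives $\psi(\phi_p^{-1}\cdot\vec n)=\sigma_p\,\psi(\vec n)$, and positivity of both sides forces $\sigma_p=+1$ for every $p\in\P$. Thus $\ket{\psi}\in\H^\s$. Since $H=\bigoplus_\vsigma H^\vsigma$ is block diagonal (\cref{lem:CommutationsRelations}), $E_L=E^\s$. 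Moreover, $\ket{\psi}$ is the unique ground state of $H^\s$, because any other ground state of $H^\s$ would be a second ground state of $H$.

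The positivity argument alone does not explain the quantifiers $\OGGS$ and $\LGGS$. I would take $\OGGS\le\OOPA$ and $\LGGS\ge\LOPA$ so that the statement can be combined with \cref{pro:SpectrumWithinSS}: in that regime the ground state $\ket{\psi^\s}$ of $H^\s$ is identified with the lowest state of the gapped block $\G^\s$. Alternatively, the constants can be inherited from \cref{lem:GroundState_cpy}, of which this proposition is said to be a corollary.

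The main point needing care is irreducibility and sign structure. The sign convention $\sigma^x$ with $+\Omega$ gives positive off-diagonal entries of $H$, i.e.\ negative ones of $-H$ before shifting. One should therefore apply Perron--Frobenius to $c\,\Id-H$, as done above, which has off-diagonal entries $-\Omega\le0$. That is the wrong sign for the standard argument. I would fix this with the unitary gauge transformation $V=\prod_i\sigma_i^z$. It maps $\sigma_i^x\mapsto-\sigma_i^x$, leaves $H_0$ invariant, and commutes with every $U_p$ because $\phi_p$ merely permutes qubits. After conjugation, $c\,\Id-VHV$ has nonnegative off-diagonal entries $+\Omega$ and is irreducible. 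The ground state of $VHV$ is then strictly positive and the argument above applies verbatim. Undoing the gauge preserves the $U_p$-eigenvalues, since $[V,U_p]=0$. So the ground state of $H$ is unique and symmetric.
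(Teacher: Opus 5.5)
Your proposal is correct and is essentially the paper's own argument, namely \cref{lem:GroundState} specialized to $(k,q)=(0,\infty)$, where the spectral-flow step is vacuous. Like the paper, you apply Perron--Frobenius after the gauge transformation $Z=\prod_i\sigma_i^z$, then compare amplitudes under the permutation action of $U_p$ to force $\sigma_p=+1$; as you note, the restrictions on $\Omega$ and $L$ play no real role in this unconstrained case. The only blemish is your opening claim that $c\,\Id-H$ is entrywise nonnegative, which is false with this sign convention; your final gauge fix repairs it, so the positive Perron vector is $Z\ket{\psi}$ and $\ket{\psi}$ itself has amplitudes $(-1)^{|\vec n|}c_{\vec n}$ with $c_{\vec n}>0$, exactly as in the paper.
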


\noindent The third proposition, which is proven in \cref{subsec:GapBetweenSS}, concerns
the gap \emph{between} different symmetry sectors; it asserts that the ground
state energies $E^\vsigma$ of the \emph{asymmetric} sectors are separated by a
gap from the ground state energy $E^\s$ of the \emph{symmetric} sector --
independent of the system size:
\begin{proposition}[Gap between symmetry sectors]
    \label{pro:GapBetweenSS}
    There exist $\OSPB > 0$ and $\LOPB \in \mathbb{N}$ such that for $\Omega
    \in (0, \OSPB)$ the following holds: There exists a constant $\DElb > 0$
    that, for every asymmetric sector $\H^\vsigma$ and for all $L \ge \LOPB$,
    lower-bounds the gap $E^\vsigma - E^\s$ between the ground state energy in
    $\H^\vsigma$ and in the symmetric sector $\H^\s$, i.e., $E^\vsigma - E^\s \ge
    \DElb$.
\end{proposition}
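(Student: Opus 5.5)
The plan is to localize the asymmetry on a single plaquette and then compare ground state energies of the constrained Hamiltonians $H_k^q$ of \cref{def:restH}.

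\textbf{Reduction to one plaquette.} Let $\vsigma\neq\s$ and pick a plaquette $\p$ with $\sigma_\p=-1$. Every basis state $\ket{\vec n}$ with $n_\p=0$ is invariant under $U_\p$, because $\phi_\p$ only permutes qubits inside $\p$. Hence $P^\vsigma$ annihilates the subspace $\H_0^0$, and every state in $\H^\vsigma$ is supported on $\H_1^\infty$. Since $H$ and $H_1^\infty$ differ only by $\sigma^x$ matrix elements between $\H_0^0$ and $\H_1^1$, one gets $H^\vsigma=[H_1^\infty]^\vsigma$. This gives
\begin{align}
    E^\vsigma\ge [E_1^\infty]^\vsigma\ge E_1^\infty .
\end{align}
This is the content I expect from \cref{lem:AsymmetricSectors_cpy}. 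It therefore suffices to show $E_1^\infty-E^\s\ge\DElb$, with $\DElb$ independent of $L$ and of the choice of $\p$. By translation invariance there are at most two inequivalent choices of $\p$.

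\textbf{Lower-bounding the shift via induction.} Recall that $E^\s$ equals the global $E=E_0^\infty$ by \cref{pro:GlobalGS}. I would compare $E_0^\infty$ and $E_1^\infty$ through the chain $E_k^\infty$, $k=0,\dots,\nO$. Allowing transitions into the sector with $k-1$ excitations on $\p$ adds a new block coupled by $\Omega\,P_{k-1}^{k-1}\sigma_\p^xP_k^k$.

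I would handle this block with a Schur-complement (Feshbach) argument. First take the ground state of $H_k^\infty$, which is gapped by \cref{lem:GapStability_cpy} applied to the constrained Hamiltonians. Next build a variational state by adding its $\Omega$-weighted image in $\H_{k-1}^{k-1}$, propagated with the resolvent of $H_\p$. This lowers the energy by at least $c\,\Omega^2\|P_{k-1}^{k-1}\sigma_\p^x\psi\|^2/(\text{local energy cost})$.

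The crucial point is that this gain is local. Its weight is controlled by the probability that $\psi_k^\infty$ has exactly $k$ excitations on $\p$, and that probability is bounded below by an $L$-independent constant by a local-perturbation/quasi-adiabatic argument (\cref{lem:SpectralFlow}). I expect \cref{lem:LowerBound_cpy} to give the single-step bound $E_{k-1}^\infty\le E_k^\infty-\epsilon_k$ with $\epsilon_k=\cL>0$, and \cref{lem:Induction_cpy} to iterate it from $k=\nO$ (where the weight is $1-O(\Omega)$) down to $k=1$. The result would be $E_1^\infty-E_0^\infty\ge\epsilon_1>0$, presumably of order $\Omega^{2\nO}$ up to constants. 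Choosing $\OSPB\le\min(\OOPA,\OGGS)$ and $\LOPB\ge\max(\LOPA,\LGGS)$ then yields the proposition.

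\textbf{Main obstacle.} The hardest step is making the induction quantitative and uniform in $L$. Naively, energy differences of order $\Omega^{2\nO}$ are swamped by extensive $O(\Omega^2|\V|)$ energy contributions. So one must show that the difference $E_{k-1}^\infty-E_k^\infty$ is governed only by a neighborhood of $\p$.

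I would first try to use the gap of each $H_k^\infty$ (from the stability theorem) together with the fact that the perturbation $\Omega P\sigma_\p^xP$ is supported on $\p$. Exponential clustering then bounds the weight of $\psi_k^\infty$ in $\H_k^k$ from below by a local quantity, and the gap yields a resolvent bound independent of $L$. If this fails, a fallback is an explicit local Schrieffer–Wolff transformation restricted to a finite patch around $\p$, combined with operator-norm bounds from \cref{lem:OperatorNorms,lem:Bounds}.
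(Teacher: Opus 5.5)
Your reduction to a single plaquette ($E^\vsigma\ge[E_1^\infty]^\vsigma\ge E_1^\infty$, because $U_\p$ acts trivially on $\H_0^0$) and the identification $E^\s=E_0^\infty$ coincide with the paper's. The gap is in the quantitative core. You correctly flag it as the main obstacle but do not resolve it, and as stated the tools you point to do not supply it.

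\textbf{First gap: no mechanism carries the weight from $k$ to $k-1$.} For $k<\nO$ the weight $\varepsilon_k=\lVert P_k^k\ket{\psi_k^\infty}\rVert$ vanishes at $\Omega=0$, since all classical ground states carry exactly $\nO$ excitations on $\p$. It is therefore a high-order perturbative quantity. \cref{lem:SpectralFlow} only shows that the range of a commuting projector is preserved. Exponential clustering bounds correlations, not the positivity of a local expectation value. Neither gives an $L$-independent positive lower bound, so ``iterate from $k=\nO$ down to $k=1$'' has no inductive hypothesis that actually passes from $k$ to $k-1$. The paper inducts on the weights, not on energies, using an elementary upper bound you are missing. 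With $\ket{\psi_{k-1}^\infty}$ as trial state for $H_k^\infty$,
\[
E_k^\infty-E_{k-1}^\infty\le-\Omega\bra{\psi_{k-1}^\infty}\sigma_\p^+P_{k-1}^{k-1}+P_{k-1}^{k-1}\sigma_\p^-\ket{\psi_{k-1}^\infty}\le(|\p|+1)\,\Omega\,\varepsilon_{k-1}.
\]
So any energy gained by opening the $(k-1)$ channel forces weight into it: $\varepsilon_{k-1}\ge\DE_k(\varepsilon_k)/[(|\p|+1)\Omega]$. The induction start $\varepsilon_{\nO}^2\ge 1-2|\p|\Omega/\Deltamin$ comes from the gap of $[H_{\nO}^{\nO}]^\s$.

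\textbf{Second gap: the single-step estimate is not shown to be uniform in $L$.} Your Feshbach/resolvent construction needs $H_\p$ on $\H_{k-1}^{k-1}$ to sit above $E_k^\infty$ by an $L$-independent margin. That margin is the ``local energy cost'' in your formula, and establishing it is essentially as hard as the proposition. Moreover, $\lVert P_{k-1}^{k-1}\sigma_\p^x\ket{\psi}\rVert$ is not bounded below by $\varepsilon_k$ without sign control, because the different $\sigma_i^-$ can interfere destructively.

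The paper handles both issues with the two-state trial space spanned by $\ket{\psi_k^\infty}$ and $\sigma_\p^-P_k^k\ket{\psi_k^\infty}/c_\perp^{(k)}$:
\begin{itemize}
\item Since $\ket{\psi_k^\infty}$ is an exact eigenstate, $D_k=\bra{\psi_\perp^{(k)}}H_{k-1}^\infty\ket{\psi_\perp^{(k)}}-E_k^\infty$ reduces to an expectation of $\comm{P_k^k\sigma_\p^+}{H}$. In that commutator every term supported away from $\p$ cancels exactly, including the extensive transverse-field energy.
\item The Perron--Frobenius sign structure $(-1)^{|\vec n|}c_{\vec n}$ of $\ket{\psi_k^\infty}$ makes all cross terms $\bra{\psi_k^\infty}P_k^k\sigma_i^+\sigma_j^-P_k^k\ket{\psi_k^\infty}$ nonnegative, also with $\hat n_l$ inserted.
\item This yields $c_\perp^{(k)}\ge\sqrt{k}\,\varepsilon_k$, hence $M_k\ge\Omega\sqrt{k}\,\varepsilon_k$. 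It also yields $D_k\le\Deltamax+\Omega(|\p|+1)^2/(4\sqrt{k}\,\varepsilon_k)$, because the blockade contribution is nonpositive.
\end{itemize}
These two ingredients --- exact cancellation of the non-local part via the eigenvalue equation, and sign positivity --- make the bound $L$-independent. Your fallbacks (clustering, a local Schrieffer--Wolff patch) are not shown to achieve this.
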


\noindent\cref{the:ExcitationGap} then follows by combining the three propositions:
\begin{proof}[Proof of \cref{the:ExcitationGap} (Excitation Gap, see \cref{fig:Spectrum})]
    To combine \cref{pro:SpectrumWithinSS,pro:GlobalGS,pro:GapBetweenSS}, we choose
    $\OSTA := \min\{\OOPA,\OGGS,\OSPB\}$ and require $\LOTA\geq\max\{\LOPA,\LGGS,
    \LOPB\}$ (the exact choice of $\LOTA$ follows below).
    From \cref{pro:GlobalGS} we then know that for $\Omega\in(0,\Omega_0)$
    and $L\geq L_0$ the ground state $\ket{\psi}$ of $H$ is unique and lies in
    the symmetric sector $\H^\s$.  From \cref{pro:SpectrumWithinSS} we know
    that $H^\s$ is gapped by at least $\Deltamin/2$ in the symmetric sector with a ground state space
    $\G^\s$.
    From \cref{pro:GapBetweenSS} we know that, for every asymmetric sector
    $\H^\vsigma$, we have $E^\vsigma - E^\s \ge \DElb$ where $\DElb > 0$ is a
    constant independent of $\vsigma$ and $L$ (it can depend on
    $\Omega$); hence there exists a gap between the ground states
    $\ket{\psi^\vsigma}$ and the ground state $\ket{\psi^\s}$ in the symmetric
    sector.
    Additionally, for $L \ge \LOPA$, the ground state splitting of $\G^\s$ is
    bounded by some $\delta_L^\s$ with $\lim_L \delta_L^\s = 0$. In particular,
    there exists a minimal size $L^* \in \mathbb{N}$ such that for all $L \ge
    L^*$ one has $\delta_L^\s < \min\{\DElb, \Deltamin/2\}$; so we choose $\LOTA :=
    \max\{\LOPA,\LGGS,\LOPB,L^*\}$.
    We can then set $\gamma_L \equiv \gamma := \min\{\DElb, \Deltamin/2\}$, so
    that for $L \ge \LOTA$ the full Hamiltonian $H$ is gapped by at least
    $\gamma - \delta_L^\s > 0$. The ground state space is $\G = \G^\s$ in the
    symmetric sector $\H^\s$ with $4$-fold (topological) degeneracy. The ground
    state splitting is bounded by $\delta_L := \delta_L^\s$.
\end{proof}

\begin{figure*}[tb]
    \begin{minipage}[c]{0.33\textwidth}
        \includegraphics[width=\linewidth]{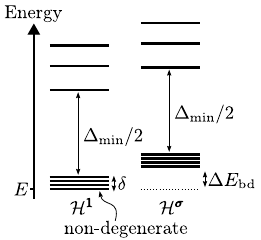}
    \end{minipage}\hfill
    \begin{minipage}[c]{0.63\textwidth}
    \caption{%
    \textbf{Spectrum in different symmetry sectors.} 
    As explained in \cref{subsec:LocalSymmetries}, the spectrum of the
    Hamiltonian $H$ can be split into different symmetry sectors. Here,
    $\H^\s$ denotes the symmetric sector and $\H^\vsigma$ some asymmetric
    sector. From \cref{pro:SpectrumWithinSS} we know that
    every symmetry sector has a $4$-fold degenerate ground state space
    with an excitation gap of at least $\Deltamin/2$; the ground state
    splitting $\delta$ is bounded by a function decaying with the
    system size $L$.  According to \cref{pro:GlobalGS}, the ground state of
    the Hamiltonian is non-degenerate and lives in the symmetric sector.
    \cref{pro:GapBetweenSS} asserts that the energy gap \emph{between} the
    symmetry sectors is lower-bounded by a constant $\DElb > 0$ that is
    independent of the system size $L$ and the symmetry sector $\H^\vsigma$.
    }
    \label{fig:Spectrum}
    \end{minipage}
\end{figure*}

\section{Proofs of the propositions}
\label{sec:Propositions}

In this section we discuss and derive the three propositions using several lemmas:
In \cref{subsec:SpectrumWithinSS} we establish \cref{pro:SpectrumWithinSS} as
a corollary of the more general \cref{lem:GapStability_cpy}, the proof of
which follows in \cref{subsec:GapStability}.
In \cref{subsec:GlobalGS} we establish \cref{pro:GlobalGS} as a
corollary of the more general \cref{lem:GroundState_cpy}, the proof of which is
given in \cref{subsec:GroundState}.
Finally, in \cref{subsec:GapBetweenSS} we prove \cref{pro:GapBetweenSS} by
combining
\cref{lem:AsymmetricSectors_cpy,lem:LowerBound_cpy,lem:Induction_cpy}, the
proofs of which are given in
\cref{subsec:AsymmetricSectors,subsec:LowerBound,subsec:Induction},
respectively.

\subsection{Spectrum within symmetry sectors (\texorpdfstring{\cref{pro:SpectrumWithinSS}}{Proposition 1})} 
\label{subsec:SpectrumWithinSS}

\noindent In \cref{subsec:GapStability} we prove the following lemma for the
constrained Hamiltonian from \cref{def:restH}:
\begin{lemma}[Gap stability]
    \label{lem:GapStability_cpy}
    There exist $\OOLA > 0$, $\LOLA \in \mathbb{N}$, such that for all $\Omega
    \in [0,\OOLA]$, $L \ge \LOLA$, and for all nonnegative integers $k$, $q$
    with $k \le q$, the following holds: 
    Within each symmetry sector $\H^\vsigma$, the constrained Hamiltonian
    $[H_k^q]^\vsigma$ is gapped by at least $\Deltamin/2$ and the ground state
    space $[\mathcal{G}_k^q]^{\vsigma}$ is $4$-fold (topologically) degenerate.
\end{lemma}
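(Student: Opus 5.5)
We plan to deduce the lemma from the Michalakis--Zwolak gap stability theorem~\cite{michalakis2013}. We apply it not to $H_k^q$ directly but to an effective frustration-free, commuting-projector model obtained within each symmetry sector, in the spirit of Ref.~\cite{maier2025}.

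\emph{Step 1: the unperturbed Hamiltonian.} The unperturbed Hamiltonian is $H_0 = H_U - H_\Delta$. It is diagonal in the occupation basis and can be written as a sum of local, translationally invariant terms, one per site and link of the honeycomb grid. We choose the constants so that each local term is nonnegative and vanishes exactly on configurations satisfying the local loop constraint. All values of $H_0$ are integer combinations of $U$, $\Deltamin$ and $\Deltamax = 2\Deltamin$. Since $U \gg \Deltamax$, every configuration outside $\G_0$ therefore costs at least $\Deltamin$ above $E_0$. Hence $H_0$ is frustration-free with local gap $\Deltamin$ and ground space $\G_0$, the closed-loop space.

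\emph{Step 2: restoring local topological quantum order.} The space $\G_0$ is extensively degenerate and does not satisfy the local topological quantum order (LTQO) condition required by~\cite{michalakis2013}. The plaquette symmetries repair this. On $\G_0$ the operators $U_p$ act as the loop-flipping plaquette operators of the toric code, since applying $\phi_p$ to a closed-loop pattern toggles the loop around $p$. This is exactly what Fig.~\ref{fig:CoarseGraining}(a) illustrates. Inside a fixed sector $\H^\vsigma$, the ground space $P^\vsigma\G_0$ is therefore the joint eigenspace of toric-code-type stabilizers: the vertex constraints from $H_0$ together with $U_p=\sigma_p$. This is a toric-code code space, up to the sign twist $\vsigma$, which can be removed by a product of $\sigma^z$-string conjugations along a pairing of the flipped plaquettes. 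It is 4-fold degenerate on the torus and satisfies LTQO with a code distance of order $L$.

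To keep the structure local, we plan to work with the modified Hamiltonian
\[
H_0^\vsigma := H_0 + \Deltamin \sum_{p\in\P}\tfrac12(\Id-\sigma_p U_p) .
\]
It agrees with $H_0$ on $\H^\vsigma$, consists of commuting local projectors, and is frustration-free with gap at least $\Deltamin$. Each $U_p$ acts on the 36 qubits of $p$, so $H_0^\vsigma$ has finite range. Hence
\[
[H_k^q]^\vsigma = \bigl(H_0^\vsigma + V\bigr)P^\vsigma ,
\]
where $V$ denotes the perturbation. Because $V$ commutes with all $U_p$, it is enough to show that $H_0^\vsigma + V$ has the claimed spectrum within $\H^\vsigma$.

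\emph{Step 3: the perturbation and the Michalakis--Zwolak conditions.} The perturbation is
\[
V = \Omega\sum_{i\notin\p}\sigma_i^x + \Omega P_k^q\,\sigma^x_\p\, P_k^q .
\]
The first part is a sum of norm-$\Omega$ one-site terms. The second part is supported on the 36 qubits of $\p$ and has norm at most $36\,\Omega$. Hence $V$ is a sum of uniformly bounded, finite-range terms, with bounds independent of $k$, $q$, $\vsigma$ and $L$. The remaining hypotheses are frustration-freeness, local gaps, local LTQO (the local ground-space projectors of $H_0^\vsigma$ restricted to balls are consistent with the global ones), and a periodic-boundary geometry. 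We will verify these by the standard toric-code arguments, presumably collected in Lemma~\ref{lem:ConditionsForGapStability}.

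The theorem of~\cite{michalakis2013} then yields $\OOLA$ and $\LOLA$, uniform in $(k,q,\vsigma)$, such that for $\Omega\le\OOLA$ and $L\ge\LOLA$ the following holds. The 4-dimensional ground space stays separated by at least $\Deltamin/2$ from the rest of the spectrum, and its splitting decays exponentially in $L$. Uniformity in $k,q$ is essential, because the lemma is later used inside an induction over $k$. It holds because the constants in~\cite{michalakis2013} depend only on the interaction norms and ranges. The single inhomogeneous plaquette $\p$ does not spoil this, since the theorem does not require translation invariance of the perturbation.

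\emph{Main obstacle.} I expect the hardest step to be Step 2, specifically verifying local LTQO uniformly in $\vsigma$. Two points need care. First, we must show that the local ground-state projections of $H_0^\vsigma$ on balls of radius $r \ll L$ are indistinguishable across the 4 global ground states. Second, we must show that the twisted sectors really reduce to the untwisted toric code by local unitaries. My plan is to map each sector to the standard toric code by an explicit local identification of closed-loop patterns with $\mathbb{Z}_2$ link variables, and then invoke the known LTQO of the toric code. The sign twists are handled by the string conjugation above, which preserves locality of all terms.
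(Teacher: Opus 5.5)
Your route is the paper's. Your penalty term is the paper's $H^{[\vsigma]}(\omega)$ with $\omega=\Deltamin$ instead of $\Deltamin/2$; both choices give an unperturbed gap of $\Deltamin$, hence $\Deltamin/2$ after perturbation. The paper likewise checks (A1)--(A6) in \cref{lem:ConditionsForGapStability}, obtains LTQO in twisted sectors by your $\sigma^z$-string conjugation along dual paths pairing the $\sigma_p=-1$ plaquettes, and gets uniformity in $k,q,\vsigma$ from uniform bounds on the perturbation. Your treatment of $\Omega P_k^q\sigma^x_\p P_k^q$ as a single term on all $36$ qubits of $\p$ is, if anything, cleaner than a node-by-node assignment, since $P_k^q$ acts on all of $\p$.

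\emph{The missing step} is the transfer of the Michalakis--Zwolak conclusion back to $[H_k^q]^\vsigma$.
\begin{itemize}
\item The theorem acts on the full tensor-product space. It only says that $H_0^\vsigma+V$ has four low-lying eigenvalues split off by $\Deltamin/2$, not that their eigenvectors lie in $\H^\vsigma$.
\item Your reduction (``it is enough to show that $H_0^\vsigma+V$ has the claimed spectrum within $\H^\vsigma$'') is correct. But the spectrum inside $\H^\vsigma$ has the claimed form only if all four low states lie in $\H^\vsigma$. If some belonged to other sectors, $[H_k^q]^\vsigma$ would have a smaller ground-space dimension, or no eigenvalue in the low band and no controlled gap at all.
\item An energy comparison cannot exclude this. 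The penalty for leaving $\H^\vsigma$ is only $2\Deltamin$, while $\lVert V\rVert$ grows like $\Omega|\V|$.
\end{itemize}

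The fix is a continuity argument along the path $\Omega'\in[0,\Omega]$:
\begin{itemize}
\item The theorem keeps the gap open uniformly on the whole path, since the perturbation strength is $\propto\Omega'\le\Omega$.
\item The family is continuous in $\Omega'$ and block-diagonal in the sectors. So, for each sector, the rank of that block's spectral projection below a point inside the gap is locally constant.
\item That rank therefore equals its value at $\Omega'=0$: $4$ for $\vsigma$ and $0$ for every other sector.
\end{itemize}
This is step~(3) of the paper's proof, made rigorous via the spectral-flow argument of \cref{lem:SpectralFlow}. Only then does $P^\vsigma(H_0^\vsigma+V)P^\vsigma=[H_k^q]^\vsigma$ inherit the gap and the fourfold degeneracy, which is step~(4).
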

The proof is quite technical; the crucial insight is that the \emph{classical}
Hamiltonian $H^\vsigma(\Omega=0)$ within each symmetry sector $\H^\vsigma$ (for
an arbitrary $\vsigma \in \{\pm 1\}^{|\P|}$) is gapped, frustration-free, and
has a ground state manifold with local topological quantum order (LTQO). This allows
us to apply a result by Michalakis and Zwolak~\cite{michalakis2013}
which asserts stability of the gap under arbitrary local perturbations -- in
particular under $H_\Omega$ for weak-enough transverse fields $\Omega>0$.
The general form of \cref{lem:GapStability_cpy} is required for the proof of
\cref{lem:LowerBound_cpy,lem:Induction_cpy};
\cref{pro:SpectrumWithinSS} follows as a corollary for the special case $(k, q)
= (0, \infty)$:
\begin{proof}[Proof of \cref{pro:SpectrumWithinSS}]
    Consider the unconstrained special case $(k,q) = (0, \infty)$. Recall that in this case
    $[H_0^\infty]^\vsigma=H^\vsigma$, $[\G_0^\infty]^\vsigma= \G^\vsigma$
    and $\ket{[\psi_0^\infty]^\vsigma}=\ket{\psi^\vsigma}$. Hence
    \cref{lem:GapStability_cpy} reduces to \cref{pro:SpectrumWithinSS}.
\end{proof}

\subsection{Global ground state (\texorpdfstring{\cref{pro:GlobalGS}}{Proposition 2})} 
\label{subsec:GlobalGS}

\noindent In \cref{subsec:GroundState} we prove the following lemma, again for
the constrained Hamiltonian from \cref{def:restH}:

\begin{lemma}[Ground state]
    \label{lem:GroundState_cpy}
    There exist $\OOLB > 0$, $\LOLB \in \mathbb{N}$, such that for all $\Omega
    \in (0, \OOLB]$, $L \ge \LOLB$, and for all nonnegative integers $k$, $q$
    with $k \le \nO \le q$, the following holds:
    \begin{itemize}
    
    \item For every symmetry sector $\H^\vsigma$, the ground state space
    $[\G_k^q]^\vsigma$ of the constrained Hamiltonian $[H_k^q]^\vsigma$ is
    a subspace of $\H_k^q$.
    
    \item For $k < q$, the ground state $\ket{\psi_k^q}$ of the Hamiltonian $H_k^q$
    is non-degenerate and lies within the symmetric sector $\H^\s$, i.e.,
    $\ket{\psi_k^q} = \ket{[\psi_k^q]^\s}$ is also the unique ground state of
    $[H_k^q]^\s$. 
    
    In the natural basis, it takes the form $\ket{\psi_k^q} = \sum_{\vec{n} \in
    [\mathbb{Z}_2^{|\V|}]_k^q}(-1)^{|\vec{n}|}c_{\vec{n}}\ket{\vec{n}}$ with
    $c_{\vec{n}} > 0$, where $|\vec{n}| = \sum_{i \in \V} n_i$ denotes
    the total number of excitations.

    \end{itemize}

\end{lemma}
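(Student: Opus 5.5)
The plan has three steps: a block decomposition in the plaquette occupation $n_\p$, a Perron--Frobenius argument inside $\H_k^q$, and a symmetry argument that places the ground state in $\H^\s$.

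\textbf{Block decomposition.} The operator $H_\p = H_0 + \Omega\sum_{i\notin\p}\sigma_i^x$ commutes with $\hat n_\p$, since every term is either diagonal or flips a qubit outside $\p$. The remaining term $\Omega P_k^q\sigma_\p^x P_k^q$ is supported on $\H_k^q$. Hence $H_k^q$ is block diagonal with respect to the decomposition
\begin{align}
    \H = \H_k^q \oplus \bigoplus_{m\notin[k,q]} \H_m^m .
\end{align}
On each outer block $\H_m^m$, $H_k^q$ acts as $H_\p$ with $n_\p=m$ frozen. By \cref{lem:CommutationsRelations}, all these blocks are compatible with the sector projectors $P^\vsigma$. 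The first bullet therefore reduces to an energy comparison in every sector $\vsigma$:
\begin{align}
    \min_{m\notin[k,q]} \inf\mathrm{spec}\,[H_m^m]^\vsigma\big|_{\H_m^m} \;>\; \inf\mathrm{spec}\,[H_k^q]^\vsigma\big|_{\H_k^q}.
\end{align}

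\textbf{The energy comparison (main obstacle).} At $\Omega=0$ the comparison holds with margin at least $\Deltamin$. Every classical ground state has $n_\p=\nO\in[k,q]$, while freezing $n_\p=m\neq\nO$ forces a local violation of the loop constraint on $\p$, which costs at least $\Deltamin$. The difficulty is that the perturbation $H_\Omega$ has extensive norm, so a naive perturbative bound fails. I would control the difference of the two ground state energies rather than the energies themselves. The approach is to apply the gap stability machinery behind \cref{lem:GapStability_cpy} (Michalakis--Zwolak with LTQO) to both restricted Hamiltonians. Both share the same frustration-free bulk and differ only by a local modification near $\p$. The quasi-adiabatic (spectral flow) representation, \cref{lem:SpectralFlow}, should then show that the two energy shifts induced by $H_\Omega$ agree up to an $L$-independent error $C\Omega$ that stems only from the vicinity of $\p$. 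Choosing $\Omega_2'$ with $C\Omega_2'<\Deltamin$ closes the argument. This step is the technical core, and I would attack it first.

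\textbf{Perron--Frobenius on $\H_k^q$.} Conjugate by $Z:=\prod_i\sigma_i^z$, which acts as $(-1)^{|\vec n|}$ on the natural basis. Then $ZH_k^qZ$ has diagonal entries from $H_0$ and off-diagonal entries $-\Omega\le0$, since each $\sigma_i^x$ changes $|\vec n|$ by one. For $k<q$ the transition graph on $[\mathbb{Z}_2^{|\V|}]_k^q$ is connected:
\begin{itemize}
    \item qubits outside $\p$ can be flipped freely;
    \item on $\p$, one pattern is turned into another by alternately removing an excitation in $\vec a\setminus\vec b$ and adding one in $\vec b\setminus\vec a$, adding first when $n_\p=k$ and removing first when $n_\p=q$. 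This stays inside $[k,q]$ precisely because $q>k$.
\end{itemize}
Perron--Frobenius then gives a unique ground state on $\H_k^q$ with strictly positive coefficients in the rotated basis. Undoing the conjugation yields the stated form $(-1)^{|\vec n|}c_{\vec n}$ with $c_{\vec n}>0$. Combined with the first bullet, this state is the unique ground state of $H_k^q$ on all of $\H$.

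\textbf{Symmetry of the ground state.} Each $U_p$ permutes basis states, preserves $|\vec n|$, and commutes with $H_k^q$. Hence $U_p\ket{\psi_k^q}$ is again a ground state with the same sign structure. By uniqueness, $U_p\ket{\psi_k^q}=\pm\ket{\psi_k^q}$. The overlap $\bra{\psi_k^q}U_p\ket{\psi_k^q}=\sum_{\vec n}c_{\vec n}c_{\phi_p\cdot\vec n}>0$, which forces the sign $+1$. Therefore $\ket{\psi_k^q}\in\H^\s$, and it is the unique ground state of $[H_k^q]^\s$.
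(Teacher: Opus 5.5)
Your Perron--Frobenius step (including the explicit add/remove path that proves irreducibility on $\H_k^q$ for $k<q$) and your symmetry step via $\bra{\psi_k^q}U_p\ket{\psi_k^q}>0$ are fine and match the paper. The genuine gap is the first bullet. You reduce it to a comparison of ground energies between the $n_\p$-blocks. You then leave that comparison as an unproven ``technical core,'' resting on the conjecture that the $\Omega$-induced energy shifts of the inner and outer restricted Hamiltonians agree up to an $L$-independent $C\Omega$.

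Nothing in the available toolbox gives this. The Michalakis--Zwolak theorem controls the gap of one Hamiltonian, not the difference of the extensive ground-state energies of two different ones. Moreover, for an outer block with $n_\p=m\neq\nO$ frozen, the local terms of $H_0$ around $\p$ cannot be minimized simultaneously. So even the theorem's hypotheses (frustration-freeness, LTQO) would first have to be re-established there. Finally, a strict inequality between infima, as you wrote it, only places the lowest state of $[H_k^q]^\vsigma$ in $\H_k^q$. It does not place the whole $4$-dimensional ground state space there unless the margin is shown to exceed the splitting $\delta_L$.

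The missing idea is that no block-by-block comparison is needed. \cref{lem:GapStability_cpy} was proven for \emph{all} $k\le q$. It already gives, uniformly for $\Omega\in[0,\OOLA]$ and $L\ge\LOLA$, a gap of at least $\Deltamin/2$ above a $4$-dimensional low-energy space of the full sector Hamiltonian $[H_k^q]^\vsigma$, outer blocks included. Because $[H_k^q]^\vsigma$ commutes with $P_k^q$, the spectral projector $P_{\G}(\Omega)$ onto this band also commutes with $P_k^q$ and varies continuously in $\Omega$. Hence $\mathrm{tr}(P_k^q P_{\G}(\Omega))$ is a continuous integer-valued function. It equals $4$ at $\Omega=0$, since $\G_0\cap\H^\vsigma\le\H_{\nO}^{\nO}\le\H_k^q$.

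So no outer-block level can enter the ground band without closing the gap, and $[\G_k^q]^\vsigma\le\H_k^q$ follows. The paper phrases this via the spectral-flow \cref{lem:SpectralFlow}. With this in hand, the global ground state of $H_k^q$ lies in some sector's ground space and hence in $\H_k^q$. The rest of your argument then goes through unchanged.
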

Remember that $\nO = 12$ for the particular family of blockade graphs $\GB_L$
we consider in this paper.
The proof of \cref{lem:GroundState_cpy} is based on the well-known theorem by
Perron and Frobenius~\cite{perronfrobenius1912} and the special structure of
blockade Hamiltonians \eqref{eq:HOmega}. That the gapped ground state space
$[\G_k^q]^\vsigma$ is a subspace of $\H_k^q$ is also a consequence of this
structure and follows via a theorem on spectral flow by
Nachtergaele~\etal~\cite{nachtergaele2019}.
The general form of \cref{lem:GroundState_cpy} is also required for the proof
of \cref{lem:LowerBound_cpy,lem:Induction_cpy}; \cref{pro:GlobalGS}
follows as a corollary for the special case $(k, q) = (0, \infty)$:
\begin{proof}[Proof of \cref{pro:GlobalGS}]
    Consider again the unconstrained special case $(k,q) = (0, \infty)$. With
    $[H_0^\infty]^\vsigma=H^\vsigma$, $[\G_0^\infty]^\vsigma= \G^\vsigma$ and
    $\ket{[\psi_0^\infty]^\vsigma}=\ket{\psi^\vsigma}$, the statement of
    \cref{lem:GroundState_cpy} implies \cref{pro:GlobalGS}.
\end{proof}

\subsection{Gap between symmetry sectors (\texorpdfstring{\cref{pro:GapBetweenSS}}{Proposition 3})}%
\label{subsec:GapBetweenSS}

The proof of \cref{pro:GapBetweenSS}, which asserts a finite gap between the
lowest-energy manifolds of the symmetric and the asymmetric sectors, follows
from a sequence of lemmas which we state here and prove later in
\cref{sec:TechnicalPart}.

The main insight is that every state in $\H_0^0$ (i.e., states without
excitations on a fixed plaquette $\p$) is necessarily symmetric under
$U_\p$, i.e., $\sigma_\p = +1$. This can be used to prove the following lemma:
\begin{lemma}[Asymmetric sectors]
    \label{lem:AsymmetricSectors_cpy}

    Fix an arbitrary asymmetric sector $\H^\vsigma$ to define $H^\vsigma$, and
    fix a plaquette $\p \in \P$ to define $H_1^\infty$. Then the ground state
    energy of $H^\vsigma$ is not smaller than the ground state energy of
    $H_1^\infty$: $E^\vsigma \ge E_1^\infty$.

\end{lemma}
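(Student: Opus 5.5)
The plan is a variational argument: first show that the asymmetric sector is contained in the constrained subspace $\H_1^\infty$, and then observe that $H$ and $H_1^\infty$ have the same expectation value on that subspace.

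First, I will reduce to a convenient choice of plaquette. Since $\vsigma \neq \s$, there exists some plaquette $p$ with $\sigma_p = -1$. The blockade graph $\GB_L$ with PBCs is translationally invariant, and the honeycomb grid has two plaquettes per unit cell that are related by a graph automorphism. Such automorphisms map plaquettes to plaquettes and induce unitaries $V$ on $\H_L$ that commute with $H_0$ and $H_\Omega$. They also conjugate the constrained Hamiltonian $H_1^\infty$ defined with respect to $\p$ into the one defined with respect to the image plaquette, because they map $P_1^\infty$ and $\sigma^x_\p$ accordingly. Hence $E_1^\infty$ does not depend on the choice of $\p$, and I may take $\p := p$ with $\sigma_\p = -1$. (If the intended reading is that $\p$ is fixed first, this step is exactly what makes the two readings equivalent.)

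Second, I will show $\H^\vsigma \subset \H_1^\infty$, i.e.\ $P_0^0 P^\vsigma = 0$. The unitary $U_\p$ only permutes qubits inside $\p$, so it preserves $\hat n_\p$ and commutes with $P_0^0$. On any basis state $\ket{\vec n}$ with $n_\p = 0$, every qubit of $\p$ is in $\ket{0}_i$, so the permutation acts trivially: $U_\p P_0^0 = P_0^0$. For $\ket{\psi} \in \H^\vsigma$ this gives $P_0^0\ket{\psi} = P_0^0 U_\p \ket{\psi} \cdot(-1) = -P_0^0\ket{\psi}$, using $U_\p\ket{\psi} = -\ket{\psi}$ and $P_0^0 U_\p = U_\p P_0^0 = P_0^0$. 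Therefore $P_0^0\ket{\psi} = 0$. Since $P_0^0 + P_1^\infty = \Id$, it follows that $\ket{\psi} = P_1^\infty\ket{\psi}$.

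Third, I will compare energies variationally. Write $H = H_\p + \Omega\sigma^x_\p$ and $H_1^\infty = H_\p + \Omega P_1^\infty \sigma^x_\p P_1^\infty$. Let $\ket{\psi^\vsigma}$ be a normalized ground state of $H^\vsigma$. Since $P_1^\infty\ket{\psi^\vsigma} = \ket{\psi^\vsigma}$, we get $\bra{\psi^\vsigma}\sigma^x_\p\ket{\psi^\vsigma} = \bra{\psi^\vsigma}P_1^\infty\sigma^x_\p P_1^\infty\ket{\psi^\vsigma}$. Hence
\begin{align}
E^\vsigma = \bra{\psi^\vsigma}H\ket{\psi^\vsigma} = \bra{\psi^\vsigma}H_1^\infty\ket{\psi^\vsigma} \ge \inf\mathrm{spec}(H_1^\infty) = E_1^\infty
\end{align}
by the Rayleigh--Ritz principle. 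No gap or perturbative input is needed.

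The only step requiring care is the first one: justifying that the translations and sublattice-exchanging automorphisms of $\GB_L$ indeed conjugate $H_1^\infty$ for $\p$ into $H_1^\infty$ for any other plaquette. This should follow directly from the construction in \cref{fig:CoarseGraining}, since the constrained Hamiltonian is built only from graph data and the plaquette's vertex set.
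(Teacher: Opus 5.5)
Your proof is correct and is essentially the paper's argument: $U_\p P_0^0 = P_0^0$ together with $\sigma_\p = -1$ forces $\H^\vsigma \le \H_1^\infty$, so $H$ and $H_1^\infty$ agree on the sector, and the variational principle gives $E^\vsigma \ge E_1^\infty$; translational invariance makes the choice of $\p$ immaterial (the paper states the agreement as the operator identity $[H_1^\infty]^\vsigma = H^\vsigma$ rather than as an equality of expectation values). One small slip that does not affect the argument: the honeycomb grid has two \emph{sites} but only one plaquette per unit cell ($|\P| = |\S|/2 = L^2$), so translations alone already map any plaquette to any other and no sublattice-exchanging automorphism is needed.
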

This lemma is proven in \cref{subsec:AsymmetricSectors}. It allows us to ignore
the (possibly complicated) spectrum in the asymmetric sectors and focus on
deriving a lower bound on $E_1^\infty$ instead. (Note that we expect this bound
not to be sharp.) \cref{lem:AsymmetricSectors_cpy} motivates a posteriori the
definition of the constrained Hamiltonian in
\cref{subsec:RestrictedHilbertSpace}.
The remaining task is then to lower-bound the ground state energy $E_1^\infty$.
This is achieved by two additional lemmas. The first focuses on deriving a
suitable lower bound, but does not yet guarantee that this bound is positive:
\begin{lemma}[Lower bound]
    \label{lem:LowerBound_cpy}
    There exist $\OOLD > 0$, $\LOLD \in \mathbb{N}$, such that for all $\Omega
    \in (0, \OOLD]$, and for a positive integer $k$ with $k \le \nO$, the
    following holds:
    There exists a strictly increasing function $\DE_k:\;\mathbb{R}_{\ge 0}
    \rightarrow \mathbb{R}_{\ge 0}$ such that, for all $L \ge \LOLD$, one has
    $E_k^\infty - E_{k-1}^\infty \ge \DE_k(\varepsilon_k)$ with $\varepsilon_k
    := \lVert P_k^k\ket{\psi_k^\infty}\rVert$.
\end{lemma}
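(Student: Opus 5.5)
The plan is to compare the two constrained Hamiltonians through their difference. For $k\ge 1$, the spaces satisfy $\H_k^\infty\subset\H_{k-1}^\infty$. The Hamiltonian $H_{k-1}^\infty$ differs from $H_k^\infty$ only by the off-diagonal transitions between the layer $\H_k^k$ ($n_\p=k$) and the layer $\H_{k-1}^{k-1}$ ($n_\p=k-1$) on the plaquette $\p$:
\begin{align}
    H_{k-1}^\infty - H_k^\infty = \Omega\bigl(P_{k-1}^{k-1}\sigma_\p^x P_k^k + P_k^k\sigma_\p^x P_{k-1}^{k-1}\bigr) =: \Omega V_k\,.
\end{align}
By \cref{lem:GroundState_cpy} (applicable since $k-1<k\le\nO\le\infty$), both ground states $\ket{\psi_k^\infty}$ and $\ket{\psi_{k-1}^\infty}$ are unique, lie in $\H^\s$, and have the sign structure $(-1)^{|\vec n|}c_{\vec n}$ with $c_{\vec n}>0$. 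Consequently every matrix element of $\sigma^x$ between them is nonpositive, which is the Perron--Frobenius mechanism that makes adding $V_k$ lower the energy. The goal is to quantify this decrease by the weight $\varepsilon_k=\lVert P_k^k\ket{\psi_k^\infty}\rVert$ of the higher-level ground state on the boundary layer that couples to level $k-1$.

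First I will handle the upper bound on $E_{k-1}^\infty$ by a variational argument. I take the trial state $\ket{\phi}\propto\ket{\psi_k^\infty}+t\,\Omega R\,\ket{\chi}$, where $\ket{\chi}:=P_{k-1}^{k-1}\sigma_\p^x P_k^k\ket{\psi_k^\infty}$ and $R$ is a positive, sign-respecting approximate resolvent on $\H_{k-1}^{k-1}$. The simplest choice is $R=\Id$ together with an optimized $t$. The term $\ket{\chi}$ lives in $\H_{k-1}^{k-1}$, which is orthogonal to $\H_k^\infty$. A second-order computation then gives
\begin{align}
    E_{k-1}^\infty \le E_k^\infty - \frac{\Omega^2\lVert\chi\rVert^4}{\langle\chi|H_{k-1}^\infty-E_k^\infty|\chi\rangle+\dots}\,.
\end{align}
Because of the sign structure there is no cancellation, so $\lVert\chi\rVert^2$ is comparable to $\varepsilon_k^2$. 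Any state with $n_\p=k$ has at least one excited qubit on $\p$, and, by the nearest-neighbour and blockade structure, at most a bounded number of $\sigma_i^x$ can map it onto nonzero, same-sign components. This gives $\lVert\chi\rVert\ge\varepsilon_k$ up to a constant. The denominator will be controlled with \cref{lem:OperatorNorms} and \cref{lem:Bounds}. The key point is that only the local part of $H_{k-1}^\infty-E_k^\infty$ near $\p$ enters, since away from $\p$ the state is locally a ground state. I will therefore use the locality of $\sigma_\p^x$ and the gap of \cref{lem:GapStability_cpy} to bound it by an $L$-independent constant $C$. This yields $\DE_k(\varepsilon)=c\,\Omega^2\varepsilon^4/C$, or a monotone variant, which is strictly increasing and independent of $L$ for $L\ge\LOLD$.

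The main obstacle is exactly this $L$-independence of the energy cost of the correction: the naive bound on $\langle\chi|H|\chi\rangle-E_k^\infty\lVert\chi\rVert^2$ involves the global Hamiltonian. I would first attempt to write $H_{k-1}^\infty=H_k^\infty+\Omega V_k$ and use $[H_k^\infty,\sigma_i^x]$ being local, via \cref{lem:CommutationsRelations}. This gives $\langle\chi|H_k^\infty-E_k^\infty|\chi\rangle\le\lVert[H_k^\infty,\sigma_\p^x]\rVert\,\lVert\chi\rVert\,\varepsilon_k$, and the commutator norm is $\cL$ (of order $U$ times the number of qubits in $\p$). If this route fails, the fallback is a Schur-complement argument on the decomposition $\H_{k-1}^\infty=\H_{k-1}^{k-1}\oplus\H_k^\infty$, using \cref{lem:GapStability_cpy} for $H_{k-1}^{k-1}=H_\p$. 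The thresholds $\OOLD$ and $\LOLD$ will be taken as the minimum and maximum of those from \cref{lem:GapStability_cpy,lem:GroundState_cpy}.
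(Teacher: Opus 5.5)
Your trial state with $R=\Id$ is exactly the paper's variational state. Since $P_{k-1}^{k-1}\sigma_\p^+P_k^k=0$, one has $\ket{\chi}=\sigma_\p^-P_k^k\ket{\psi_k^\infty}=c_\perp^{(k)}\ket{\psi_\perp^{(k)}}$. Optimizing $t$ is then the paper's $2\times 2$ problem with $M_k=\Omega\lVert\chi\rVert$, and your no-cancellation bound is the paper's $\lVert\chi\rVert^2\ge k\varepsilon_k^2$. The clean reason for that bound is that every basis state with $n_\p=k$ has exactly $k$ images under $\sigma_\p^-$, and all contributions to a given image $\ket{\vec{m}}$ carry the same sign $(-1)^{|\vec{m}|+1}$. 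The blockade structure plays no role, and an ``at most'' count cannot give a lower bound.

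\textbf{The step that fails.} It is the one you single out as the main obstacle. Your estimate $\bra{\chi}H_k^\infty-E_k^\infty\ket{\chi}\le\lVert[H_k^\infty,\sigma_\p^x]\rVert\,\lVert\chi\rVert\,\varepsilon_k$ treats $P_k^k\ket{\psi_k^\infty}$ as if it were an eigenvector of $H_k^\infty$. It is not: $H_k^\infty$ couples the layer $n_\p=k$ to $n_\p=k+1$, so $[H_k^\infty,P_k^k]\neq 0$. The correct identity is
\[
\bra{\chi}H_k^\infty-E_k^\infty\ket{\chi}
=\bra{\chi}[H_k^\infty,\sigma_\p^x]P_k^k\ket{\psi_k^\infty}
-\Omega\bra{\chi}(\sigma_\p^-)^2P_{k+1}^{k+1}\ket{\psi_k^\infty}\,.
\]
By the sign structure of \cref{lem:GroundState_cpy}, $\ket{\chi}$ has components $(-1)^{|\vec{m}|+1}\times(\text{positive})$. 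The vector $(\sigma_\p^-)^2P_{k+1}^{k+1}\ket{\psi_k^\infty}$ has components $(-1)^{|\vec{m}|}\times(\text{nonnegative})$. Hence the second term is nonnegative and cannot be dropped from an upper bound. It is also controlled by the weight $\lVert P_{k+1}^{k+1}\ket{\psi_k^\infty}\rVert$, not by $\varepsilon_k$, so it cannot be absorbed into a right-hand side carrying the factor $\varepsilon_k$.

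\textbf{How to repair it.} Bound the extra term by $\Omega\lVert(\sigma_\p^-)^2\rVert\,\lVert\chi\rVert\le\Omega(|\p|+1)^2\lVert\chi\rVert/4$ using \cref{lem:OperatorNorms}. After dividing by $\lVert\chi\rVert^2\ge k\varepsilon_k^2$, this is precisely the term $\Omega(|\p|+1)^2/(4\sqrt{k}\,\varepsilon_k)$ in the paper's bound on $D_k$. Since $\lVert\chi\rVert\le\sqrt{k(|\p|-k+1)}$, the quantity $|\bra{\chi}H_k^\infty-E_k^\infty\ket{\chi}|$ is still bounded by an $L$-independent constant. Your $\varepsilon^4$-type $\DE_k$ then goes through.

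Neither the gap of \cref{lem:GapStability_cpy} nor a Schur complement is needed. The intuition that the state is ``locally a ground state'' away from $\p$ is implemented entirely by the eigenvalue equation $H_k^\infty\ket{\psi_k^\infty}=E_k^\infty\ket{\psi_k^\infty}$ together with the commutator. Note also that \cref{lem:Bounds}, which you cite for the denominator, already states exactly the needed bounds on $M_k$ and $D_k$ for your $R=\Id$ state. Invoking it directly would have closed the argument.

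\textbf{What remains genuinely different.} You bound the commutator term by the norm $\lVert[H_k^\infty,\sigma_\p^x]\rVert$, which is of order $U$ because of the blockade edges at $\p$. The paper instead writes the commutator out and uses the sign structure a second time. The blockade part enters $D_k$ as $-U$ times the nonnegative matrix elements $\bra{\psi_k^\infty}P_k^k\sigma_i^+\hat n_l\sigma_j^-P_k^k\ket{\psi_k^\infty}$, so it can simply be discarded, and the detuning part is at most $\Deltamax$.

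Your repaired bound therefore has $\Deltamax$ replaced by a quantity of order $U/\sqrt{k}$. This is harmless for the lemma as stated, which only asks for an $L$-independent, strictly increasing $\DE_k$. However, your $\DE_k$ deteriorates as $U$ grows and tends to zero as $U\to\infty$. The paper's $\DE_k$ depends only on $\Omega$, $k$, $\Deltamax$ and $|\p|$, and so is uniform in the regime $U\gg\Deltamax$.
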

This lemma is proven in \cref{subsec:LowerBound}. Note that the lower bound
$\DE_1(\varepsilon_1)$ (the case for $k = 1$) \emph{would} be positive, if we
could show that $\varepsilon_1 = \lVert P_1^1\ket{\psi_1^\infty}\rVert$ is not
smaller than a positive constant independent of the system size.
This fact is asserted by the final \cref{lem:Induction_cpy}, which we prove in
\cref{subsec:Induction}. The idea is to use an induction in $k$ for values $k
\le \nO$ starting at $k = \nO$.  This requires again -- directly and
indirectly -- much of the toolbox developed so far, and this is the reason why
the previous \cref{lem:GapStability_cpy,lem:GroundState_cpy,lem:LowerBound_cpy} were
formulated for the general case of nonnegative, integer parameters $k$ and $q$.
The lemma asserts the following:
\begin{lemma}[Single-excitation weight]
    \label{lem:Induction_cpy}
    There exist $\OSLE > 0$, $\LOLE \in \mathbb{N}$, such that for all $\Omega
    \in (0, \OSLE)$, there exists a constant $\varepsilon_\lb > 0$ such that,
    for all $L \ge \LOLE$, one has $\lVert P_1^1\ket{\psi_1^\infty}\rVert =
    \varepsilon_1 \ge \varepsilon_\lb$.
\end{lemma}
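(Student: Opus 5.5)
We propose to prove \cref{lem:Induction_cpy} by a downward induction in $k$, starting at $k=\nO$ and ending at $k=1$. The inductive statement is: for each $k\in\{1,\dots,\nO\}$ there is a constant $\varepsilon^{(k)}_\lb>0$, independent of $L$, with $\lVert P_k^k\ket{\psi_k^\infty}\rVert\ge\varepsilon^{(k)}_\lb$. We then take $\varepsilon_\lb:=\varepsilon^{(1)}_\lb$.

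\emph{Base case $k=\nO$.} For $\Omega=0$, every classical ground state has exactly $\nO$ excitations on $\p$. This suggests that the ground state of $H_{\nO}^\infty$ is dominated by its weight on $\H_{\nO}^{\nO}$. To make this rigorous, we use \cref{lem:GapStability_cpy} with $(k,q)=(\nO,\infty)$: it gives a gap of at least $\Deltamin/2$ uniformly in $L$. \cref{lem:GroundState_cpy} then places the ground state space inside $\H_{\nO}^\infty$, and for $\Omega\to0$ the spectral flow (quasi-adiabatic continuation) connects it to the classical ground space $\G_0\cap\H_{\nO}^\infty\subset\H_{\nO}^{\nO}$. Because the spectral-flow unitary is quasi-local, the reduced state on the finite plaquette $\p$ changes only by $O(\Omega)$ uniformly in $L$. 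This yields $\lVert P_{\nO}^{\nO}\ket{\psi_{\nO}^\infty}\rVert^2\ge 1-C\Omega$. The bound rests only on the expectation value of the local projector $P_{\nO}^{\nO}$, which is supported on $\p$.

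\emph{Induction step $k\to k-1$.} Write $\ket{\psi_{k-1}^\infty}=P_{k-1}^{k-1}\ket{\psi_{k-1}^\infty}+P_k^\infty\ket{\psi_{k-1}^\infty}$. By \cref{lem:GroundState_cpy}, all coefficients have sign $(-1)^{|\vec n|}$ with nonvanishing modulus, so the two pieces cannot cancel. We argue by contradiction: suppose $\varepsilon_{k-1}=\lVert P_{k-1}^{k-1}\ket{\psi_{k-1}^\infty}\rVert$ were small. The variational principle then gives $E_{k-1}^\infty\ge E_k^\infty-O(\varepsilon_{k-1})$.

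On the other hand, we build a trial state $\ket{\phi}=\ket{\psi_k^\infty}+t\,\Omega\,P_{k-1}^{k-1}\sigma^x_\p P_k^k\ket{\psi_k^\infty}$ with a suitably chosen sign of $t$. Its energy is $E_k^\infty-c\,t\,\Omega\lVert P_{k-1}^{k-1}\sigma^x_\p P_k^k\psi_k^\infty\rVert^2+O(t^2)$. The inductive hypothesis keeps the weight on $\H_k^k$ bounded below. The sign structure of \cref{lem:GroundState_cpy} guarantees that $P_{k-1}^{k-1}\sigma_\p^x P_k^k\ket{\psi_k^\infty}$ has no cancellations, so its norm is at least $\varepsilon^{(k)}_\lb$ up to a constant factor. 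Hence $E_{k-1}^\infty\le E_k^\infty-c'(\varepsilon^{(k)}_\lb)^2$ with $c'$ independent of $L$.

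Together with \cref{lem:LowerBound_cpy} applied at level $k$, these two estimates force $\varepsilon_{k-1}\ge\varepsilon^{(k-1)}_\lb>0$. Concretely, $\varepsilon^{(k-1)}_\lb$ is obtained by inverting the strictly increasing function $\DE$, so it is expressed through $\Omega$, $\Deltamin$, $U$ and $\varepsilon^{(k)}_\lb$ only.

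\emph{Main obstacle.} We expect the main obstacle to be controlling the energy shift of the trial state uniformly in $L$. The diagonal energy cost of moving from $k$ to $k-1$ excitations on $\p$ is a local quantity, of order $\Deltamin$ or $U$, and it must be compared with the gain $\sim\Omega^2/\text{cost}$ from the hopping term.

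The local-gap estimate to try first is a local Schrieffer--Wolff bound. We would use the gap of \cref{lem:GapStability_cpy} in the form $\lVert P_{k-1}^{k-1}(H-E)^{-1}\rVert\le 2/\Deltamin$ to bound the cost, together with a positive off-diagonal overlap guaranteed by the Perron--Frobenius sign structure. This produces constants that degrade with each step, but since there are only $\nO=12$ steps the result remains a positive $L$-independent $\varepsilon_\lb$. We restrict to $\Omega<\OSLE:=\min\{\OOLA,\OOLB,\OOLD,\dots\}$ and take $\LOLE$ as the maximum of the corresponding system sizes.
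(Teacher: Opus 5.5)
Your argument is correct, and its induction step is the paper's. The variational bound $E_k^\infty-E_{k-1}^\infty\le\bra{\psi_{k-1}^\infty}H_k^\infty-H_{k-1}^\infty\ket{\psi_{k-1}^\infty}\le(|\p|+1)\Omega\,\varepsilon_{k-1}$ follows from plain Cauchy--Schwarz, with no contradiction or sign argument needed. Combined with \cref{lem:LowerBound_cpy} it gives $\varepsilon_{k-1}\ge\DE_k(\varepsilon^{(k)}_\lb)/[(|\p|+1)\Omega]$. So you evaluate $\DE_k$ rather than invert it, and no $U$ enters.

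Your trial state $\ket{\psi_k^\infty}+t\Omega\,\sigma_\p^-P_k^k\ket{\psi_k^\infty}$ is exactly the variational state behind \cref{lem:LowerBound_cpy}, so your ``main obstacle'' is already settled by citing that lemma. Had you tried to resolve it yourself, the proposed resolvent bound $\lVert P_{k-1}^{k-1}(H-E)^{-1}\rVert\le 2/\Deltamin$ points the wrong way. It lower-bounds the excitation cost and hence only upper-bounds the energy gain. What is actually needed is an $L$-uniform \emph{upper} bound on $D_k$. The paper gets this from the commutator identity for $D_k$ together with the Perron--Frobenius signs, which make the $U$-contributions nonpositive (\cref{lem:Bounds}).

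The genuine difference is the base case. You use quasi-adiabatic continuation within $\mathcal{H}^\s$, where \cref{lem:GapStability_cpy} provides the gap $\Deltamin/2$ uniformly along $[0,\Omega]$. This gives $\ket{\psi_{\nO}^\infty}=U(\Omega)\ket{\phi}$ with $\ket{\phi}\in\mathcal{G}_0\cap\mathcal{H}^\s\subset\mathcal{H}_{\nO}^{\nO}$. Lieb--Robinson quasi-locality of the generator then yields $\lVert U^\dagger P_{\nO}^{\nO}U-P_{\nO}^{\nO}\rVert\le C\Omega$ uniformly in $L$. This works, but it imports automorphic-equivalence machinery the paper never needs; \cref{lem:SpectralFlow} uses the spectral flow only on a finite-dimensional space, without any locality. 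Your constant $C$ also depends on the Lieb--Robinson velocity, hence on $U$.

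The paper stays purely variational. The gap $\Delta_\circ\ge\Deltamin/2$ of $H_\circ=[H_{\nO}^{\nO}]^\s$, together with $\mathcal{G}_\circ\le\mathcal{H}_{\nO}^{\nO}$, gives $\varepsilon_{\nO}^2\ge1-\Delta_{\psi_{\nO}^\infty}/\Delta_\circ$. It then uses the identity $\Delta_{\psi_\circ}+\Delta_{\psi_{\nO}^\infty}=-\Omega\bra{\psi_{\nO}^\infty}\sigma_\p^x\ket{\psi_{\nO}^\infty}\le|\p|\Omega$, in which all extensive energies cancel exactly. The result is the explicit, $U$-independent bound $\varepsilon_{\nO}^2\ge1-2|\p|\Omega/\Deltamin$.

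Your route buys $O(\Omega)$ control of every local observable, which is more than the lemma requires. The paper's route is elementary, self-contained, and gives explicit constants.
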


\noindent We now have the tools to prove \cref{pro:GapBetweenSS}:
\begin{proof}[Proof of \cref{pro:GapBetweenSS}.]

    Let us fix an arbitrary asymmetric sector $\H^\vsigma$.  We want to use
    \cref{lem:LowerBound_cpy,lem:Induction_cpy}. Since both introduce their own
    upper and lower bounds on $\Omega$ and $L$, respectively, we set $\OSPB :=
    \min\{\OOLD, \OSLE\}$ and $\LOPB := \max\{\LOLD, \LOLE\}$.  We then have
    the following chain of lower bounds:
    \begin{align}
        E^\vsigma - E^\s 
        \stackrel{(1)}{\ge} E_1^\infty - E^\s 
        \stackrel{(2)}{=} E_1^\infty - E_0^\infty 
        \stackrel{(3)}{\ge} \DE_1(\varepsilon_1) 
        \stackrel{(4)}{\ge} \DE_1(\varepsilon_\lb) 
        =: \DElb
        \stackrel{(5)}{=}\cL
        \stackrel{(6)}{>} 0
        \,.
    \end{align}
    Let us go through each (in)equality step by step:
    \begin{enumerate}[label=(\arabic*)]

        \item Since $\H^\vsigma$ is asymmetric, there is at least one plaquette
            $\p \in \P$ with $\sigma_\p = -1$; fix this plaquette to define
            $E_1^\infty$. \cref{lem:AsymmetricSectors_cpy} then asserts that
            $E^\vsigma \ge E_1^\infty$.

        \item Remember that $E_0^\infty \equiv E$. Using the theorem of
            Perron--Frobenius, it is easy to show that $E^\s = E$ (this is a
            general property of the full Hamiltonian $H_0^\infty \equiv H$ and
            valid for all $\Omega > 0$ and $L \in \mathbb{N}$, see the proof of
            \cref{lem:GroundState_cpy} in \cref{subsec:GroundState}). Hence we
            obtain $E^\s = E_0^\infty$.

        \item This follows directly from \cref{lem:LowerBound_cpy} for $k=1\leq
            n_0=12$ with $\varepsilon_1 = \lVert
            P_1^1\ket{\psi_1^\infty}\rVert$ and $\DE_1:\;\mathbb{R}_{\ge 0}
            \rightarrow \mathbb{R}_{\ge 0}$ a strictly increasing function.

        \item \cref{lem:Induction_cpy} asserts the existence of a positive
            constant $\varepsilon_\lb = \cL$ with $\varepsilon_1 \ge
            \varepsilon_\lb > 0$ for $L\geq L_3$. Since $\DE_1$ is increasing,
            it follows $\DE_1(\varepsilon_1) \ge \DE_1(\varepsilon_\lb)$.

        \item \cref{lem:LowerBound_cpy} asserts that $\DE_1$ is independent of $L\geq L_3$.
            \cref{lem:Induction_cpy} asserts that $\varepsilon_\lb$ is independent of $L\geq L_3$.

        \item Since $\DE_1$ is a strictly increasing function on
            $\mathbb{R}_{\geq 0}$ with values in $\mathbb{R}_{\geq 0}$
            (\cref{lem:LowerBound_cpy}), and $\varepsilon_\lb>0$
            (\cref{lem:Induction_cpy}), it follows that
            $\DElb=\DE_1(\varepsilon_\lb)>0$.
        
    \end{enumerate}
    This concludes the proof.
\end{proof}

\section{Technical part}
\label{sec:TechnicalPart}

At this point we have reduced \cref{the:ExcitationGap} to the five
\cref{lem:GapStability_cpy,lem:GroundState_cpy,lem:AsymmetricSectors_cpy,lem:LowerBound_cpy,lem:Induction_cpy},
which are finally proven in their respective
\cref{subsec:GapStability,subsec:GroundState,subsec:AsymmetricSectors,subsec:LowerBound,subsec:Induction}
below. Note that \cref{lem:GapStability_cpy,lem:GroundState_cpy,lem:LowerBound_cpy} are
-- directly or indirectly -- needed to prove other lemmas (see
\cref{fig:structure}).

\subsection{Gap stability (Michalakis--Zwolak)}%
\label{subsec:GapStability}

\noindent In this subsection, we prove \cref{lem:GapStability_cpy}, here
reprinted for the reader's convenience:
\setcounter{lemma}{0}
\renewcommand{\theHlemma}{technical.\arabic{lemma}}
\begin{lemma}[Gap stability]
    \label{lem:GapStability}

    There exist $\OOLA > 0$, $\LOLA \in \mathbb{N}$, such that for all
    $\Omega \in [0,\OOLA]$, $L \ge \LOLA$, and for all nonnegative integers $k$, $q$ 
    with $k \le q$, the following holds: 
    Within each symmetry sector $\H^\vsigma$, the
    constrained Hamiltonian $[H_k^q]^\vsigma$ is gapped by at least
    $\Deltamin/2$ and the ground state space $[\mathcal{G}_k^q]^{\vsigma}$ is
    $4$-fold (topologically) degenerate.

\end{lemma}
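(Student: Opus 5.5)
We will obtain the lemma by applying the Michalakis--Zwolak stability theorem~\cite{michalakis2013} inside each symmetry sector, treating $H_0$ restricted to $\H^\vsigma$ as the unperturbed Hamiltonian and the transverse-field terms as the perturbation. Concretely, we write $[H_k^q]^\vsigma = H_0 P^\vsigma + \Omega V_k^q P^\vsigma$ with $V_k^q := \sum_{i\notin\p}\sigma_i^x + P_k^q\sigma_\p^x P_k^q$. The perturbation is a sum of uniformly bounded terms supported on $\cL$-sized regions: the single-qubit $\sigma_i^x$ are trivially local, and $P_k^q\sigma^x_\p P_k^q$ acts only on the $36$ qubits of $\p$. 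Its norm is $\le 36$ for every $k\le q$, so all bounds are uniform in $(k,q)$. This uniformity is the reason a single $\OOLA$ works for all $k,q$.

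The unperturbed part is handled as follows.
\begin{enumerate}
\item Since $H_0$ and $H_0^\mathrm{cl}$ are diagonal, $[H_0]^\vsigma$ is the restriction of a frustration-free, commuting sum of local terms. We rewrite $H_0$ (up to a constant) as a sum of local commuting projector-like terms, each centered on a site or link and measuring violation of the local loop constraint. The ground-state space of each local term contains the global ground states. This uses $U\gg\Deltamax$, $\Deltamax=2\Deltamin$, and the MWIS structure.
\item The gap of $H_0$ above the closed-loop manifold $\G_0$ is at least $\Deltamin$. Any non-MWIS pattern either violates a blockade (cost $\gtrsim U$) or misses weight at least $\Deltamin$, since all weights are integer multiples of $\Deltamin$.
\item Within $\H^\vsigma$, the ground space is $\G_0\cap$ (the appropriate $U_p$-eigenspaces). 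The symmetric combinations $\frac{1}{2}(1+\sigma_pU_p)$ over classical loop configurations yield loop condensates twisted by signs $\sigma_p$. The constraint $\prod_p\sigma_p=1$ ensures that these are nonempty with exactly four homology classes on the torus. This gives the $4$-fold degeneracy at $\Omega=0$.
\item The structure within $\H^\vsigma$ is exactly that of a (sign-twisted) toric code plaquette stabilizer acting on the loop space. We therefore need to verify the Michalakis--Zwolak hypotheses on the sector-restricted local ground-state projectors of balls. These hypotheses are frustration-freeness, a local gap, and local topological quantum order (LTQO) together with local gap conditions. We would verify them by mapping, via a local unitary, to the toric code with fixed plaquette eigenvalues and then invoking the known LTQO of the toric code.
\end{enumerate}

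The main obstacle is that $P^\vsigma$ is a nonlocal projector. Michalakis--Zwolak is formulated for a local Hamiltonian on a tensor-product space, not on a constrained subspace. We would first try to replace the sector restriction by adding energy penalties $-J\sum_p \sigma_pU_p$ to $H_0$. These terms are local, commute with $H_k^q$ (by \cref{lem:CommutationsRelations}) and with each other, and reproduce exactly the spectrum of $[H_k^q]^\vsigma$ on $\H^\vsigma$, while pushing other sectors up. The resulting Hamiltonian is then an honest local frustration-free commuting Hamiltonian. For it, LTQO must be checked: for any ball $B$ and any operator $O$ supported inside a smaller ball, $P_BOP_B$ must be $c_OP_B$ up to an error decaying in the distance. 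The technical check would be delegated to \cref{lem:ConditionsForGapStability}. An alternative to the penalty construction is to work directly with the local ground-state projectors on ball-shaped regions $B$, which include the $U_p$ for $p\subset B$.

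Once the hypotheses hold, the theorem gives $\OOLA$ and $\LOLA$, both independent of $(k,q,\vsigma)$ because all constants are translation-uniform. The conclusion is that the perturbed spectrum splits into a $4$-dimensional band of width $\delta_L\to0$, separated by at least $\Deltamin-\cL\,\Omega\ge\Deltamin/2$ for $\Omega\le\OOLA$. The one exception is the translation-breaking term on $\p$, which is a single local defect; we would argue that it does not affect the stability theorem since its bounds only use local norms.
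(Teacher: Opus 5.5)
Your approach is essentially the paper's. Your penalty $-J\sum_p\sigma_pU_p$ is, up to a constant, the auxiliary term $H^{[\vsigma]}(\omega)$ of \cref{eq:AnciHam} with $\omega=2J$. The paper also applies Michalakis--Zwolak to $H_0+H^{[\vsigma]}(\omega)$ with $V_k^q(\Omega)$ as the perturbation. It obtains LTQO by conjugating with a product of $\sigma^z$'s along dual paths joining the $\sigma_p=-1$ plaquettes, which reduces to the symmetric sector; this matches your ``local unitary'' idea.

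There is, however, a genuine gap: the step that brings the conclusion back to $[H_k^q]^\vsigma$, which the paper handles separately in step (3) via \cref{lem:SpectralFlow}. Michalakis--Zwolak only says that the \emph{penalized} Hamiltonian $\tilde H(\Omega)$ has a $4$-dimensional low band below a gap. It says nothing about which symmetry sector that band lies in. Your claim that the penalty keeps ``pushing other sectors up'' is evident only at $\Omega=0$. The stability theorem requires an $L$-independent $J$, and the $\Omega$-induced energy shifts of different sectors are not a priori controlled against a fixed cost $4J$. So the low band could, as far as your argument shows, move partly into some $\H^{\vsigma'}\neq\H^\vsigma$. Then $[H_k^q]^\vsigma=P^\vsigma\tilde H P^\vsigma$ (up to a constant) would contain fewer than four low-lying states, and neither the degeneracy nor the gap would follow.

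The fix is a continuity argument:
\begin{itemize}
\item $\tilde H(\Omega)$ commutes with $P^\vsigma$, so the spectral projector $P_{\mathrm{low}}(\Omega)$ onto the band below the gap also commutes with $P^\vsigma$.
\item The gap is open uniformly for all $\Omega'\in[0,\Omega]$, since the theorem applies to every perturbation strength below $J_0$. Hence $P_{\mathrm{low}}$ is norm-continuous (it is a Riesz projection).
\item It follows that the rank of the projector $P^\vsigma P_{\mathrm{low}}(\Omega)$ is constant. It therefore equals its value $4$ at $\Omega=0$, so the whole band stays in $\H^\vsigma$.
\end{itemize}

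A smaller point: the theorem gives a perturbed gap of at least $\gamma/2$, where $\gamma$ is the unperturbed gap; it does not give a bound of the form $\Deltamin-O(\Omega)$. To land on $\Deltamin/2$, you must fix the penalty so that a pair of symmetry violations costs at least $\Deltamin$. Taking $J=\Deltamin/4$, i.e.\ the paper's $\omega=\Deltamin/2$, gives $\gamma=\min\{\Deltamin,4J\}=\Deltamin$.
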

\noindent Note that the parameter $\Deltamin = \cL$ is independent of the
system size. The proof generalizes the special case for $H^\s$ studied by some
of us in Ref.~\cite{maier2025}:

\begin{proof}[Proof of \cref{lem:GapStability}.]

    The proof splits into four steps. In step (1) we add a gapped auxiliary
    term $H^{[\vsigma]}(\omega)$ to the classical Hamiltonian which reorders
    the symmetry sectors in the spectrum such that a prescribed sector
    $\H^\vsigma$ contains the global ground state. In step (2) we apply a
    theorem on gap stability by Michalakis and Zwolak~\cite{michalakis2013} to
    this auxiliary Hamiltonian to show that the gap remains open for small but
    finite transverse fields $\Omega>0$. In step (3) we show that for
    $\Omega>0$ the ground state manifold remains in the prescribed sector
    $\H^\vsigma$. Finally, in step (4) we use this to show that the original
    Hamiltonian $[H_k^q]^\vsigma$ (without the auxiliary term) inherits all
    these properties (i.e., is gapped with ground state manifold in
    $\H^\vsigma$).
    \begin{enumerate}[label=(\arabic*)]

    \item
    Consider an arbitrary symmetry sector $\H^\vsigma$ with eigenvalues
    $\sigma_p$ of $U_p$ for $p \in \P$, and fix a reference plaquette $\p \in
    \P$. We define the auxiliary Hamiltonian 
    \begin{align}
        \tilde H_k^q(\Omega, \omega; \vsigma) 
        := H_k^q(\Omega)\,+\,&H^{[\vsigma]}(\omega)
        \nonumber\\
        \text{with}\qquad
        &H^{[\vsigma]}(\omega)
        := \omega \sum_{p \in \P} (\mathds{1} - \hP_p^{[\sigma_p]}) 
        = \omega \sum_{p \in \P}\frac{\mathds{1} - \sigma_pU_p}{2}
        \,,
        \label{eq:AnciHam}
    \end{align}
    where $\hP_p^{[\sigma_p]} := (\mathds{1} + \sigma_pU_p)/2$ is a
    projector; we always consider $\omega \ge 0$.
    For $\Omega = 0$, we denote the unperturbed Hamiltonian by $\tilde
    H_0^{[\vsigma]}(\omega) := \tilde H_k^q(0, \omega; \vsigma)$, in particular
    $\tilde H_0^{[\s]}(\omega) = \tilde H_k^q(0, \omega; \s)$ for the symmetric
    sector. Correspondingly, we define $V_k^q(\Omega) := \Omega \left[\sum_{i
    \notin \p} \sigma_i^x + \sum_{i \in \p} P_k^q \sigma_i^x P_k^q \right]$ (we
    will treat this as a perturbation below). With these definitions, we can
    write $\tilde H_0^{[\vsigma]}(\omega) = H_0 + H^{[\vsigma]}(\omega)$ and
    $\tilde H_k^q(\Omega, \omega; \vsigma) = \tilde H_0^{[\vsigma]}(\omega) +
    V_k^q(\Omega)$. Since $H_k^q$ and $U_p$ commute
    (\cref{lem:CommutationsRelations}), $H_k^q$ and $\hP_p^{[\sigma_p]}$
    also commute. The symmetries on different plaquettes commute as well
    [\cref{eq:symmetries}], thus $\tilde H_k^q$ and $\hP_p^{[\sigma_p]}$
    commute. Hence $\tilde H_k^q$ is block-diagonal in the symmetry sectors.
    
    Note that $\tilde H_0^{[\vsigma]}(\omega)$ can be diagonalized exactly
    (since $\Omega=0$); it is then easy to see that the Hamiltonian is gapped
    and the gap is given by $\min\{\Deltamin, 2\omega\}$. Furthermore, the
    ground state space is given by $\tilde{\mathcal{G}}_0^{[\vsigma]} =
    \mathcal{G}_0 \cap \H^\vsigma$, where $\mathcal{G}_0$ is the ground state
    space of $H_0$ (the classical part), and the sector $\H^\vsigma$ is the
    ground state space of $H^{[\vsigma]}(\omega)$. Since the blockade terms in
    $H_0$ enforce a loop constraint, $\tilde H_0^{[\vsigma]}(\omega)$ has an
    (exact) $4$-fold topological ground state degeneracy for periodic boundary
    conditions: $\dim\tilde{\mathcal{G}}_0^{[\vsigma]} = 2^2 = 4$
    (\cref{lem:Stabilizer}).

    \item 
    We now want to transition from the fixpoint Hamiltonian
    $\tilde H_0^{[\vsigma]}(\omega)$ to the full Hamiltonian $\tilde H_k^q(\Omega,
    \omega; \vsigma)$ by switching on $\Omega$. Since the latter is no longer
    exactly diagonalizable, we need more sophisticated tools to keep the ground
    state space under control.
    The gap stability theorem by Michalakis and Zwolak~\cite[Theorem
    1]{michalakis2013} asserts the stability of a spectral gap in the
    thermodynamic limit ($L\to\infty$) under arbitrary, weak, local
    perturbations, provided that the unperturbed Hamiltonian satisfies six conditions:
    The Hamiltonian must be spatially local (A1), with periodic boundaries (A2),
    frustration-free (A3), and gapped (A4). Furthermore, the Hamiltonian must
    satisfy the so-called \emph{local gap condition} (A5) and exhibit
    \emph{local topological quantum order} (A6). In
    \cref{lem:ConditionsForGapStability} we show that $\tilde
    H_0^{[\vsigma]}(\omega)$ satisfies all six conditions.
    
    Let us henceforth set $\omega = \Deltamin/2$ and consider $V_k^q(\Omega)$
    as a local perturbation of $\tilde H_0^{[\vsigma]}(\omega)$.  The gap
    stability theorem then asserts that $\tilde H_k^q(\Omega,\omega;\vsigma)$
    is gapped, with a gap that is lower-bounded by $\Deltamin/2$ for
    sufficiently small but finite fields $\Omega \le \Omega^*$ and sufficiently
    large system size $L \ge L^*$. Here, the constants $\Omega^*>0$ and $L^*$
    are determined by the constants $J_0$ and $L_0$ of Ref.~\cite{michalakis2013}; their
    exact values are not important for us. In
    \cref{lem:ConditionsForGapStability} we show that $\Omega^*$ and $L^*$ can
    be chosen uniform in $k \le q$ and $\H^\vsigma$. We then set $\OOLA =
    \Omega^*$ and $\LOLA = L^*$.  
    We denote the ground state space of $\tilde H_k^q(\Omega,\omega;\vsigma)$
    by $\tilde \G_k^q(\Omega,\omega;\vsigma)$. Since $\tilde H_k^q$ remains
    gapped for $\Omega \le \OOLA$, the ground state degeneracy remains
    $\dim[\tilde
    \G_k^q(\Omega,\omega;\vsigma)]=\dim\tilde{\mathcal{G}}_0^{[\vsigma]}=4$ for
    $\Omega \le \OOLA$.  

    \item
    Our goal is to show that these statements are also valid for the
    constrained Hamiltonian $[H_k^q]^\vsigma(\Omega)$. To do so, we must show
    that the ground state space $\tilde\G_k^q(\Omega,\omega;\vsigma)$ is a
    subspace of $\H^\vsigma$ for $0\leq\Omega\leq\Omega_1'$.
    Since $\tilde H_0^{[\vsigma]}(\omega)$ is frustration-free, and only states
    in $\H^\vsigma$ make the positive-semi-definite auxiliary term
    $H^{[\vsigma]}(\omega)$ vanish, we know that for $\Omega=0$ the ground
    state space $\tilde{\mathcal{G}}_0^{[\vsigma]} \equiv
    \tilde\G_k^q(0,\omega;\vsigma)$ indeed lies in the symmetry sector
    $\H^\vsigma$. 
    Since $\tilde H_k^q(\Omega,\omega;\vsigma)$ is block-diagonal in the
    symmetry sectors, a continuous function of $\Omega$, and gapped for
    $0<\Omega \le \OOLA$ according to step (2), it follows that the ground
    state space $\tilde\G_k^q(\Omega,\omega;\vsigma)$ remains in the symmetry
    sector $\H^\vsigma$ for all $0\leq \Omega \le \OOLA$. A more formal proof
    of this claim follows via spectral flow arguments applied on the 
    parametric Hamiltonian $\tilde H_k^q(\Omega) = \tilde H_0^{[\vsigma]} + V_k^q(\Omega)$ 
    with projector $P^\vsigma$ (\cref{lem:SpectralFlow}).

    \item
    From \cref{eq:AnciHam} it follows that $P^\vsigma
    H^{[\vsigma]}(\omega)P^\vsigma =H^{[\vsigma]}(\omega)P^\vsigma=0$ and
    therefore
    \begin{align}
        P^\vsigma\tilde H_k^q(\Omega,\omega;\vsigma)P^\vsigma
        =\tilde H_k^q(\Omega,\omega;\vsigma)P^\vsigma
        = H_k^q(\Omega)P^\vsigma
        \stackrel{\text{def}}{=} [H_k^q]^\vsigma(\Omega)\,.
    \end{align}
    Note that the ground state energy of $\tilde H^q_k$ is always negative (use
    a superposition of classical ground states in $\H^\vsigma$ to show that the
    expectation value of $\tilde H^q_k$ is negative).  Since $\tilde H^q_k$ is
    block-diagonal in the symmetry sectors, and
    $\tilde\G_k^q(\Omega,\omega;\vsigma)\le\H^\vsigma$ for $\Omega \le \OOLA$
    from step (3), it follows immediately that
    $\tilde\G_k^q(\Omega,\omega;\vsigma)$ is also the ground state space of the
    symmetry-projected Hamiltonian $[H_k^q]^\vsigma(\Omega)$, i.e.,
    $\tilde\G_k^q(\Omega,\omega;\vsigma) = [\G_k^q]^\vsigma(\Omega)$.
    Furthermore, $[H_k^q]^\vsigma$ is gapped by at least $\Deltamin/2$ (uniform
    in $k$, $q$ and $\vsigma$) per step (2).

    \end{enumerate}
    This concludes the proof.
\end{proof}

\subsection{Ground state (Perron--Frobenius)}%
\label{subsec:GroundState}

In this subsection, we use the well-known theorem by Perron and
Frobenius~\cite{perronfrobenius1912} to derive properties of the ground
state(s) $\ket{\psi_k^q}$ of the Hamiltonian $H_k^q$ (which cannot be
diagonalized exactly). We prove the following lemma:
\setcounter{lemma}{1}
\begin{lemma}[Ground state]
    \label{lem:GroundState}
    There exist $\OOLB > 0$, $\LOLB \in \mathbb{N}$, such that for all $\Omega
    \in (0, \OOLB]$, $L \ge \LOLB$, and for all nonnegative integers $k$, $q$
    with $k \le \nO \le q$, the following holds:
    \begin{itemize}
   
    \item For every symmetry sector $\H^\vsigma$,
    the ground state space $[\G_k^q]^\vsigma$ of the constrained
    Hamiltonian $[H_k^q]^\vsigma$ is a subspace of $\H_k^q$.
    
    \item For $k < q$, the ground state $\ket{\psi_k^q}$ of the Hamiltonian $H_k^q$
    is non-degenerate and lies within the symmetric sector $\H^\s$, i.e.,
    $\ket{\psi_k^q} = \ket{[\psi_k^q]^\s}$ is also the unique ground state of
    $[H_k^q]^\s$. 
    
    In the natural basis, it takes the form $\ket{\psi_k^q} = \sum_{\vec{n} \in
    [\mathbb{Z}_2^{|\V|}]_k^q}(-1)^{|\vec{n}|}c_{\vec{n}}\ket{\vec{n}}$ with
    $c_{\vec{n}} > 0$, where $|\vec{n}| = \sum_{i \in \V} n_i$ denotes
    the total number of excitations.

    \end{itemize}
\end{lemma}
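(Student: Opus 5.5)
The plan is to prove the two bullets separately: the first by a spectral-flow argument on top of \cref{lem:GapStability}, the second by a Perron--Frobenius argument after a sign gauge transformation.

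\textbf{Gauge transformation.} Conjugate $H_k^q$ with the unitary $Z := \prod_{i\in\V}\sigma_i^z$. This maps $\sigma_i^x \mapsto -\sigma_i^x$, leaves $H_0$, $P_k^q$ and all $\hat n_i$ invariant, and multiplies $\ket{\vec n}$ by $(-1)^{|\vec n|}$. Hence
\[
\tilde H := Z H_k^q Z = H_0 - \Omega\sum_{i\notin\p}\sigma_i^x - \Omega P_k^q \sigma_\p^x P_k^q .
\]
For $\Omega>0$ all off-diagonal matrix elements of $\tilde H$ in the natural basis are nonpositive. The matrix $c\Id-\tilde H$ is therefore entrywise nonnegative for large $c$. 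It is also block-diagonal with respect to $\H = \H_k^q \oplus (\H_k^q)^\perp$, because $H_\p$ preserves $n_\p$ and the constrained term acts only inside $\H_k^q$.

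\textbf{Second bullet.} On each block, irreducibility decides the conclusion. I will check that the graph whose edges are the single spin flips allowed by $\tilde H$ is connected on $[\mathbb{Z}_2^{|\V|}]_k^q$. Flips off $\p$ are unrestricted. Flips on $\p$ connect every $n_\p$ with $k\le n_\p\le q$ to its neighbours, because $P_k^q\sigma_i^xP_k^q$ connects $\vec n$ and $\vec n'$ whenever both lie in the window. If $k<q$, any two patterns can be joined: first move $n_\p$ within $[k,q]$, one site at a time, to match the target's occupation of $\p$, then fix the qubits off $\p$. Perron--Frobenius then gives a unique ground state of the $\H_k^q$-block with strictly positive coefficients $c_{\vec n}$; undoing the gauge yields the sign $(-1)^{|\vec n|}$.

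The main obstacle is showing that this block contains the global ground state, i.e.\ that every block of $(\H_k^q)^\perp$ has strictly higher energy. Those blocks are invariant sectors with fixed $n_\p = m\notin[k,q]$ on which $\tilde H$ acts as $H_\p$. I would argue variationally. The positive vector restricted to $\H_k^q$ includes configurations with $n_\p=\nO$, since $k\le\nO\le q$. I would compare it with a ground state of a fixed-$m$ sector, which is itself a positive vector by Perron--Frobenius within that sector. Then I would exploit the structure of the constraint:
\begin{itemize}
\item If $\nO$ is strictly inside the window, the window block contains $H_\p$ restricted to $n_\p=\nO$ plus additional nonpositive off-diagonal couplings.
\item A strict perturbative comparison then applies. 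The $n_\p=\nO$ sector has classical energy $E_0$. Any sector with $m\neq\nO$ has classical energy at least $E_0+\Deltamin$ (one removes or adds an excitation on $\p$ at cost $\ge\Deltamin$, or violates a blockade at cost $\sim U$).
\item The $\Omega$-corrections are handled with the gapped, Michalakis--Zwolak-stable structure from \cref{lem:GapStability}, with energy shifts controlled per plaquette, not extensively. This gives the strict inequality for small $\Omega$.
\end{itemize}
This is the delicate step: energies are extensive, so I would bound differences via a local-perturbation argument rather than absolute energies. The same Perron--Frobenius reasoning identifies the unique ground state as symmetric. Each $U_p$ permutes basis states, so $U_p\ket{\psi_k^q}$ is again a ground state with positive coefficients, and by uniqueness it equals $\ket{\psi_k^q}$. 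Hence $\sigma_p=+1$ for all $p$, and $\ket{\psi_k^q}=\ket{[\psi_k^q]^\s}$.

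\textbf{First bullet.} For every sector $\vsigma$ and every $k\le q$, including $k=q$, I would use the spectral flow of \cref{lem:SpectralFlow}, as in step (3) of the proof of \cref{lem:GapStability}, now with the commuting projector $P_k^q$. The projector $P_k^q$ commutes with $H_k^q(\Omega)$ and with $P^\vsigma$. At $\Omega=0$ the ground space $\G_0\cap\H^\vsigma$ has $n_\p=\nO$ and thus lies in $\H_k^q$. By \cref{lem:GapStability}, $[H_k^q]^\vsigma(\Omega)$ stays uniformly gapped for $\Omega\le\OOLA$, so the ground space varies continuously. Since $P_k^q$ is conserved, the ground space cannot leave $\H_k^q$. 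This holds with $\OOLB\le\OOLA$ and $\LOLB\ge\LOLA$.
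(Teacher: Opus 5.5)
You have the gauge transformation, the Perron--Frobenius step on the $\H_k^q$ block (irreducible for $k<q$), the symmetry argument, and the spectral-flow proof of the first bullet essentially as in the paper. The gap is in the step you flag as the main obstacle: showing that the \emph{global} ground state of $H_k^q$ lies in $\H_k^q$. The variational comparison you sketch does not settle it as it stands. It compares the ground energy of $H_\p$ on a fixed-$n_\p=m$ block with that of the window block, with the $\Omega$-induced shifts ``controlled per plaquette.'' Both energies are extensive. Showing that their difference stays above a positive constant uniformly in $L$ is itself a gap-stability statement, and you do not say how it follows from \cref{lem:GapStability}. The case distinction on whether $\nO$ lies strictly inside the window is also beside the point.

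The missing observation is that your own first bullet already gives this, applied to \emph{every} symmetry sector. $H_k^q$ commutes with all $P^\vsigma$ and $\H=\bigoplus_\vsigma\H^\vsigma$. So the ground states of $H_k^q$ are ground states of those sectors whose energy $[E_k^q]^\vsigma$ is minimal, and each of these lies in $[\G_k^q]^\vsigma\le\H_k^q$.

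One caveat: regarded as an operator on all of $\H$, $[H_k^q]^\vsigma=H_k^qP^\vsigma$ has eigenvalue $0$ on the complement of $\H^\vsigma$. You therefore need $[E_k^q]^\vsigma<0$, e.g.\ by evaluating on classical ground states in $\H^\vsigma$, so that its ground space is the one inside $\H^\vsigma$.

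Quantitatively, every state in $\H^\vsigma\cap(\H_k^q)^\perp$ is orthogonal to that sector's gapped low-energy band, which lies in $\H_k^q$. It therefore sits above the gap of \cref{lem:GapStability}, strictly above $[E_k^q]^\vsigma$. That is exactly the separation of the fixed-$m$ blocks you were trying to get perturbatively.

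So reorder the argument as the paper does:
\begin{enumerate}
\item Prove the first bullet for all $\vsigma$, using \cref{lem:GapStability} uniformly in $\vsigma$ together with \cref{lem:SpectralFlow}.
\item Deduce that the global ground state lies in $\H_k^q$, so it coincides with the Perron--Frobenius vector of that block.
\item Apply the symmetry argument.
\end{enumerate}
This also shows that the gaps in the \emph{asymmetric} sectors are genuinely used here, which is precisely what fails for non-Abelian $G$ (cf.\ \cref{lem:GroundState_prime}).
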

\noindent The proof again generalizes a special case studied by some of us in
Ref.~\cite{maier2025}:
\begin{proof}[Proof of \cref{lem:GroundState}.]

    Consider an arbitrary symmetry sector $\H^\vsigma$, and let $k$ and $q$ be
    nonnegative integers such that $k \le \nO \le q$. From
    \cref{lem:GapStability}, we know that $[H_k^q]^\vsigma \equiv P^\vsigma
    H_k^q P^\vsigma$ is gapped for $\Omega \le \OOLA$ and $L \ge \LOLA$, with a
    ground state space $[\G_k^q]^\vsigma$ that is $4$-fold degenerate. Let us
    therefore set $\OOLB:=\OOLA$ and $\LOLB:=\LOLA$. The proof splits into four
    steps:
    \begin{enumerate}[label=(\arabic*)]

    \item
    In a first step, we want to show that the ground state space
    $[\G_k^q]^\vsigma$ lies within the subspace $\H_k^q$ for all $\Omega \le
    \OOLB$. [Note that this is trivial for the special case $(k, q) = (0,
    \infty)$.]
    We start with the classical case ($\Omega=0$) where $H^q_k(0)=H_0$ is
    exactly solvable and it is easy to show that $[\G_k^q]^\vsigma(0) = \G_0
    \cap \H^\vsigma$ (\cref{lem:Stabilizer}). These classical ground states
    have $\nO$ excitations on $\p$ (by definition of $n_0$), so that
    $[\G_k^q]^\vsigma(0) \le \H_{\nO}^{\nO} \le \H_k^q$ (since $k \le \nO \le
    q$) and therefore $[\G_k^q]^\vsigma(0) = P_k^q[\G_k^q]^\vsigma(0)$  with
    $P_k^q$ the projector onto $\H_k^q$. 
    Now note that the parametric Hamiltonian $[H_k^q]^\vsigma(\Omega) =
    H_0P^\vsigma + V_k^q(\Omega)P^\vsigma$ commutes with the projector $P_k^q$
    (\cref{lem:CommutationsRelations}) and remains gapped for all
    $0\leq\Omega\leq\Omega_2'$. Hence the same continuity argument as in step
    (3) of the proof of \cref{lem:GapStability} yields
    $[\G_k^q]^\vsigma(\Omega) = P_k^q[\G_k^q]^\vsigma(\Omega)$ and therefore
    $[\G_k^q]^\vsigma \le \H_k^q$ for all $\Omega \in (0, \OOLB]$
    (\cref{lem:SpectralFlow}).
    
    Note that the ground state energy of $[H_k^q]^\vsigma$ is always negative
    (use a superposition of classical ground states in $\H^\vsigma$ to show
    that $[H_k^q]^\vsigma$ has negative expectation values).  Hence
    $[\G_k^q]^\vsigma$ is also the ground state space of the Hamiltonian
    $P_k^q[H_k^q]^\vsigma P_k^q=[H_k^q]^\vsigma P_k^q$ projected onto $\H_k^q$.
    Since this holds for every sector $\H^\vsigma$ in which $H_k^q$ is block-diagonal, 
    the global ground state $\ket{\psi_k^q}$ of $H_k^q$ lives in
    $\H_k^q$ and is therefore also the ground state of $H_k^q P_k^q$.
    
    \item
    Consider in the following $0<\Omega \leq \OOLB$ and define $Z:=
    \prod_{i\in\V}\sigma_i^z$ so that $Z\ket{\vec{n}} =
    (-1)^{|\vec{n}|}\ket{\vec{n}}$ where $|\vec{n}|=\sum_{i\in\V}n_i$ denotes
    the total number of excitations. With this, we can write $H_k^q(-\Omega) =
    ZH_k^q(\Omega)Z$ and hence $-H_k^q(-\Omega)P_k^q = -ZH_k^q(\Omega)P_k^qZ$.
    Importantly, the matrix $M_{\vec n,\vec n'}:=\bra{\vec
    n}[-H_k^q(-\Omega)P_k^q]\ket{\vec n'}$ for $\vec n,\vec n'\in
    [\mathbb{Z}_2^{|\V|}]_k^q$ is \emph{essentially nonnegative} (it has only
    nonnegative off-diagonal elements $0$ and $\Omega>0$) and is
    \emph{irreducible} in $\H_k^q$ for $k <
    q$. 
    Indeed, we have $[M^h]_{\vec{n}, \vec{n}'} \neq 0$ for any two states
    $\ket{\vec{n}}, \ket{\vec{n}'} \in \H_k^q$ where $h\in \mathbb{N}_0$ is the
    Hamming distance of their excitation patterns. Intuitively, we can always
    flip qubits one by one, while staying within the interval constrained by
    $k$ and $q$ on plaquette $\p$. Note that this argument fails for $k = q$
    since then $H_k^k(\Omega)P_k^k = H_\p(\Omega)P_k^k$ is diagonal on
    plaquette $\p$.
    
    \item
    The irreducibility and the essential nonnegativity of $M$ allow us to
    invoke the theorem by Perron and Frobenius~\cite{perronfrobenius1912}.  It
    asserts that the largest eigenvalue $-E_k^q(\Omega)$ of
    $-H_k^q(-\Omega)P_k^q = -ZH_k^q(\Omega)P_k^qZ$ is \emph{non-degenerate},
    and that its eigenvector $Z\ket{\psi_k^q(\Omega)} \equiv \sum_{\vec{n} \in
    [\mathbb{Z}_2^{|\V|}]_k^q}c_{\vec{n}}(\Omega)\ket{\vec{n}}$ can be chosen
    such that $c_{\vec{n}}(\Omega) > 0$.
    In turn, the \emph{smallest} eigenvalue $E_k^q(\Omega)$ of
    $H_k^q(\Omega)P^q_k$ is non-degenerate, and its eigenvector can be chosen
    such that $\ket{\psi_k^q(\Omega)} = \sum_{\vec{n} \in
    [\mathbb{Z}_2^{|\V|}]_k^q}(-1)^{|\vec{n}|}c_{\vec n}(\Omega)\ket{\vec{n}}$
    with $c_{\vec{n}}(\Omega) > 0$. Following step (1) above,
    $\ket{\psi_k^q(\Omega)}$ is then also the ground state of $H^q_k$ with
    ground state energy $E_k^q(\Omega)$.

    \item
    We know that $U_p$ commutes with $H_k^q$ for all $p \in \P$
    (\cref{lem:CommutationsRelations}), thus leaving all eigenspaces invariant
    -- in particular the non-degenerate ground state $\ket{\psi_k^q}$.  Since
    $U_p^2 = \mathds{1}$, its eigenvalues are $\pm 1$, and we find
    $U_p\ket{\psi_k^q} = \pm\ket{\psi_k^q}$. On the other hand, we know that
    $U_p\ket{\vec{n}} = \ket{\phi_p \cdot \vec{n}}$ acts on the natural basis
    via a permutation $\phi_p$ of the vertices in $p$.  
    Comparing coefficients by their absolute value yields $c_{\phi_p^{-1} \cdot
    \vec{n}}=c_{\vec{n}}$; comparing coefficients by their phase yields
    $(-1)^{|\phi_p^{-1} \cdot \vec n|}=\pm(-1)^{|\vec{n}|}$. Since the number
    of excited qubits is invariant under permutations, we have $|\vec{n}| =
    |\phi_p^{-1} \cdot \vec{n}|$ and the eigenvalue must be $+1$. We therefore
    conclude that $U_p\ket{\psi_k^q} = +1\ket{\psi_k^q}$ for all $p \in \P$,
    i.e., $\ket{\psi_k^q}\in\H^\s$. Since $E^q_k<0$ (see arguments for
    $[H^q_k]^{\vsigma}$ above), and $H^q_k$ is block-diagonal in the symmetry
    sectors, it follows that $\ket{\psi_k^q}=\ket{[\psi_k^q]^\s}$ is also the
    unique ground state of the symmetry-projected Hamiltonian $[H^q_k]^\s$.

    \end{enumerate}
    This concludes the proof.
\end{proof}
Note that \emph{irreducibility} and \emph{essential nonnegativity} are not
features of our specific blockade graph $\GB_L$, but rather general properties
of blockade Hamiltonians $H$ of the form \eqref{eq:HOmega}. We also
note that the uniqueness of the ground state does not contradict the
$4$-fold ground state degeneracy of our particular model: For $\Omega>0$, the
model is not at a renormalization fixpoint and we expect a finite-size
splitting of the ground state space.

\subsection{Asymmetric sectors}%
\label{subsec:AsymmetricSectors}

In this brief subsection we prove \cref{lem:AsymmetricSectors}, which exploits
the simple but important insight that states in $\H_0^0$ (i.e., without any
excitations on plaquette $\p$) are necessarily \emph{symmetric} under $U_\p$;
this can be used to derive a collective lower bound $E_1^\infty$ on the ground
state energies of all asymmetric sectors:
\setcounter{lemma}{2}
\begin{lemma}[Asymmetric sectors]
    \label{lem:AsymmetricSectors}

    Fix an arbitrary asymmetric sector $\H^\vsigma$ to define $H^\vsigma$, and
    fix a plaquette $\p \in \P$ to define $H_1^\infty$. Then the ground state
    energy of $H^\vsigma$ is not smaller than the ground state energy of
    $H_1^\infty$: $E^\vsigma \ge E_1^\infty$.

\end{lemma}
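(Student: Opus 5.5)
The plan is a variational comparison: take a ground state $\ket{\psi^\vsigma}$ of $H^\vsigma$, show that it lies in $\H_1^\infty$, and show that $H$ and $H_1^\infty$ agree on that subspace. The bound $E^\vsigma \ge E_1^\infty$ then follows from the min-max principle.

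\emph{Asymmetric states have no weight on the empty plaquette.} Since $\vsigma\neq\s$, there is a plaquette with $\sigma_\p=-1$, and we fix $\p$ to be this plaquette. The projector $P_0^0$ onto patterns with $n_\p=0$ commutes with $U_\p$, because $\phi_\p$ permutes only vertices inside $\p$ and therefore preserves $n_\p$. Now take any basis state $\ket{\vec n}$ with $n_i=0$ for all $i\in\p$. Since $\phi_\p$ acts as the identity outside $\p$, we have $\phi_\p\cdot\vec n=\vec n$, so $U_\p\ket{\vec n}=\ket{\vec n}$. Thus $U_\p P_0^0 = P_0^0$. For $\ket{\psi}\in\H^\vsigma$ this gives
\begin{align}
    P_0^0\ket{\psi}=P_0^0U_\p\ket{\psi}=\sigma_\p P_0^0\ket{\psi}=-P_0^0\ket{\psi}\,.
\end{align}
Hence $P_0^0\ket\psi=0$, i.e., $\H^\vsigma\le\H_1^\infty$ and $P_1^\infty P^\vsigma=P^\vsigma$.

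\emph{$H$ and $H_1^\infty$ agree on $\H_1^\infty$, as quadratic forms.} Write $H=H_\p+\Omega\sigma_\p^x$ and $H_1^\infty=H_\p+\Omega P_1^\infty\sigma_\p^x P_1^\infty$. Take any $\ket\psi\in\H_1^\infty$, so that $\ket\psi=P_1^\infty\ket\psi$. Then
\begin{align}
    \bra\psi\sigma_\p^x\ket\psi=\bra\psi P_1^\infty\sigma_\p^xP_1^\infty\ket\psi\,,
\end{align}
and therefore $\bra\psi H\ket\psi=\bra\psi H_1^\infty\ket\psi$.

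\emph{Conclusion.} Let $\ket{\psi^\vsigma}$ be a normalized ground state of $H^\vsigma$. Because $H$ is block-diagonal in the sectors, $E^\vsigma=\bra{\psi^\vsigma}H\ket{\psi^\vsigma}$. The first step places $\ket{\psi^\vsigma}$ in $\H_1^\infty$, so by the second step
\begin{align}
    E^\vsigma=\bra{\psi^\vsigma}H_1^\infty\ket{\psi^\vsigma}\ge\inf\mathrm{spec}(H_1^\infty)=E_1^\infty\,,
\end{align}
where the inequality is the variational principle.

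The one step needing care is the first: it requires that $\phi_\p$ fixes every vertex outside $\p$ and that $P_0^0$ is invariant under $U_\p$. Both follow directly from the definition of the plaquette automorphisms. The argument never uses a spectral gap, so no restriction on $\Omega$ or $L$ is needed.
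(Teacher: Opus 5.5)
Your proof is correct and follows the paper's argument essentially step for step. Like the paper, you pick $\p$ with $\sigma_\p=-1$, use $U_\p P_0^0=P_0^0$ to get $\H^\vsigma\le\H_1^\infty$, identify $H$ with $H_1^\infty$ on that subspace (your quadratic-form version is equivalent to the paper's identity $[H_1^\infty]^\vsigma=H^\vsigma$), and conclude by the variational principle. The one thing the paper adds is a remark that you omit: by translation invariance on the torus, $E_1^\infty$ does not depend on which plaquette defines $H_1^\infty$, so the bound also holds when $\p$ is fixed arbitrarily rather than chosen with $\sigma_\p=-1$.
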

\begin{proof}[Proof of \cref{lem:AsymmetricSectors}.]

    Consider any asymmetric sector $\H^\vsigma$ and its symmetry-projected
    Hamiltonian $H^\vsigma =P^\vsigma H P^\vsigma$. Let us fix a plaquette $\p
    \in \P$ with $\sigma_\p = -1$ to define the constrained Hamiltonian
    $H_1^\infty = H_\p + \Omega\,P_1^\infty \sigma_\p^x P_1^\infty$.  Since
    states in $\H_0^0$ have no excitations on $\p$, and $\phi_\p$ permutes
    vertices only within $\p$, we have $U_\p P_0^0 = P_0^0$, i.e., all
    states in $\H_0^0$ are invariant under $U_\p$. Therefore $\H_0^0 \perp
    \H^\vsigma$ as $\H^\vsigma$ is not symmetric under $U_\p$ by assumption.
    Since $\H=\H^0_0\oplus\H^\infty_1$, we have $\H^\vsigma \le \H_1^\infty$ or
    simply $P^\vsigma P_1^\infty = P^\vsigma = P_1^\infty P^\vsigma$.
    Therefore
    \begin{align}
        [H_1^\infty]^\vsigma 
        \stackrel{\text{def}}{=} P^\vsigma H_1^\infty P^\vsigma 
        = P^\vsigma P_1^\infty H_1^\infty P_1^\infty P^\vsigma 
        = P^\vsigma P_1^\infty H P_1^\infty P^\vsigma 
        = P^\vsigma H P^\vsigma 
        \stackrel{\text{def}}{=} H^\vsigma
        \,.
    \end{align}
    As $E_1^\infty$ is the ground state energy of $H_1^\infty$, we obtain
    \begin{align}
        E_1^\infty \le \bra{\psi^\vsigma}H_1^\infty\ket{\psi^\vsigma}
        = \bra{\psi^\vsigma}[H_1^\infty]^\vsigma\ket{\psi^\vsigma}
        = \bra{\psi^\vsigma}H^\vsigma\ket{\psi^\vsigma} 
        = E^\vsigma
        \,
    \end{align}
    for the ground state $\ket{\psi^\vsigma}$ of $H^\vsigma$ with energy
    $E^\vsigma$. Note that the spectrum of $H_1^\infty$, and in particular
    $E_1^\infty$, is independent of our initial choice of $\p \in \P$ due to
    translational invariance on the periodic lattice. This concludes the proof.
\end{proof}

\subsection{Lower bound}%
\label{subsec:LowerBound}

In this subsection, we derive a lower bound on $E_k^\infty - E_{k-1}^\infty$,
i.e., the energy that is gained by allowing the excitation number on a
plaquette to fluctuate down to $k-1$ instead of $k$ excitations:
\setcounter{lemma}{3}
\begin{lemma}[Lower bound]
    \label{lem:LowerBound}

    There exist $\OOLD > 0$, $\LOLD \in \mathbb{N}$, such that for all
    $\Omega \in (0, \OOLD]$, and for a positive integer $k$ with $k \le \nO$, 
    the following holds:
    There exists a strictly increasing function $\DE_k:\;\mathbb{R}_{\ge 0}
    \rightarrow \mathbb{R}_{\ge 0}$ such that, for all $L \ge \LOLD$, one has
    $E_k^\infty - E_{k-1}^\infty \ge \DE_k(\varepsilon_k)$ with $\varepsilon_k
    := \lVert P_k^k\ket{\psi_k^\infty}\rVert$.

\end{lemma}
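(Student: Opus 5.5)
We compare $H_{k-1}^\infty$ with $H_k^\infty$ using a variational state. The idea is to perturb the ground state $\ket{\psi_k^\infty}$ by a small admixture of states with $k-1$ excitations on $\p$. Writing $\psi := \ket{\psi_k^\infty}$, the two Hamiltonians differ only in the plaquette hopping terms crossing between $n_\p = k-1$ and $n_\p = k$. Hence
\begin{align}
    H_{k-1}^\infty = H_k^\infty + \Omega\bigl(P_{k-1}^{k-1}\sigma_\p^x P_k^k + P_k^k\sigma_\p^x P_{k-1}^{k-1}\bigr) + \bigl(\text{terms acting within } \H_{k-1}^{k-1}\bigr).
\end{align}
The last group consists of $H_\p$ restricted to $\H_{k-1}^{k-1}$, which is diagonal on $\p$. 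By \cref{lem:GroundState} (with $k-1<k\le\nO\le q=\infty$), $\psi\in\H_k^\infty$ with sign structure $(-1)^{|\vec n|}c_{\vec n}$, $c_{\vec n}>0$.

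Set $\chi := \sigma_\p^x P_k^k\psi$, which lies in $\H_{k-1}^{k-1}$. Take the trial state $\phi_t := \psi - t\chi$ with $t>0$ small. Since $\chi\perp\psi$, we have $\langle\phi_t|H_{k-1}^\infty|\phi_t\rangle = E_k^\infty - 2t\,\Omega\lVert\chi\rVert^2 + t^2\langle\chi|H_\p|\chi\rangle$, with norm $1+t^2\lVert\chi\rVert^2$. Three estimates are needed.

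First, the cross term. The sign structure of $\psi$ makes all contributions to $\lVert\chi\rVert^2$ add coherently. Every pattern with $k\ge1$ excitations on $\p$ has at least one excited qubit on $\p$ to de-excite, so $\lVert\chi\rVert^2\ge\lVert P_k^k\psi\rVert^2=\varepsilon_k^2$. Second, the diagonal term. Since $\lVert\sigma_\p^x\rVert\le|\p|=36$, $\lVert H_\p\rVert$ restricted to local states is $O(L^2)$, so the crude bound is too weak. I would instead bound $\langle\chi|H_\p - E_k^\infty|\chi\rangle$: $\chi$ differs from $\psi$ only locally on $\p$, and removing one excitation changes $H_0$ by at most $\Deltamax$ plus $O(\Omega)$ from the transverse field (using $U\ge0$ terms only decrease). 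This gives $\langle\chi|(H_\p-E_k^\infty)|\chi\rangle\le C\lVert\chi\rVert^2$, with $C=\cL$ built from $\Deltamax$, $\Omega|\p|$ and the commutator bounds of \cref{lem:OperatorNorms,lem:Bounds}. Third, optimize over $t$. Using $\varepsilon_k^2\le\lVert\chi\rVert^2\le 36^2$, we get $E_{k-1}^\infty\le E_k^\infty-\Omega^2\lVert\chi\rVert^2/(C+\dots)$. This suggests defining $\DE_k(\varepsilon):=\Omega^2\varepsilon^2/C'$, which is strictly increasing on $\mathbb{R}_{\ge0}$ and independent of $L$. The constants $\OOLD,\LOLD$ are taken from \cref{lem:GroundState}.

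The main obstacle is the second estimate, making $\langle\chi|H_\p|\chi\rangle - E_k^\infty\lVert\chi\rVert^2$ bounded by an $L$-independent constant times $\lVert\chi\rVert^2$. The naive argument fails because $\chi$ is not an eigenvector. My first attempt would write $H_\p\sigma_i^x P_k^k\psi = \sigma_i^x P_k^k H_\p\psi + [H_\p,\sigma_i^x P_k^k]\psi$. Then I would use $H_\p\psi = H_k^\infty\psi - \Omega P_k^\infty\sigma_\p^x P_k^\infty\psi$, which holds because $P_k^\infty$ commutes with $H_\p$ (\cref{lem:CommutationsRelations}), so that $H_\p\psi = E_k^\infty\psi -(\text{plaquette hopping})\psi$. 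The commutators are local with norm bounded by $U\cdot\deg+\Delta$, a constant. The subtle point is that these commutators pick up $U$, which is large but $L$-independent. This is acceptable since only $L$-independence is required, although it degrades $\DE_k$.
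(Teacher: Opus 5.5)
Your trial state $\psi-t\chi$ is, after normalization, the paper's family $\alpha\ket{\psi_k^\infty}+\beta\ket{\psi_\perp^{(k)}}$. Your plan for the diagonal term (commute $H_\p$ past the de-excitation and use $H_\p\psi=E_k^\infty\psi-\Omega P_k^\infty\sigma_\p^x\psi$) is also the paper's route to bounding $D_k$. Two steps fail as written, however.

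First, $\chi:=\sigma_\p^xP_k^k\psi$ does not lie in $\H_{k-1}^{k-1}$. It contains $\sigma_\p^+P_k^k\psi\in\H_{k+1}^{k+1}$, whose overlap $\bra{\psi}\sigma_\p^+P_k^k\ket{\psi}$ is nonzero; it is in fact negative, by the sign structure of \cref{lem:GroundState}. So $\chi\not\perp\psi$, and your formulas for $\bra{\phi_t}H_{k-1}^\infty\ket{\phi_t}$ and $\lVert\phi_t\rVert^2$ are wrong. You need $\chi=\sigma_\p^-P_k^k\psi$, which is the paper's $c_\perp^{(k)}\ket{\psi_\perp^{(k)}}$. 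With this choice the cross term is indeed $-2t\Omega\lVert\chi\rVert^2$ and $\lVert\chi\rVert^2\ge k\varepsilon_k^2$. (Also, $H_k^\infty$ already contains $H_\p$ on all of $\H$, so your decomposition of $H_{k-1}^\infty$ has no extra group of terms.)

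Second, and more substantively, your commutator computation does not give $\bra{\chi}(H_\p-E_k^\infty)\ket{\chi}\le C\lVert\chi\rVert^2$ with $C=\cL$ independent of $\varepsilon_k$. Carrying it out yields
\[
\bra{\chi}(H_\p-E_k^\infty)\ket{\chi}
=-\Omega\bra{\chi}(\sigma_\p^-)^2P_{k+1}^{k+1}\ket{\psi}
+\sum_{j\in\p}\bra{\chi}\Bigl(\Delta_j-U\!\!\sum_{l:\,\{j,l\}\in\E}\!\!\hat n_l\Bigr)\sigma_j^-P_k^k\ket{\psi}\,.
\]
The first term is nonnegative by the sign structure, so it cannot be discarded. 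The best available bound for it is $\Omega\lVert(\sigma_\p^-)^2\rVert\,\lVert\chi\rVert$, which is linear rather than quadratic in $\lVert\chi\rVert$, because nothing ties $\lVert P_{k+1}^{k+1}\psi\rVert$ to $\varepsilon_k$. The remaining terms are harmless. By \eqref{eq:LBMEOhnen} and \eqref{eq:LBMEMitn}, the $\Delta$ part is at most $\Deltamax\lVert\chi\rVert^2$ and the $U$ part is nonpositive; your cruder norm bound with $U\cdot\deg$ also gives an $L$-independent contribution. Dividing by $\lVert\chi\rVert^2\ge k\varepsilon_k^2$ therefore only gives $D_k\le a+b/\varepsilon_k$, which is \cref{eq:EnBounds}. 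So your choice $\DE_k(\varepsilon)=\Omega^2\varepsilon^2/C'$ is unjustified.

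What your optimization over $t$ actually produces is $\sqrt{M_\lb^2+(D_\lb/2)^2}-D_\lb/2$ with $M_\lb=\Omega\sqrt{k}\,\varepsilon$ and $D_\lb=a+b/\varepsilon$. This behaves like $\varepsilon^3$ for small $\varepsilon$. It is still strictly increasing and independent of $L$, and it extends continuously by $0$ at $\varepsilon=0$, where $\psi$ itself is an admissible trial state. With both corrections, your argument becomes the paper's proof.
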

Note that the function $\DE_k$ is independent of the system size, but can
explicitly depend on the parameter $k$. Before starting with the proof, we need
some definitions and abbreviations:
\begin{definition}[Variational wave function] 
    \label{def:LowerBound}

    Consider some fixed plaquette $\p \in \P$ and $0<k \le \nO$. We denote by
    $\sigma_i^\pm:=\tfrac{1}{2}(\sigma_i^x\mp \mathrm{i}\sigma_i^y)$ the ladder
    operators of qubit $i$, and write $\sigma_\p^\pm := \sum_{i \in \p}
    \sigma_i^\pm$ for the sum over the qubits of plaquette $\p$. 
    If $\varepsilon_k= \lVert P_k^k\ket{\psi_k^\infty}\rVert > 0$, we define
    the \emph{variational wave function}
    \begin{align}
        \ket{\psi^{(k)}} := \alpha\ket{\psi_k^\infty} + \beta\ket{\psi_\perp^{(k)}} 
        \quad\text{with}\quad
        \ket{\psi_\perp^{(k)}} := \frac{1}{c_\perp^{(k)}}\,\sigma_\p^-P_k^k\ket{\psi_k^\infty}
        \label{eq:varwf}
    \end{align}
    for coefficients $\alpha, \beta \in \mathbb{R}$ with $\alpha^2 + \beta^2 =
    1$. Remember that $\ket{\psi_k^\infty}$ denotes the (due to finite-size
    effects) non-degenerate ground state of $H_k^\infty$, which is orthogonal
    to $\ket{\psi_\perp^{(k)}}$ as the latter has exactly $k-1$ excitations on
    plaquette $\p$. The normalizing factor $c_\perp^{(k)} := \lVert
    \sigma_\p^-P_k^k\ket{\psi_k^\infty} \rVert$ is nonzero for
    $\varepsilon_k>0$ (this is a consequence of \cref{lem:GroundState} and
    shown in \cref{lem:Bounds}). In this case, we define the quantities
    \begin{align}
        M_k := \bra{\psi_k^\infty}H_{k-1}^\infty\ket{\psi_\perp^{(k)}}
        \quad\text{and}\quad
        D_k := \bra{\psi_\perp^{(k)}}H_{k-1}^\infty\ket{\psi_\perp^{(k)}} - E_k^\infty
        \,.
        \label{eq:EnergyTerms}
    \end{align}
    By contrast, for $\varepsilon_k=0$ we set $\ket{\psi^{(k)}} :=
    \ket{\psi_k^\infty}$.
\end{definition}
\noindent With these concepts, we can now prove \cref{lem:LowerBound}:
\begin{proof}[Proof of \cref{lem:LowerBound}.]

    Let $0<k \le \nO$. To construct a lower bound on $E_k^\infty -
    E_{k-1}^\infty$, we note that
    \begin{align}
        E_k^\infty - E_{k-1}^\infty \ge
        E_k^\infty - \bra{\psi^{(k)}}H_{k-1}^\infty\ket{\psi^{(k)}}
        \quad\text{for all $\alpha,\beta\in\mathbb{R}$ with $\alpha^2+\beta^2=1$}
        \label{eq:h1}
    \end{align}
    by the variational principle, where $\ket{\psi^{(k)}} \in \H_{k-1}^\infty$
    is the normalized variational wave function of \cref{def:LowerBound} and
    $H^\infty_{k-1}$ the constrained Hamiltonian.  The strategy is to
    lower-bound the right-hand side of \cref{eq:h1}.
    
    Consider the Hamiltonian $[H_k^\infty]^\s$. From \cref{lem:GapStability},
    we know that $[H_k^\infty]^\s$ is gapped for $\Omega \le \OOLA$ and $L \ge
    \LOLA$, with a $4$-fold degenerate ground state space $[\G_k^\infty]^\s$.
    Furthermore, \cref{lem:GroundState} asserts that $\ket{\psi_k^\infty} \in
    [\G_k^\infty]^\s \le \H_k^\infty$, given that $\Omega \le \OOLB$ and $L \ge
    \LOLB$. We therefore choose $\OOLD := \min\{\OOLA, \OOLB\}$ and $\LOLD :=
    \max\{\LOLA, \LOLB\}$. Since $\ket{\psi_k^\infty} \in\H_k^\infty \le
    \H_{k-1}^\infty$, we have
    $\bra{\psi_k^\infty}H_{k-1}^\infty\ket{\psi_k^\infty}=\bra{\psi_k^\infty}H_{k}^\infty\ket{\psi_k^\infty}
    = E_k^\infty$.  
    For the special case $\varepsilon_k = 0$ with $\ket{\psi^{(k)}} =
    \ket{\psi_k^\infty}$, we then obtain $E_k^\infty - E_{k-1}^\infty\ge 0$
    from \cref{eq:h1}; we may define $\DE_k(0) := 0$. 
    In the following, we consider the nontrivial case $\varepsilon_k > 0$.

    From \cref{lem:GroundState} we know that $\ket{\psi_k^\infty} =
    \sum_{\vec{n} \in
    [\mathbb{Z}_2^{|\V|}]_k^\infty}(-1)^{|\vec{n}|}c_{\vec{n}}(\Omega)\ket{\vec{n}}$
    with $c_{\vec{n}} > 0$. Then, the amplitudes of both
    $\ket{\psi_k^\infty}$ and $\ket{\psi_\perp^{(k)}}$ [\cref{eq:varwf}] are real in
    the natural basis, so that $M_k\in\mathbb{R}$ [$D_k\in\mathbb{R}$ in any
    case, \cref{eq:EnergyTerms}]. We can then evaluate \cref{eq:h1} using
    \cref{eq:EnergyTerms,eq:varwf}:
    \begin{align}
        E_k^\infty - \bra{\psi^{(k)}}H_{k-1}^\infty\ket{\psi^{(k)}} 
        = -\left[2\alpha\beta M_k + \beta^2 D_k\right]
        \,.
    \end{align}
    In \cref{lem:Bounds} we show that $M_k \ge 0$ [see also \cref{eq:EnBounds}
    below]; in this case, the right-hand side is maximized for the real
    coefficients
    \begin{align}
        \text{\small$\displaystyle 
	    \left.
	        {
	        \begin{array}{c}
	            \alpha\,(+)\\
	            \beta\,(-)
	        \end{array}
	        }
	    \right\}
        $}
	    = \pm\frac{1}{\sqrt{2}}\sqrt{1 \pm \text{sgn}(D_k)(1 + 4M_k^2/D_k^2)^{-1/2}}
	\end{align}
    of the variational wave function. With this, we find the optimal
    variational lower bound
    \begin{align}
        E_k^\infty - E_{k-1}^\infty \ge
        \max\limits_{\substack{\alpha,\beta \in \mathbb{R}\\\alpha^2+\beta^2=1}}
        \left\{
        E_k^\infty - \bra{\psi^{(k)}}H_{k-1}^\infty\ket{\psi^{(k)}} 
        \right\}
        = \sqrt{M_k^2 + \left(\frac{D_k}{2}\right)^2} - \frac{D_k}{2}
        \,.\label{eq:ExactLowerBound}
    \end{align}
    The term on the right-hand side is nonnegative, strictly increasing in
    $|M_k|$, and weakly [for $M_k \neq 0$ strictly] decreasing in $D_k$; for
    $M_k \neq 0$ [given $D_k < \infty$] this term is positive. In
    \cref{lem:Bounds} [see \cref{eq:lem_M_k,eq:lem_Delta_k}], we derive the
    bounds
    \begin{align}
        M_k \ge \Omega\sqrt{k}\,\varepsilon_k =: M_\lb^{(k)}(\varepsilon_k)
        \quad\text{and}\quad
        D_k \le \Deltamax + \frac{\Omega(|\p| + 1)^2}{4\sqrt{k}\,\varepsilon_k} 
        =: D_\lb^{(k)}(\varepsilon_k)
        \,;
        \label{eq:EnBounds}
    \end{align}
    with $\varepsilon_k=\lVert P_k^k\ket{\psi_k^\infty}\rVert > 0$ in this branch of the proof,
    and $\Omega > 0$ by assumption. Here we defined the functions
    $M_\lb^{(k)}, D_\lb^{(k)}:\, \mathbb{R}_{> 0} \rightarrow \mathbb{R}_{> 0}$
    on the positive domain.
    We can use the inequalities \eqref{eq:EnBounds} to obtain $E_k^\infty-
    E_{k-1}^\infty \ge \DElb^{(k)}(\varepsilon_k)$ with the function
    \begin{align} 
        \label{eq:LBEnergy}
        \DElb^{(k)}\,:\,\mathbb{R}_{\ge 0} \rightarrow \mathbb{R}_{\ge 0}\,, 
        \qquad
        \varepsilon \,\mapsto\, 
        \text{\small$\displaystyle 
        \sqrt{M_\lb^{(k)}(\varepsilon)^2 + 
        \left(\frac{D_\lb^{(k)}(\varepsilon)}{2}\right)^2} - \frac{D_\lb^{(k)}(\varepsilon)}{2}
        $}
        \,.
    \end{align}
    We define $\DElb^{(k)}(0) := 0$ via the continuous extension to
    $\varepsilon = 0$.  Importantly, this is consistent with our definition
    $\DE_k(0) = 0$ from the special case $\varepsilon_k = 0$, so $E_k^\infty-
    E_{k-1}^\infty \ge \DElb^{(k)}(\varepsilon_k)$ remains valid in this limit.
    The lower bound $\DElb^{(k)}$ parametrically depends only on $\Omega$, $k$,
    $\Deltamax$ and $|\p|$, which are positive constants.  Crucially, the
    function $\DElb^{(k)}$ is independent of the system size. Also, since
    $M_\lb^{(k)}$ is strictly increasing and $D_\lb^{(k)}$ is strictly
    decreasing (in $\varepsilon$), the function $\DElb^{(k)}$ is strictly
    increasing. Hence we can identify $\DE_k \equiv \DElb^{(k)}$, which
    concludes the proof.
\end{proof}

\subsection{Single-excitation weight}%
\label{subsec:Induction}

The lower bound in \cref{lem:LowerBound} depends on the weight $\varepsilon_k
=\lVert P_k^k\ket{\psi_k^\infty}\rVert$ of the constrained ground state
$\ket{\psi_k^\infty}$ in the sector with $k$ excitations on a plaquette $\p$.
Here we show for $k=1$ that $\varepsilon_1$ can be lower-bounded by a positive
constant that is independent of the system size:
\setcounter{lemma}{4}
\begin{lemma}[Single-excitation weight] 
    \label{lem:Induction}

    There exist $\OSLE > 0$, $\LOLE \in \mathbb{N}$, such that for all
    $\Omega \in (0, \OSLE)$, there exists a constant $\varepsilon_\lb > 0$ such
    that, for all $L \ge \LOLE$, one has $\lVert P_1^1\ket{\psi_1^\infty}\rVert =
    \varepsilon_1 \ge \varepsilon_\lb$.

\end{lemma}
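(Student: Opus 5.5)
The plan is a downward induction in $k$, from $k=\nO$ to $k=1$. At each step we show that the ground state $\ket{\psi_k^\infty}$ carries an $L$-independent weight $\varepsilon_k\ge\varepsilon_\lb^{(k)}>0$ on the sector with exactly $k$ excitations on $\p$.

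\textbf{Base case $k=\nO$.} By \cref{lem:GroundState}, $\ket{\psi_{\nO}^\infty}$ lies in the 4-fold ground state space $[\G_{\nO}^\infty]^\s$ of the gapped Hamiltonian $[H_{\nO}^\infty]^\s$ (\cref{lem:GapStability}). At $\Omega=0$ this space lies entirely in $\H_{\nO}^{\nO}$. I would argue via quasi-adiabatic (spectral-flow) continuation, as in \cref{lem:SpectralFlow}, that the ground state projector $P_{\G}(\Omega)$ is $\Omega$-close to $P_{\G}(0)$ in the sense of local observables. The spectral-flow unitary $\mathcal{U}(\Omega)$ is quasi-local with $\lVert \mathcal{U}^\dagger A\,\mathcal{U}-A\rVert\le C\Omega$ for observables $A$ supported near $\p$, with $C$ independent of $L$ thanks to the uniform gap $\Deltamin/2$. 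Applying this to the local projector $A=P_{\nO}^{\nO}$ gives
\begin{align}
\bra{\psi_{\nO}^\infty}P_{\nO}^{\nO}\ket{\psi_{\nO}^\infty}\ge 1-C\Omega .
\end{align}
Hence $\varepsilon_{\nO}\ge\sqrt{1-C\Omega}$ for $\Omega<\OSLE$ small.

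\textbf{Inductive step $k\to k-1$.} Suppose $\varepsilon_k\ge\varepsilon^{(k)}_\lb$. \Cref{lem:LowerBound} then gives $E_k^\infty-E_{k-1}^\infty\ge\DE_k(\varepsilon_\lb^{(k)})=:\delta_k>0$, uniformly in $L$. I would write $\ket{\psi_{k-1}^\infty}=a\ket{\chi_{k-1}}+b\ket{\chi_{\ge k}}$, splitting it into its components in $\H_{k-1}^{k-1}$ and $\H_k^\infty$, so that $a=\varepsilon_{k-1}$. Since $H_{k-1}^\infty$ differs from $H_k^\infty\oplus H_\p$ only by the hopping $\Omega(P_{k-1}^{k-1}\sigma_\p^x P_k^k+\mathrm{h.c.})$, whose norm is at most $\Omega|\p|$, we get
\begin{align}
E_{k-1}^\infty\ge a^2\bra{\chi_{k-1}}H_\p\ket{\chi_{k-1}}+b^2E_k^\infty-2|ab|\,\Omega|\p| .
\end{align}
We have $\bra{\chi_{k-1}}H_\p\ket{\chi_{k-1}}\ge E_{k-1}^{k-1}$. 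The key extra input is that $E_{k-1}^{k-1}$ is not far below $E_k^\infty$. Classically, removing an excitation from $\p$ costs at least $\Deltamin$ relative to the loop ground states when $k\le\nO$. Using \cref{lem:GapStability} for $H_{k-1}^{k-1}=H_\p$ and for $H_k^\infty$, together with spectral-flow locality, I would aim to show $|E_{k-1}^{k-1}-E_k^\infty|\le C'$ with $C'$ independent of $L$. Since the two Hamiltonians differ only near $\p$, their perturbed ground-state energies differ by an $O(1)$ local amount. Combining this with $E_{k-1}^\infty\le E_k^\infty-\delta_k$, the inequality becomes
\begin{align}
-\delta_k\ge -a^2C'-2|a|\Omega|\p| ,
\end{align}
which forces $a\ge\varepsilon^{(k-1)}_\lb>0$, explicitly a positive root of a quadratic in $a$, depending only on $\delta_k,C',\Omega,|\p|$. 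Iterating down to $k=1$ yields $\varepsilon_\lb:=\varepsilon^{(1)}_\lb$. We take $\OSLE$ to be the minimum of $\OOLD$ and the base-case threshold, and $\LOLE:=\LOLD$.

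\textbf{Main obstacle.} The hard part is making the $L$-independent comparison $|E_{k-1}^{k-1}-E_k^\infty|\le C'$, together with the base-case locality estimate, rigorous. Both require controlling differences of extensive ground-state energies by local terms. I would try to do this through the Michalakis--Zwolak spectral-flow/quasi-adiabatic machinery: express $E(\Omega)$ via Hellmann--Feynman, $\partial_\Omega E=\langle\sum_i\sigma_i^x\rangle$. The contributions far from $\p$ coincide up to Lieb--Robinson tails for the two constrained Hamiltonians, because their ground states are related by a quasi-local dressing near $\p$. If this fails, the fallback is to only bound the ratio of the constrained ground state energies via Perron--Frobenius positivity of the amplitudes (\cref{lem:GroundState}).
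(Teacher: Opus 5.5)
Your skeleton is the paper's: downward induction from $k=\nO$, with \cref{lem:LowerBound} supplying $E_k^\infty-E_{k-1}^\infty\ge\DE_k(\varepsilon^{(k)}_{\lb})$, plus an $L$-independent upper bound on the same difference that is linear in $\varepsilon_{k-1}$. The genuine gap is in your induction step. It rests on $|E_{k-1}^{k-1}-E_k^\infty|\le C'$, which you leave open and propose to attack with Hellmann--Feynman, Lieb--Robinson tails and a ``quasi-local dressing'' between ground states of two constrained Hamiltonians. None of this is established, and none of it is needed. You only use the one-sided bound, and it holds with $C'=0$. For $\ket{\chi}\in\H_{k-1}^{k-1}$ one has $P_k^\infty\ket{\chi}=0$, hence $\bra{\chi}H_\p\ket{\chi}=\bra{\chi}H_k^\infty\ket{\chi}\ge E_k^\infty$ by the variational principle. (Likewise $E_{k-1}^{k-1}=\inf\operatorname{spec}H_\p\ge E_k^\infty$, because $H_\p$ conserves $\hat n_\p$ and a ground state of definite $n_\p$ has the same energy under $H_k^\infty$.) Inserting this, your decomposition gives at once
\[ \DE_k(\varepsilon^{(k)}_{\lb})\le E_k^\infty-E_{k-1}^\infty\le 2\Omega\,\lVert P_{k-1}^{k-1}\sigma_\p^x P_k^k\rVert\,\varepsilon_{k-1}\le(|\p|+1)\,\Omega\,\varepsilon_{k-1}, \]
which is exactly the paper's bound. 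The paper gets it in one line by using $\ket{\psi_{k-1}^\infty}$ as a trial state for $H_k^\infty$: $E_k^\infty-E_{k-1}^\infty\le\bra{\psi_{k-1}^\infty}H_k^\infty-H_{k-1}^\infty\ket{\psi_{k-1}^\infty}$. Here $H_k^\infty-H_{k-1}^\infty=-\Omega(\sigma_\p^+P_{k-1}^{k-1}+P_{k-1}^{k-1}\sigma_\p^-)$ is a bounded local operator. The extensive parts of the energies cancel exactly, so no difference of extensive ground-state energies ever needs to be controlled.

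Your base case can probably be made rigorous, but not with the tools in the paper. \cref{lem:SpectralFlow} only shows that the flow commutes with conserved projectors; it contains no quasi-locality estimate of the form $\lVert\mathcal U^\dagger A\,\mathcal U-A\rVert\le C\Omega$. For that you would have to import Lieb--Robinson and automorphic-equivalence results. You would also have to run them for the local auxiliary Hamiltonian $\tilde H_{\nO}^\infty(\Omega,\omega;\s)$ of \cref{eq:AnciHam}, not for $[H_{\nO}^\infty]^\s$, which contains the nonlocal projector $P^\s$. The paper instead uses the same variational idea as above. Let $H_\circ:=[H_{\nO}^{\nO}]^\s=[H_\p]^\s$ with ground state $\ket{\psi_\circ}$ and ground-state energy $E_\circ$. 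Then $H_{\nO}^\infty$ and $H_\p$ differ only by $\Omega P_{\nO}^\infty\sigma_\p^xP_{\nO}^\infty$. Hence the nonnegative excess energies $\Delta_{\psi_{\nO}^\infty}:=\bra{\psi_{\nO}^\infty}H_\circ\ket{\psi_{\nO}^\infty}-E_\circ$ and $\Delta_{\psi_\circ}:=\bra{\psi_\circ}H_{\nO}^\infty\ket{\psi_\circ}-E_{\nO}^\infty$ sum to $-\Omega\bra{\psi_{\nO}^\infty}\sigma_\p^x\ket{\psi_{\nO}^\infty}\le|\p|\Omega$. The gap $\Deltamin/2$ of $H_\circ$ in $\H^\s$, together with $\G_\circ\le\H_{\nO}^{\nO}$, converts this into $\varepsilon_{\nO}^2\ge1-2|\p|\Omega/\Deltamin$. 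This is elementary and manifestly $L$-independent.
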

\begin{proof}[Proof.]

    The strategy is to prove the claim via induction on $k$, starting at $k =
    \nO$ and decrementing towards $k=1$. We begin with the induction step and
    prove the induction start below.

    \begin{itemize}

        \item \textbf{Induction step:}
        Let us assume that for some $0<k \le \nO$ there exists a bound
        $\vep^{(k)}_\lb>0$ such that $\vep_k\geq\vep^{(k)}_\lb$ for all $L\geq
        L_5'$ where $L_5'$ is a constant. We want to show that then there
        exists another bound $\vep^{(k-1)}_\lb>0$ such that
        $\vep_{k-1}\geq\vep^{(k-1)}_\lb$ for all $L\geq L_5'$ as well.
        \begin{enumerate}[label=(\arabic*)]

        \item 
        Let $0<k \le \nO$. From \cref{lem:LowerBound} we know that, for all
        $\Omega \le \OOLD$, there exists a strictly increasing function
        $\DE_k:\;\mathbb{R}_{\ge 0} \rightarrow \mathbb{R}_{\ge 0}$ such that
        for all $L \ge \LOLD$ we have
        \begin{align}
            E_k^\infty - E_{k-1}^\infty
            \ge \DE_k(\varepsilon_k) 
            \quad\text{with}\quad
            \varepsilon_k = \lVert P_k^k\ket{\psi_k^\infty}\rVert\,.
            \label{eq:LowerBound}
        \end{align}
        In the following, we therefore choose $\OSLE \le \OOLD$ and $\LOLE \ge \LOLD$.
        
        \item 
        In addition to \cref{eq:LowerBound}, we can also construct an \emph{upper} bound:
        \begin{subequations}
        \begin{align}
            E_k^\infty - E_{k-1}^\infty
            &\le \bra{\psi_{k-1}^\infty}H_k^\infty - H_{k-1}^\infty\ket{\psi_{k-1}^\infty}
            \\
            &= \Omega\bra{\psi_{k-1}^\infty}P_k^\infty\sigma_\p^xP_k^\infty
            -P_{k-1}^\infty\sigma_\p^xP_{k-1}^\infty\ket{\psi_{k-1}^\infty}
            \\
            &= -\Omega\bra{\psi_{k-1}^\infty}\sigma_\p^+P_{k-1}^{k-1} 
            + P_{k-1}^{k-1}\sigma_\p^-\ket{\psi_{k-1}^\infty}
            \\
            &\le 2\Omega\lVert \sigma_\p^+\rVert\,\lVert P_{k-1}^{k-1}\ket{\psi_{k-1}^\infty}\rVert 
            \\
            &\le (|\p| + 1)\Omega\,\varepsilon_{k-1}
            \,.
            \label{eq:UpperBound}
        \end{align}
        \end{subequations}
        From the second to the third line we used $P_k^\infty = P_{k-1}^\infty
        - P_{k-1}^{k-1}$. Note that $\sigma_\p^\pm$ increases (decreases) the
        number of excitations on $\p$ by one, so the form of the states
        $\ket{\psi_{k-1}^\infty}$, given by \cref{lem:GroundState} for $\Omega
        \in (0, \OOLB]$ and $L \ge \LOLB$, makes the matrix element negative.
        The second to last inequality follows by Cauchy--Schwarz and the
        operator norm (see \cref{subsec:OperatorNorms}), the last inequality is
        derived in \cref{lem:OperatorNorms}.
        
        \item
        Combining both inequalities \eqref{eq:LowerBound} and
        \eqref{eq:UpperBound} yields
        \begin{align}
            \varepsilon_{k-1} 
            \ge \frac{\DE_k(\varepsilon_k)}{(|\p| + 1)\Omega} 
            \ge \frac{\DE_k(\varepsilon^{(k)}_\lb)}{(|\p| + 1)\Omega} 
            \equiv \varepsilon^{(k-1)}_\lb
            \,.
            \label{eq:LowerBoundEpsk}
        \end{align}
        Remember that $\DE_k$ is strictly increasing. The second
        inequality exploits this fact, together with the assumption
        $\vep_k\geq\vep^{(k)}_\lb$ for all $L\geq L_5'$. Crucially, the new
        lower bound $\varepsilon^{(k-1)}_\lb$ is again independent of the
        system size $L$.
        Since $\vep^{(k)}_\lb>0$ it follows that $\DE_k(\varepsilon^{(k)}_\lb)
        > 0$ [recall \cref{eq:LBEnergy}] and therefore
        $\varepsilon^{(k-1)}_\lb>0$ for $0 < \Omega < \Omega_5'$.
        This concludes the induction step.

        \end{enumerate}

        \item \textbf{Induction start:}
        It remains to be shown that the assumption is justified for some
        $0<k\leq n_0$. We set $k=n_0$ and prove the existence of
        $\vep^{(n_0)}_\lb>0$ such that $\vep_{n_0}\geq \vep^{(n_0)}_\lb$ for
        $L\geq L_5'$.
        
        \emph{Notation.} In the following, we often use quantities constrained
        to exactly $n_0$ excitations on plaquette $\p$ in the symmetric sector
        $\H^\s$. To streamline our notation, we introduce $\square_\circ\equiv
        [\square^{n_0}_{n_0}]^\s$; so for example $[H_{\nO}^{\nO}]^\s \equiv
        H_\circ$ etc.
        \begin{enumerate}[label=(\arabic*)]
        \item
        We start with the Hamiltonian $H_\circ$. From \cref{lem:GapStability},
        we know that $H_\circ$ is gapped with gap $\Delta_\circ \ge
        \Deltamin/2$ for $\Omega \le \OOLA$ and $L \ge \LOLA$, with a ground
        state space $\G_\circ$ that is $4$-fold degenerate. 
        In \cref{lem:GroundState} we showed that $\G_\circ \le \H_{\nO}^{\nO}$,
        given that $\Omega \le \OOLB$ and $L \ge \LOLB$. In the following, we
        therefore assume $\OSLE \le \min\{\OOLA, \OOLB\}$ and $\LOLE \ge
        \max\{\LOLA, \LOLB\}$. 
        Let us write $P_{\G_\circ}$ for the projector onto $\G_\circ$. Since
        $\G_\circ \le \H_{\nO}^{\nO}$, we can use $P_{\G_\circ} = P_{\G_\circ}
        P_{\nO}^{\nO}$ and show that
        \begin{align}
            \varepsilon_{\nO} 
            =\lVert P_{\nO}^{\nO} \ket{\psi_{\nO}^\infty} \rVert 
            = \lVert P_{\G_\circ}\rVert \lVert P_{\nO}^{\nO} \ket{\psi_{\nO}^\infty} \rVert 
            \ge \lVert P_{\G_\circ} P_{\nO}^{\nO} \ket{\psi_{\nO}^\infty} \rVert 
            = \lVert P_{\G_\circ} \ket{\psi_{\nO}^\infty} \rVert
            \,.
            \label{eq:H1}
        \end{align}
        Here we used that the operator norm of a nonzero projector is unity
        (\cref{lem:OperatorNorms}). The remainder of the proof is dedicated to
        lower-bounding $\lVert P_{\G_\circ} \ket{\psi_{\nO}^\infty} \rVert$.
        
    \item For $\Omega$ and $L$ as above, we note the following facts:
        \begin{itemize}

        \item We have $\ket{\psi_{\nO}^\infty}\in\H_{n_0}^\infty$ and
            $\ket{\psi_{\nO}^\infty}\in\H^\s$ (\cref{lem:GroundState}).

        \item We have $\G_\circ\leq\H_\nO^\nO\leq\H_\nO^\infty$ (\cref{lem:GroundState})
        and $\G_\circ\leq \H^\s$ (since $E_\circ < 0$).

        \end{itemize}
        Hence we can decompose $\ket{\psi_{\nO}^\infty} =
        \alpha\ket{\psi_\parallel} + \beta\ket{\psi_\perp}\in \H^\s \cap
        \H_\nO^\infty$ into a state $\ket{\psi_\parallel} \in \G_\circ$ and some
        orthogonal state $\ket{\psi_\perp} \in \H^\s \cap \H_\nO^\infty$ (both normalized) with
        $P_{\G_\circ}\ket{\psi_\perp} = 0$ such that $|\alpha|^2 + |\beta|^2 =
        1$. 
        
        Note that by definition, $\ket{\psi_\perp}$ only has contributions from
        vectors perpendicular to $\G_\circ$ so that
        $\bra{\psi_\perp}H_\circ\ket{\psi_\perp} \ge E_\circ + \Delta_\circ$.
        This allows us to lower-bound the energy difference
        \begin{align}
            \Delta_{\psi_{\nO}^\infty} 
            := \bra{\psi_{\nO}^\infty}H_\circ\ket{\psi_{\nO}^\infty} - E_\circ
            \ge |\beta|^2\left(\bra{\psi_\perp}H_\circ\ket{\psi_\perp} - E_\circ\right) 
            \ge |\beta|^2\Delta_\circ
            \,.
            \label{eq:H2}
        \end{align}
        In the first inequality, we use that the cross-terms vanish and that
        $\bra{\psi_\parallel}H_\circ\ket{\psi_\parallel}-E_\circ \geq 0$.

        \item
        This allows us to rewrite Ineq.~\eqref{eq:H1} as
        \begin{align}
            \varepsilon_{\nO}^2 
            \ge \lVert P_{\G_\circ}\ket{\psi_{\nO}^\infty} \rVert^2 
            = 1 - |\beta|^2 
            \ge 1 - \Delta_{\psi_{\nO}^\infty}/\Delta_\circ
            \label{eq:varepsilonN}
            \,.
        \end{align}
        Note that $\Delta_{\psi_{\nO}^\infty}\geq 0$ since $E_\circ$ is the
        ground state energy of $H_\circ$.

        It remains to be shown that there exists an $\hat\Omega >  0$ [determined below] 
        so that $\Delta_{\psi_{\nO}^\infty} < \Delta_\circ$ for $\Omega \in (0, \hat\Omega)$.  

        \item
        In analogy to the energy difference \eqref{eq:H2}, we can define
        $\Delta_{\psi_\circ} := \bra{\psi_\circ}H_{\nO}^\infty\ket{\psi_\circ}
        - E_{\nO}^\infty$, where $\ket{\psi_\circ}$ is a ground state of
        $H_\circ$ with ground state energy $E_\circ$. The difference
        $\Delta_{\psi_\circ}$ is again nonnegative since $E_{\nO}^\infty$ is
        the ground state energy of $H_{\nO}^\infty$.  
        We find for the sum of the two energy differences:
        \begin{align}
            \begin{aligned}[b]
            \Delta_{\psi_\circ}+ \Delta_{\psi_{\nO}^\infty} 
            &\stackrel{\text{def}}{=}
            \bra{\psi_\circ}H_{\nO}^\infty\ket{\psi_\circ} - E_{\nO}^\infty
            + \bra{\psi_{\nO}^\infty}H_\circ\ket{\psi_{\nO}^\infty} - E_\circ
            \\
            &= \bra{\psi_\circ}H_\circ\ket{\psi_\circ} - E_{\nO}^\infty
            + \bra{\psi_{\nO}^\infty}H_{\nO}^\infty - \Omega \sigma_\p^x\ket{\psi_{\nO}^\infty} - E_\circ
            \\
            &= -\Omega\bra{\psi_{\nO}^\infty}\sigma_\p^x\ket{\psi_{\nO}^\infty} 
            \le \Omega\lVert \sigma_\p^x \rVert 
            = |\p|\Omega
            \,.
            \end{aligned}
            \label{eq:psiineq}
        \end{align}
        In the second row, we used $\ket{\psi_\circ} = P_{\nO}^{\nO}
        \ket{\psi_\circ} = P^\s \ket{\psi_\circ}$ and $\ket{\psi_{\nO}^\infty}
        = P_\nO^\infty \ket{\psi_{\nO}^\infty} = P^\s \ket{\psi_{\nO}^\infty}$
        to absorb and inject the projectors from the Hamiltonians [defined in
        \cref{eq:GeneralizedHamiltonian}~ff.]. In the last equality, we used
        the operator norm from \cref{lem:OperatorNorms}.

        \item
        Ineq.~\eqref{eq:psiineq} implies $\Delta_{\psi_{\nO}^\infty} \le
        |\p|\Omega$ since $\Delta_{\psi_\circ} \ge 0$. Plugging this into
        Ineq.~\eqref{eq:varepsilonN}, and using $\Delta_\circ \ge \Deltamin/2$,
        we finally obtain
        \begin{align}
            \varepsilon_{\nO}^2 
            \ge 1 - \Delta_{\psi_{\nO}^\infty}/\Delta_\circ
            \ge 1 - 2|\p|\Omega/\Deltamin \equiv \hat\varepsilon^2
            \,.
        \end{align}
        For $\hat\vep^2\geq 0$ we require $\Omega \le \hat\Omega :=
        \Deltamin/(2|\p|)$; in particular $\hat\varepsilon>0$ for $\Omega <
        \hat\Omega$. Crucially, $\hat\Omega$ is a constant independent of the
        system size.
        Since we require
        \cref{lem:GapStability,lem:GroundState,lem:LowerBound}, we set $\OSLE
        := \min\{\OOLA, \OOLB, \OOLD, \hat\Omega\}$ and $\LOLE := \max\{\LOLA,
        \LOLB, \LOLD\}$, so that for $\Omega \in (0, \OSLE)$ and $L \ge \LOLE$
        all used bounds are satisfied and we obtain $\varepsilon_{\nO} \ge
        \hat\varepsilon$ with a constant $\hat\varepsilon > 0$. This concludes
        the induction start.
        \end{enumerate}%
    \end{itemize}
    In conclusion, we showed that for $\Omega \in (0, \OSLE)$ and $L \ge \LOLE$
    we have $\varepsilon_1 \ge \vep^{(1)}_\lb$ with $\vep^{(1)}_\lb=\cL >
    0$, so that we can identify $\varepsilon_\lb \equiv \vep^{(1)}_\lb$. This
    concludes the proof.
\end{proof}

\section{Generalization to other groups}
\label{sec:nonabelian}

In this section, we sketch how \cref{the:ExcitationGap} can be extended to the
family of blockade Hamiltonians presented in Ref.~\cite{buchler2026quantum} --
which realizes the topologically ordered phases of all \emph{quantum double
models} that were introduced by Kitaev in Ref.~\cite{kitaev2003fault}. Quantum
doubles are defined for every finite group $G$, and their anyon types
correspond to the irreducible representations of the Drinfeld double
$\mathcal{D}(G)$~\cite{kitaev2003fault}. These models generalize the toric
code, which is the quantum double $\mathcal{D}(\mathbb{Z}_2)$, and exhibit
non-Abelian anyons if the group $G$ is non-Abelian. These phases are
conceptually interesting and have potential applications in quantum
computing~\cite{kitaev2003fault}. 

For any given finite group $G$, the construction from
Ref.~\cite{buchler2026quantum} yields a translation-invariant, two-dimensional
blockade Hamiltonian $H_G(\Omega)$ of the form \eqref{eq:HOmega}. For
$G=\mathbb{Z}_2$ one ends up with the Hamiltonian that we studied in this paper
and for which \cref{the:ExcitationGap} was proven [recall
\cref{fig:CoarseGraining}~(a)].  The Hamiltonians $H_G(\Omega)$ again have
local unitary plaquette symmetries $U_p(g)$ for $g \in G$ and $p\in\P$, which
are realized by permutations of qubits on the perimeter of the plaquette $p$.
While we still have $\comm{H_G(\Omega)}{U_p(g)}= 0$ and $\comm{U_p(g)}{U_q(h)} = 0$ for
$p\neq q$, we now have [cf.~\cref{eq:symmetries}]
\begin{align}
    U_p(g)U_p(h) &= U_p(gh),
    \qquad U_p(e) = \Id
    \label{eq:GroupSymmetries}
\end{align}
for all $g,h\in G$ on a single plaquette $p\in\P$, where $e$ is the identity of
$G$. In particular, plaquette symmetries on the \emph{same} plaquette need not
commute if $G$ is non-Abelian. This means that the full Hilbert space
$\H=\bigoplus_\vec\rho\H^\vec{\rho}$ decomposes into symmetry sectors
$\H^\vec{\rho}$, where each plaquette $p$ is labeled by an irreducible
representation $\rho_p$ of $G$. Crucially, for non-Abelian groups $G$, some of
these will be higher-dimensional.

\subparagraph{Abelian groups.}

Let us start with the simple case where $G$ is Abelian. The only notational
difference (compared to the main text) is that now irreducible representations
$\rho_p$ are labeled by complex phases (instead of signs
$\sigma_p$)~\cite{CharactersBookRef1}, so that the analog of
\cref{eq:SymmetrySectors} reads
\begin{align}
    \H^{\vec\rho} 
    := \spn{\ket{\psi} \in \H_L \; \vert \; \forall_{p \in \P}:\forall_{g \in G}: 
    \; U_p(g)\ket{\psi} = \rho_p(g)\ket{\psi}}\,.
    \label{eq:AbelianSymmetrySectors}
\end{align}
The symmetric sector $\H^\s$ is then given by the trivial representation on
every plaquette ($\vec\rho = \s$). Note that even the constraint $\prod_{p \in
\P} U_p(g) = \mathds{1}$ remains valid for all $g \in G$ if $G$ is Abelian
[cf.~\cref{eq:prodUp}].

Carefully retracing the proof, one finds that -- up to notational modifications
-- every step remains valid and \cref{the:ExcitationGap} holds: the crucial
ingredient needed for \cref{lem:GapStability} is \emph{local topological
quantum order} (LTQO)~\cite{michalakis2013}, which can be established along the same
lines as for the toric code [step (A6) in the proof of
\cref{lem:ConditionsForGapStability}]. Crucially, LTQO is still present within
\emph{every} symmetry sector $\H^\vec{\rho}$ (this fails for non-Abelian
groups, see below). Hence we can apply the gap stability theorem by Michalakis
and Zwolak~\cite[Theorem 1]{michalakis2013} within each symmetry sector, so
that the argument of \cref{lem:GapStability} gives a gap of $\Deltamin/2$ in
each of these sectors. The ground state space $[\G_k^q]^{\vec\rho}$ is now
$|G|^2$-fold topologically degenerate (which is an algebraic property of
quantum doubles for Abelian groups on the torus~\cite[Sec.~31.5]{Simon}). From
the argument of \cref{lem:GroundState} we can then again infer
$[\G_k^q]^{\vec\rho} \le \H_k^q$ and show that the Perron--Frobenius ground
state $\ket{\psi_k^q}$ of $H_k^q$ is non-degenerate and lives in the symmetric
sector $\H^\s$. The rest of the proof can be translated with the substitution
$\vec\sigma\mapsto\vec\rho$.

\subparagraph{Non-Abelian groups.}

For non-Abelian groups $G$, the symmetry sectors $\H^{\vec\rho}$ generically carry 
higher-dimensional irreducible representations $\rho_p$~\cite{CharactersBookRef1} 
as labels on some plaquettes $p \in \mathbb{P}$. These cause degenerate states in
the auxiliary Hamiltonian [the analog of $\tilde H_0^{[\vec\sigma]}(\omega)$
in~\cref{eq:AnciHam}], some of which correspond to the topological fusion channels of the
non-Abelian anyons (here: charges).  Since these fusion channels can be
distinguished \emph{locally} (pairs of charges can be close together), the
argument for LTQO [cf.~\cref{eq:LTQOCondition} and
step (A6) in the proof of \cref{lem:ConditionsForGapStability}] breaks down and
the prerequisites for the application of the gap stability
theorem~\cite[Theorem 1]{michalakis2013} cannot be established. This is not an
issue within the symmetric sector $\H^\s$ where all $U_p(g)$ act as the
identity, so that \cref{lem:GapStability} remains valid for this sector
($\vec\sigma\mapsto\s$). 
The missing gap assertion within asymmetric sectors is not a fatal flaw as we
never used these gaps \emph{directly} to establish the global gap
(cf.~\cref{fig:Spectrum}) -- we only needed the gap within the symmetric sector
and the energy separation between the global ground state space and the
lowest-energy states in all asymmetric sectors. However, we made use of the
gaps in the asymmetric sectors in the proof of \cref{lem:GroundState} to infer the
properties of the global ground state $\ket{\psi^q_k}$ of the constrained
Hamiltonian $H^q_k$. 

It turns out that a weakened version of \cref{lem:GroundState} can be salvaged
without using the gaps in the asymmetric sectors (\cref{lem:GroundState_prime}
in \cref{subsec:GapNonAbelian}). This version of the lemma only asserts that
$[\G^q_k]^\s\leq\H^q_k$, and that there exists a ground state (which is no
longer guaranteed to be unique) of the form
$\ket{\psi^q_k}=\sum_{\vec{n}\in\mathbb{Z}_2^{|\mathbb{V}|}} (-1)^{|\vec n|}
c_\vec{n}\ket{\vec n}$ where $c_\vec{n}\geq 0$ (and not $c_\vec{n}>0$);
$\ket{\psi^q_k}$ still lives in the symmetric sector $\H^\s$. It is then only a
matter of carefully validating the rest of the proof to show that we really
only need a ground state $\ket{\psi_k^q} \in \H_k^q \cap \H^\s$ with
nonnegative coefficients $c_{\vec{n}} \ge 0$ and alternating signs $(-1)^{|\vec
n|}$ (we can also recover $c_\vec{n}>0$ and uniqueness in $\H_k^q$ for $k<q$).  
Finally, note that the proofs of
\cref{lem:AsymmetricSectors,lem:LowerBound,lem:Induction} (which lead to
\cref{pro:GapBetweenSS}) are agnostic to the gaps (and ground state spaces) of
asymmetric sectors and therefore generalize straightforwardly.
In summary, one can show that \cref{the:ExcitationGap} is actually valid for a
much larger family of blockade Hamiltonians~\cite{buchler2026quantum}, which
realize all gapped, topologically ordered quantum phases of the quantum doubles
$\mathcal{D}(G)$ for arbitrary finite groups $G$, using only two-body blockade
interactions.

\section{Discussion} 
\label{sec:conclusion}

We have shown that the symmetric blockade Hamiltonian proposed by some of us in
Ref.~\cite{maier2025} has a stable excitation gap above the topologically
ordered toric code ground state for small transverse fields. We also sketched
how this proof can be extended to cover the more general construction proposed
in the follow-up work~\cite{buchler2026quantum}. In combination, this
rigorously establishes that the gapped ground states of these symmetric
blockade structures realize the quantum double phases. What makes this result
remarkable is that these models use only two-body interactions, are not exactly
solvable, and are not at the renormalization fixpoint of their respective
quantum phase.

The proof leverages the local symmetries of these models and exploits the
fact that the symmetric sector necessarily includes the ``all-empty'' state without excitations,
which makes the symmetric sector strictly larger (in dimension) than all other symmetry sectors. Quantum
fluctuations then convert this difference in Hilbert space dimension into a
finite energy gap. We emphasize that the proof is agnostic to the specific
structure of the blockade graphs, as long as the local symmetries exist. Hence, the
strategy may be useful for proving finite excitation gaps in other
topologically ordered models based on blockade interactions. In particular, the
proof straightforwardly applies to related constructions on different lattices,
with different lengths of links and different encodings on the links.
Conversely, the proof cannot be extended to models that lack a distinguished
symmetric sector, such as the pure $\mathbb{Z}_2$ lattice gauge theory.
Finally, we point out that the proof also applies in the limit $U \rightarrow
\infty$ of strict blockades, namely on the blockade-respecting Hilbert space
$\H_U = \ker{H_U}$ [recall \cref{eq:H0}]. This requires only the minor
modifications proposed in \cref{lem:GroundState_prime}
(\cref{subsec:GapNonAbelian}), which rely on a corollary of the
Perron--Frobenius theorem for general essentially nonnegative matrices on the
full Hilbert space.


\noindent\textbf{Data availability:}
 No data were generated or analyzed in this study.



\bibliographystyle{bibstyle.bst}
\bibliography{bibliography}

\clearpage
\appendix
\crefalias{section}{appendix}
\crefalias{subsection}{appendix}
\crefalias{subsubsection}{appendix}
\crefname{appendix}{Appendix}{Appendices}
\Crefname{appendix}{Appendix}{Appendices}
\titleformat{\section}{\bfseries\centering}{}{0pt}{}
\renewcommand{\thesection}{Appendix}
\renewcommand{\thesubsection}{\Alph{subsection}}
\makeatletter\renewcommand{\p@subsection}{}\makeatother 
\numberwithin{equation}{subsection}
\renewcommand{\theequation}{\thesubsection\arabic{equation}}

\section{Appendix} 
\label{sec:Appendix}

In these appendices we collect mathematical identities, repetitive
calculations, and general results that are disconnected from the main line of
reasoning of the proof. These technical lemmas are not shown in
\cref{fig:structure} and can be standalone results or based on other lemmas (in
which case they are responsible for some of the cross-references shown in
\cref{fig:structure}).

\subsection{Commutation relations}%
\label{subsec:CommutationsRelations}

Here we list and prove several commutation relations that are used
throughout the paper:
\begin{applem}[Commutation relations]\label{lem:CommutationsRelations}

    Let $k \le q$ be nonnegative integers. Fix an arbitrary plaquette $\p \in
    \P$ that defines $H_k^q$ [via \cref{eq:GeneralizedHamiltonian}], and let
    $\H^\vsigma$ be an arbitrary symmetry sector that defines the projector
    $P^\vsigma$.
    The following commutators vanish for all $p \in \P$:
    \begin{align}
        0 
        = \comm{H_k^q}{P_k^q} 
        = \comm{P_k^q}{U_p} 
        = \comm{H_k^q}{U_p} 
        = \comm{P_k^q}{P^\vsigma} 
        = \comm{H_k^q}{P^\vsigma} 
        \,.
        \label{eq:CommRelations}
    \end{align}
    This means that $H_k^q$ decomposes into blocks $[H_k^q]^\vsigma :=
    P^\vsigma H_k^q P^\vsigma = H_k^q P^\vsigma$ that are characterized by the
    eigenvalues $\vsigma$ of the $U_p$ for all $p \in \P$. Furthermore, $H_k^q$
    has a decoupled block $H_k^qP_k^q = P_k^qHP_k^q$ within $\H_k^q$.

\end{applem}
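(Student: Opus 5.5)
The plan is to verify the five commutators in a suitable order, since several follow from the others. Everything reduces to two structural facts. First, each $U_p$ permutes qubits in a way that preserves $\GB_L$, and each $\phi_p$ maps the vertex set of $\p$ to itself. Second, $P_k^q$ is diagonal in the natural basis and is a function of $\hat n_\p$ alone.

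\textbf{Step 1: $\comm{P_k^q}{U_p}=0$.} Write $P_k^q=\sum_{m=k}^{q}\Pi_m(\hat n_\p)$, where $\Pi_m$ projects onto $\hat n_\p=m$. The operator $U_p$ maps $\ket{\vec n}$ to $\ket{\phi_p\cdot\vec n}$, so it suffices to show that $n_\p$ is invariant under $\phi_p$. If $p=\p$, then $\phi_\p$ only permutes vertices within $\p$, since it fixes every site and link of $\p$ setwise. If $p\neq\p$, then $\phi_p$ acts as the identity outside $p$. It also fixes setwise each site and link bounding $p$, and in particular those shared with $\p$. Since every vertex of $\p$ lies in some site or link of $\p$, the intersection $p\cap\p$ is a union of such shared sites and links, so $\phi_p$ maps $\p$ onto itself. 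In both cases $U_p\hat n_\p U_p^\dagger=\hat n_\p$, and therefore $U_p$ commutes with every function of $\hat n_\p$.

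\textbf{Step 2: $\comm{H_k^q}{U_p}=0$.} The diagonal part $H_0$ commutes with $U_p$ because $\phi_p$ is a graph automorphism that preserves both edges and detunings; this is \cref{eq:symmetries}. For the transverse field, conjugation gives $U_p\sigma_i^xU_p^\dagger=\sigma^x_{\phi_p(i)}$. Hence $\sum_{i\notin\p}\sigma_i^x$ and $\sigma^x_\p=\sum_{i\in\p}\sigma_i^x$ are each invariant, because by Step 1 $\phi_p$ maps $\p$ and its complement to themselves. Combining this with Step 1, $U_pP_k^q\sigma_\p^xP_k^qU_p^\dagger=P_k^q\sigma_\p^xP_k^q$.

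\textbf{Step 3: the projector commutators.} Write $P^\vsigma=\prod_{p\in\P}\tfrac12(\Id+\sigma_pU_p)$. This is a polynomial in the mutually commuting $U_p$, and the product form holds because the $U_p$ commute, square to $\Id$, and the joint eigenspaces are exactly the $\H^\vsigma$. Anything that commutes with all $U_p$ therefore commutes with $P^\vsigma$. Steps 1 and 2 then give $\comm{P_k^q}{P^\vsigma}=0$ and $\comm{H_k^q}{P^\vsigma}=0$.

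\textbf{Step 4: $\comm{H_k^q}{P_k^q}=0$.} I expect this to be the only step needing care. The operator $H_0$ is diagonal, so it commutes with $P_k^q$. The field $\Omega\sum_{i\notin\p}\sigma_i^x$ does not change $n_\p$, so it commutes with every function of $\hat n_\p$. The remaining term $P_k^q\sigma_\p^xP_k^q$ does not commute with $P_k^q$ as a general matter; it does so here only because $P_k^q$ is a projector, which gives
\[
P_k^q\bigl(P_k^q\sigma_\p^xP_k^q\bigr)=P_k^q\sigma_\p^xP_k^q=\bigl(P_k^q\sigma_\p^xP_k^q\bigr)P_k^q .
\]

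\textbf{Consequences.} Block-diagonality follows directly. From $\comm{H_k^q}{P^\vsigma}=0$ and $(P^\vsigma)^2=P^\vsigma$ we get $P^\vsigma H_k^qP^\vsigma=H_k^qP^\vsigma$. For the last claim, the diagonal part and $\sum_{i\notin\p}\sigma_i^x$ commute with $P_k^q$, and $P_k^q\sigma^x_\p P_k^q$ agrees with $P_k^q\sigma_\p^x P_k^q$ inside $P_k^qHP_k^q$. Hence $H_k^qP_k^q=P_k^qH_k^qP_k^q=P_k^qHP_k^q$.
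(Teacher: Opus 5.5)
Your proposal is correct and follows essentially the same route as the paper. Both arguments rest on the same ingredients: $\phi_p(\p)=\p$, so $U_p$ conserves $n_\p$ and leaves $\sigma_\p^x$ and $\sum_{i\notin\p}\sigma_i^x$ invariant; the diagonal form of $H_0$ and $P_k^q$; idempotence of $P_k^q$ for the constrained term; and $P^\vsigma=\prod_p\tfrac12(\Id+\sigma_pU_p)$. Your case analysis for $p\neq\p$ via shared sites and links just makes explicit the paper's remark that $\phi_p$ only permutes vertices within the same site or link.
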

\begin{proof}
We prove that each commutator vanishes step-by-step; we start with the leftmost
commutator:
\begin{enumerate}[label=(\arabic*)]

    \item The projector $P_k^q$ commutes with $H_0$ as both are diagonal in the natural
        basis. It commutes trivially with 
        $\Omega\sigma_i^x$ for qubits $i \notin \p$, and it commutes with
        $\Omega P_k^q\sigma_\p^xP_k^q$ because $(P_k^q)^2 = P_k^q$.
        Therefore, $P_k^q$ commutes with every term of $H_k^q$ in
        \cref{eq:GeneralizedHamiltonian}.

    \item The graph automorphism $\phi_p$ acts locally in that it only permutes
        vertices on the same link or site. In particular, $\phi_p(\p) = \p$,
        i.e., $\phi_p$ permutes vertices within $\p$, and therefore the representation
        $U_p$ conserves the number of excitations $n_\p$ on $\p$. This implies
        that $P_k^q$ and $U_p$ commute.

    \item The classical Hamiltonian $H_0$ commutes with $U_p$ by construction.
        Additionally, we already know that $P_k^q$ and $U_p$ commute from (2). Since
        $\phi_p(\p) = \p$, $\phi_p$ conserves the set of vertices that do (not)
        belong to $\p$. Thus, $U_p$ commutes with $\Omega\sum_{i \in\p}
        \sigma_i^x$ and $\Omega\sum_{i \notin \p} \sigma_i^x$. This means that
        $U_p$ commutes with every term of $H_k^q$ in
        \cref{eq:GeneralizedHamiltonian}.

    \item We can write $P^\vsigma = \prod_{p \in \P} (\mathds{1} +
        \sigma_pU_p)/2$ for the projector onto $\H^\vsigma$. As $P_k^q$ and
        $U_p$ commute, it immediately follows that $P_k^q$ and $P^\vsigma$
        commute.

    \item As $H_k^q$ and $U_p$ commute, it immediately follows that $H_k^q$ and
        $P^\vsigma$ commute.

\end{enumerate}
This concludes the proof.
\end{proof}

\subsection{Operator norms}%
\label{subsec:OperatorNorms}

In this appendix we focus on bounds and identities of various operator norms
that are used throughout the paper.
Let us start by recapitulating the induced operator norm of a bounded linear
operator $O: \H \rightarrow \H$ on a complex Hilbert space $\H$, which is
defined via
\begin{align}
    \lVert O \rVert 
    := \sup_{\lVert \ket{\psi} \rVert = 1} \lVert O\ket{\psi} \rVert
    \,.
\end{align}
Here, $\lVert\ket{\psi}\rVert$ for $\ket{\psi} \in \H$ is the Hilbert space
norm. Besides the properties associated with every norm, the operator norm
fulfills the adjoint property $\lVert O \rVert = \lVert O^\dag \rVert =
\sqrt{\lVert O^\dag O \rVert}$, the compatibility property $\lVert O\ket{\psi}
\rVert \le \lVert O \rVert \lVert \ket{\psi} \rVert$, and the
sub-multiplicativity property $\lVert O_1 O_2 \rVert \le \lVert O_1 \rVert
\lVert O_2 \rVert$. For a normal matrix $N$ one has $\lVert N \rVert = \rho(N)$
with the spectral radius $\rho$. 
In our case, the Hilbert space norm is the Euclidean 2-norm and
its induced operator norm is called the spectral 2-norm; since $O^\dag O$ is
Hermitian and hence normal, the operator norm is
\begin{align}
    \lVert O \rVert 
    = \sqrt{\rho(O^\dag O)}
    \,.
\end{align}
With these preliminaries, we can derive some useful identities and bounds for
operator norms:
\begin{applem}[Operator norms]
    \label{lem:OperatorNorms}

    Let $\p \in \P$; we assume that the number of vertices $|\p|$ is even. Let
    $k \le |\p|$ be a nonnegative integer and let $P$ be an arbitrary nonzero
    (orthogonal) projector (e.g., $P = P_k^q$ for $k \le q \le |\p|$).  
    We can then evaluate the following operator norms exactly:
    \begin{subequations}
        \begin{align}
            \lVert P \rVert &= 1
            \,,\label{eq:L1a}\\
            \lVert \sigma_\p^x \rVert &= |\p|
            \,,\label{eq:L1b}\\
            \lVert \sigma_\p^+P_k^k \rVert 
            &= \lVert \sigma_\p^-P_{k+1}^{k+1} \rVert 
            = \sqrt{(k+1)(|\p|-k)}
            \,.
            \label{eq:L1c}
        \end{align}
        Furthermore, we can estimate the following operator norms via lower and
        upper bounds:
        \begin{alignat}{4}
            \tfrac{1}{2}|\p| \le\;& \lVert \sigma_\p^\pm \rVert 
            &&= \tfrac{1}{2}\sqrt{|\p|(|\p| + 2)} &&\le \tfrac{1}{2}(|\p| + 1)
            \,,\label{eq:L1d}\\
            \tfrac{1}{4}|\p|^2 \le\;& \lVert (\sigma_\p^\pm)^2 \rVert 
            &&= \tfrac{1}{4}|\p|(|\p| + 2) &&\le \tfrac{1}{4}(|\p| + 1)^2
            \,.\label{eq:L1e}
        \end{alignat}
    \end{subequations}
\end{applem}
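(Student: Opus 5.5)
The plan is to reduce everything to single-qubit structure and to the total excitation number on $\p$. For \eqref{eq:L1a}, a nonzero orthogonal projector satisfies $P^\dag P=P^2=P$ with spectrum $\{0,1\}$, and $1$ actually occurs; hence $\lVert P\rVert=\sqrt{\rho(P)}=1$. For \eqref{eq:L1b}, the summands $\sigma_i^x$ with $i\in\p$ commute and each has eigenvalues $\pm1$. The spectrum of $\sigma_\p^x$ is therefore $\{-|\p|,-|\p|+2,\dots,|\p|\}$, and since $\sigma_\p^x$ is Hermitian its norm is $|\p|$, attained on $\bigotimes_{i\in\p}\ket{+}_i$.

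For the ladder operators, the key observation is that $\sigma_\p^\pm=\sum_{i\in\p}\sigma_i^\pm$ are (up to the factor convention) the collective spin operators $S^\pm$ of $N:=|\p|$ spin-$\tfrac12$ particles. Indeed, with $\sigma^\pm=\tfrac12(\sigma^x\mp i\sigma^y)$ one checks on the basis that $\sigma_i^+=\ket{1}\bra{0}_i$ raises the excitation number, so $\sigma_\p^+=S^+$ with $S^z=\hat n_\p-N/2$. Qubits outside $\p$ factor out as an identity and do not affect any norm.

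I would then decompose $(\mathbb{C}^2)^{\otimes N}$ into total-spin multiplets $j\in\{N/2,N/2-1,\dots\}$. Here the assumption that $N$ is even makes the multiplet labels integers, although I only need $j\le N/2$. On a multiplet with $S^z=m$, the operator $S^-S^+$ acts as $j(j+1)-m(m+1)$. The excitation-number subspace $n_\p=k$ corresponds to $m=k-N/2$. Maximizing over $j$ gives $j=N/2$, which is allowed because $|m|\le N/2$ for every $k$. This yields
\[
\lVert \sigma_\p^+P_k^k\rVert^2=\tfrac N2\big(\tfrac N2+1\big)-m(m+1)=(k+1)(N-k)
\]
(the value $0$ at $k=N$ is consistent with this formula). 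The companion identity $\lVert\sigma_\p^-P_{k+1}^{k+1}\rVert=\lVert(\sigma_\p^+P_k^k)^\dag\rVert$ follows from $\sigma_\p^-P_{k+1}^{k+1}=P_k^k\sigma_\p^-$ together with the adjoint property. This proves \eqref{eq:L1c}.

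For \eqref{eq:L1d}, since $\sigma_\p^+$ maps each $k$-block into the $(k+1)$-block, its norm is the maximum over $k$ of the block norms. So I need to maximize $f(k)=(k+1)(N-k)$ over integers $0\le k\le N$. The real maximizer is $k=(N-1)/2$, and because $N$ is even the integer maximum is attained at $k=N/2$ (and at $k=N/2-1$). This gives $f=\tfrac N2(\tfrac N2+1)=\tfrac14N(N+2)$, which is precisely where evenness enters. The bounds $N^2\le N(N+2)\le (N+1)^2$ are immediate.

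For \eqref{eq:L1e}, $(S^+)^2$ maps the $m$-block to the $(m+2)$-block. On the spin-$j$ multiplet its squared norm is $[j(j+1)-m(m+1)][j(j+1)-(m+1)(m+2)]$, which is again maximized at $j=N/2$. I then maximize this over $m$. This is the step I expect to be the main obstacle, and I would handle it as follows. Write $a=j(j+1)$ and $x=m+1$; the product becomes $(a-x^2+x)(a-x^2-x)=(a-x^2)^2-x^2$. For $|x|\le j$ this is a decreasing function of $x^2$, because its derivative with respect to $x^2$ is $-2(a-x^2)-1<0$. It is therefore maximized at $x=0$, i.e.\ $m=-1$, an admissible value for integer $j=N/2\ge1$. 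The maximum equals $a^2$, so $\lVert(\sigma_\p^\pm)^2\rVert=a=\tfrac14N(N+2)$. The stated bounds then follow exactly as in \eqref{eq:L1d}.
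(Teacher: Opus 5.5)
Your proof is correct and follows essentially the same route as the paper. Both identify $\sigma_\p^\pm$ with collective spin ladder operators on $\p$ and read the norms off the eigenvalues $j(j+1)-m(m\pm 1)$ of $S^\mp S^\pm$ (and of the corresponding product for the squared ladder operators), maximized at $j=|\p|/2$. The differences are cosmetic: you flip the quantization axis so that $\sigma_\p^+$ raises, you get $\lVert\sigma_\p^\pm\rVert$ as the maximum over $k$ of the block norms $\sqrt{(k+1)(|\p|-k)}$, and your $(a-x^2)^2-x^2$ argument justifies the maximizer $m=-1$ for the squared operator, which the paper only asserts.
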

\begin{proof}
    We start by defining the spin operators $S_i^\alpha := \sigma_i^\alpha/2$
    for $\alpha \in \{x, y, z\}$, and write $S^\alpha := \sigma_\p^\alpha/2
    \equiv \sum_{i \in \p} S_i^\alpha$ for their sum. Note that the $S^\alpha$
    fulfill the angular momentum algebra. Additionally, we use the ladder
    operators $S_i^\pm := \sigma_i^\mp = S_i^x \pm \mathrm{i} S_i^y$ and
    $S^\pm := \sigma_\p^\mp = S^x \pm \mathrm{i} S^y$ for notational
    convenience (note the reversed signs, cf. \cref{def:LowerBound}).  
    
    \noindent In the following, we go through each (in)equality step by step: 
\begin{itemize}

\item \cref{eq:L1a}: $P$ is a Hermitian projector with
    eigenvalues $0$ and $1$. Hence its norm is given by its spectral radius
    \begin{align}
        \lVert P \rVert 
        = \rho(P)
        = 1
        \,.
    \end{align}

\item \cref{eq:L1b}: Consider $\sigma_\p^x = \sum_{i\in\p}\sigma_i^x$, which is
    Hermitian. Hence its norm is given by its spectral radius
    \begin{align}
        \lVert \sigma_\p^x \rVert 
        = \rho(\sigma_\p^x) 
        = |\p|
        \,.
    \end{align}
    Note that the other Pauli sums $\sigma_\p^y$ and $\sigma_\p^z$ are unitarily equivalent and therefore
    have the same operator norm.
    
\item \cref{eq:L1d}: We set $O = \sigma_\p^+=\sum_{i\in\p}\sigma_i^+$ with
    the adjoint $O^\dag = \sigma_\p^-$; then we can write the product $O^\dag
    O$ as
    \begin{align}
        \sigma_\p^-\sigma_\p^+ 
        = S^+S^- 
        = (S^x)^2 + (S^y)^2 - \mathrm{i} \comm{S^x}{S^y} 
        = \vec S^2 - S^z(S^z - 1)
        \,.
    \end{align}
    This operator has eigenvalues $S(S+1) - M(M-1)$ with $S \in
    \{0,\ldots,\tfrac{1}{2}|\hat p|\}$ integer for even $|\hat p|$ and $M \in
    \{-S,\ldots,+S\}$. The maximal eigenvalue (by absolute value) is obtained
    for $S=\tfrac{1}{2}|\p|$ and $M\in\{0,1\}$, so that $\lVert
    \sigma_\p^\pm \rVert = \tfrac{1}{2}\sqrt{|\p|(|\p| + 2)}$. The inequalities
    in \eqref{eq:L1d} follow straightforwardly.

\item \cref{eq:L1e}: Set $O = (\sigma_\p^+)^2$ with the adjoint $O^\dag =
    (\sigma_\p^-)^2$; then we can write $O^\dag O$ as
    \begin{align}
        (\sigma_\p^-)^2(\sigma_\p^+)^2 
        = (S^+)^2(S^-)^2 
        &= S^+[\vec S^2 - S^z(S^z - 1)]S^-
        \nonumber\\
        &= \left[\vec S^2 - (S^z - 1)(S^z - 2)\right]
        \left[\vec S^2 - S^z(S^z - 1)\right]\,.
    \end{align}
    This operator has eigenvalues $[S(S+1) - (M-1)(M-2)][S(S+1) - M(M-1)]$ with
    $S \in \{0,\ldots,\tfrac{1}{2}|\hat p|\}$ integer for even $|\hat p|$ and
    $M \in \{-S,\ldots,+S\}$. The maximal eigenvalue (by absolute value) is
    obtained for $(S,M) = (\tfrac{1}{2}|\p|, 1)$, so that $\lVert
    (\sigma_\p^\pm)^2 \rVert = \tfrac{1}{4}|\p|(|\p| + 2)$. With \cref{eq:L1d},
    this implies that the upper bound $\lVert (\sigma_\p^\pm)^2 \rVert \le
    \lVert \sigma_\p^\pm \rVert^2$ (sub-multiplicativity) is sharp. The
    inequalities in \eqref{eq:L1e} follow straightforwardly.

\item \cref{eq:L1c}: Note that $\lVert \sigma_\p^+P_k^k \rVert = \lVert
    P_{k+1}^{k+1}\sigma_\p^+ \rVert = \lVert \sigma_\p^-P_{k+1}^{k+1} \rVert$
    by operator identity and the adjoint property, respectively.
    Hence, it is sufficient to consider $O =
    \sigma_\p^+ P_k^k$ with the adjoint $O^\dag = P_k^k\sigma_\p^-$. For
    $P_k^k$ we can write symbolically
    \begin{align}
        P_k^k
        = \delta_{k, \hat{n}_\p} 
        = \delta_{k, \tfrac{1}{2}|\p|-S^z}
        \,,
        \label{eq:Pkq}
    \end{align}
    which enforces $M^{(k)} := \tfrac{1}{2}|\p| - k$ for the quantum number of 
    $S^z$. Furthermore, \cref{eq:Pkq} shows that $P_k^k$ commutes with both
    $\vec S^2$ and $S^z$; thus we can rewrite the product $O^\dag O$ as
    \begin{align}
        P_k^k\sigma_\p^-\sigma_\p^+ P_k^k
        = P_k^k S^+ S^- P_k^k
        = [\vec S^2 - S^z(S^z - 1)] \; \delta_{k, \tfrac{1}{2}|\p|-S^z}
        \,.
    \end{align}
    This operator has the eigenvalues $[S(S + 1) - M^{(k)}(M^{(k)} - 1)]$ with $S \in
    \{|M^{(k)}|,\ldots,\hlf|\p|\}$, and eigenvalue $0$ on the kernel of $P_k^k$
    (note that $M^{(k)}$ is an integer for even $|\p|$).  
    The maximal eigenvalue (by absolute value) is obtained for $S =
    \tfrac{1}{2}|\p|$, so that $\lVert\sigma_\p^+ P_k^k\rVert =
    \sqrt{(k+1)(|\p|-k)}$.  With \cref{eq:L1a,eq:L1d}, this implies that the
    upper bound $\lVert \sigma_\p^+P_k^k \rVert \le \lVert \sigma_\p^+ \rVert$
    (sub-multiplicativity) is sharp for $k \in \tfrac{1}{2}|\p| - \{0, 1\}$ (in
    this case, we have $M^{(k)} \in \{0,1\}$, respectively).
    %
    %
\end{itemize}
This concludes the proof.
\end{proof}

\subsection{Bounds and matrix elements}%
\label{subsec:BoundsAndMatrixElements}

In \cref{def:LowerBound} we introduced the energies $M_k$ and $D_k$ that we
used in the proof of \cref{lem:LowerBound}. Here we estimate these quantities
by deriving lower and upper bounds. To this end, we first evaluate some matrix
elements needed for the derivation. This step is based on the coefficients
$(-1)^{|\vec n|}c_{\vec n}$ of the ground state $\ket{\psi_k^\infty}$ 
alternating in sign with the excitation number $|\vec n|$ (\cref{lem:GroundState}).
\begin{applem}[Bounds]
    \label{lem:Bounds}
    There exist $\OORC > 0$, $\LORC \in \mathbb{N}$, such that for $\Omega
    \in (0, \OORC]$, $L \ge \LORC$,
    and for every positive integer $k$ with $k \le \nO$, 
    the following holds: 
    Consider $\varepsilon_k = \lVert P_k^k\ket{\psi_k^\infty}\rVert$ for a fixed
    plaquette $\p \in \P$ and assume $\varepsilon_k > 0$. Then,
    \begin{subequations}
        \begin{align}
            \Omega\sqrt{k}\varepsilon_k \le M_k 
            &\le \Omega\sqrt{k}\sqrt{|\p| - k + 1}
            \,,\label{eq:lem_M_k}\\
            D_k &\le \Deltamax + \frac{\Omega(|\p| + 1)^2}{4\sqrt{k}\varepsilon_k}
            \,.\label{eq:lem_Delta_k} 
        \end{align}
    \end{subequations}
\end{applem}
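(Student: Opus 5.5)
We need to bound $M_k=\bra{\psi_k^\infty}H_{k-1}^\infty\ket{\psi_\perp^{(k)}}$ and $D_k$. Throughout we take $\OORC:=\OOLB$ and $\LORC:=\LOLB$, so that \cref{lem:GroundState} applies to $\ket{\psi_k^\infty}$. Since $k-1<k\le\nO$, the first bullet with $q=\infty$ then gives $\ket{\psi_k^\infty}=\sum_{\vec n}(-1)^{|\vec n|}c_{\vec n}\ket{\vec n}$ with $c_{\vec n}>0$, supported on patterns with at least $k$ excitations on $\p$. Write $\ket{\phi_k}:=P_k^k\ket{\psi_k^\infty}$, so $\lVert\phi_k\rVert=\varepsilon_k$, and write $\ket{\psi_\perp}=\sigma_\p^-\ket{\phi_k}/c_\perp$. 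This state has exactly $k-1$ excitations on $\p$.

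\textbf{Matrix element $M_k$.} The parts $H_0$ and $\Omega\sum_{i\notin\p}\sigma_i^x$ of $H_{k-1}^\infty$ preserve $n_\p$, so they give no coupling between the supports of the two states. In $\Omega P_{k-1}^\infty\sigma_\p^xP_{k-1}^\infty$, only $\sigma_\p^-$ connects $k$ to $k-1$ excitations. Hence
\begin{align}
M_k=\Omega\bra{\phi_k}\sigma_\p^+\ket{\psi_\perp}=\frac{\Omega}{c_\perp}\,\lVert\sigma_\p^-\phi_k\rVert^2=\Omega\,c_\perp .
\end{align}
So $M_k=\Omega\lVert\sigma_\p^-\phi_k\rVert$, and $M_k\ge0$ is automatic.

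For the upper bound we use $\lVert\sigma_\p^-P_k^k\rVert=\sqrt{k(|\p|-k+1)}$ from \cref{lem:OperatorNorms}, together with $\varepsilon_k\le1$.

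For the lower bound we expand $\lVert\sigma_\p^-\phi_k\rVert^2=\sum_{\vec m}\big|\sum_{i:\,m_i=0}\pm c_{\vec m+e_i}\big|^2$. The sign structure from \cref{lem:GroundState} makes every contribution to a given $\vec m$ carry the same sign $(-1)^{|\vec m|+1}$, so there is no cancellation. Dropping the cross terms, which are all nonnegative, gives
\begin{align}
\lVert\sigma_\p^-\phi_k\rVert^2\ge\sum_{\vec n}c_{\vec n}^2\,n_\p(\vec n)=k\,\varepsilon_k^2 .
\end{align}
This yields $M_k\ge\Omega\sqrt k\,\varepsilon_k$. The same estimate shows $c_\perp>0$ whenever $\varepsilon_k>0$, as required in \cref{def:LowerBound}.

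\textbf{Diagonal term $D_k$.} We write $\bra{\psi_\perp}H_{k-1}^\infty\ket{\psi_\perp}=\bra{\phi_k}\sigma_\p^+H_{k-1}^\infty\sigma_\p^-\ket{\phi_k}/c_\perp^2$ and commute $H_{k-1}^\infty$ through $\sigma_\p^-$.

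\emph{Step 1: the commuting part.} The terms $\Omega\sum_{i\notin\p}\sigma_i^x$ commute with $\sigma_\p^-$.

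\emph{Step 2: the classical part.} Removing one excitation on $\p$ changes $H_0$ diagonally by $+\Delta_i$ minus $U$ times the blockade count. Since the blockade term is nonnegative and enters with a minus sign, this piece contributes at most $\Deltamax$ per removed excitation.

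\emph{Step 3: the residual.} What remains is $\bra{\phi_k}\sigma_\p^+\sigma_\p^-\,\tilde H\ket{\phi_k}/c_\perp^2$, where $\tilde H$ is the number-preserving part. Using that $\ket{\psi_k^\infty}$ is an eigenvector of $H_k^\infty$, we replace $\tilde H\ket{\phi_k}$ by $E_k^\infty\ket{\phi_k}$ minus the $\Omega\sigma_\p^{\pm}$ leakage to $k\pm1$. The leakage to $k-1$ is absent here.

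\emph{Step 4: bounding the leakage.} The leftover term is $\Omega\,\bra{\phi_k}\sigma_\p^+\sigma_\p^-\sigma_\p^-P_{k+1}^{k+1}\ket{\psi_k^\infty}/c_\perp^2$, plus the terms from the $\sigma_\p^-$ components. Its magnitude is at most $\Omega\lVert(\sigma_\p^-)^2\rVert\cdot\lVert\sigma_\p^-\phi_k\rVert/c_\perp^2$. Using \cref{eq:L1e}, $c_\perp=\lVert\sigma_\p^-\phi_k\rVert$ and $c_\perp\ge\sqrt k\,\varepsilon_k$, this becomes $\Omega(|\p|+1)^2/(4\sqrt k\,\varepsilon_k)$.

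Adding the three contributions gives $D_k\le\Deltamax+\Omega(|\p|+1)^2/(4\sqrt k\,\varepsilon_k)$.

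\textbf{Main obstacle.} Step 4 is the hard part. The eigenvalue equation only holds for the full $\ket{\psi_k^\infty}$, not its projection $\ket{\phi_k}$, so the bookkeeping of which terms cancel and which leave an $\Omega$-leakage must be done carefully. The step also needs the exact structure $\sigma_\p^+\sigma_\p^-$ versus $(\sigma_\p^-)^2$ to land on exactly one factor of $c_\perp^{-1}$. I would first try to organize this by writing $H_k^\infty\ket{\psi_k^\infty}=E_k^\infty\ket{\psi_k^\infty}$ projected with $P_k^k$. That gives $P_k^k\tilde HP_k^k\ket{\phi_k}=E_k^\infty\ket{\phi_k}-\Omega P_k^k\sigma_\p^-P_{k+1}^{k+1}\ket{\psi_k^\infty}$, and the remaining term is then controlled via Cauchy--Schwarz.
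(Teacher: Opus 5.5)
Your proposal is correct in outline and follows the paper's proof almost step for step. The shared ingredients are:
\begin{enumerate}
\item the identity $M_k=\Omega c_\perp^{(k)}$;
\item the operator-norm upper bound on $c_\perp^{(k)}$;
\item the no-cancellation lower bound $c_\perp^{(k)}\ge\sqrt{k}\,\varepsilon_k$, which comes from the alternating-sign form of $\ket{\psi_k^\infty}$;
\item for $D_k$, a commutator split into a detuning/blockade piece bounded by $\Deltamax$, plus a single $\Omega$-leakage piece controlled by Cauchy--Schwarz and $\lVert(\sigma_\p^\pm)^2\rVert\le\frac14(|\p|+1)^2$.
\end{enumerate}
Your ket-side bookkeeping commutes $H_\p$ through $\sigma_\p^-$ and projects the eigenvalue equation with $P_k^k$. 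This is the mirror image of the paper's bra-side commutator $P_k^\infty[P_k^k\sigma_\p^+,H]P_{k-1}^{k-1}$ and produces the same two contributions. Your leftover term $-\Omega\bra{\phi_k}\sigma_\p^+(\sigma_\p^-)^2P_{k+1}^{k+1}\ket{\psi_k^\infty}$ is the only one; there are no further ``terms from the $\sigma_\p^-$ components.''

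The one place where your stated justification does not suffice is Step 2. The classical piece equals $\sum_{i,j\in\p}\Delta_j\bra{\phi_k}\sigma_i^+\sigma_j^-\ket{\phi_k}-U\sum(\cdots)\bra{\phi_k}\sigma_i^+\hat n_l\sigma_j^-\ket{\phi_k}$. This is not a diagonal quantity, so ``at most $\Deltamax$ per removed excitation'' is not an operator inequality, and it fails for general states. For example, take $\ket{\phi}=\ket{10}-(1-\eta)\ket{01}$ on two qubits of $\p$ with $\Delta_1>\Delta_2$. Then the ratio of this term to $\lVert\sigma_\p^-\phi\rVert^2=\eta^2$ is $(\Delta_1-(1-\eta)\Delta_2)/\eta$, which diverges as $\eta\to0$, because near-cancellations in $\sigma_\p^-\ket{\phi}$ shrink the denominator.

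What makes the bound true is that every $i\neq j$ matrix element above is nonnegative, including those carrying $\hat n_l$. This follows from the same sign argument you already used for $c_\perp^{(k)}$. The paper establishes exactly these two nonnegativity statements before bounding $D_k$. With those you can replace each $\Delta_j$ by $\Deltamax$ and drop the $U$-term, and the argument is complete.

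A minor citation fix: the alternating-sign form of $\ket{\psi_k^\infty}$ is the second bullet of \cref{lem:GroundState}, not the first. It applies because $k\le\nO$ and $k<q=\infty$.
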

\begin{proof}
    We assume $\varepsilon_k = \lVert P_k^k\ket{\psi_k^\infty}\rVert > 0$ so
    that all expressions are well-defined. When working with the constrained
    Hamiltonian $H_k^q$ [defined in \cref{eq:GeneralizedHamiltonian}], we
    implicitly fix a plaquette $\p \in \P$. Let $i, j \in \p$ and $l \in \mathbb{V}$ denote three
    (not necessarily distinct) qubits until stated otherwise.  
    We start by showing that the matrix elements 
    \begin{subequations}
        \begin{align}
        \bra{\psi_k^\infty}P_k^k\sigma_i^+\sigma_j^-P_k^k\ket{\psi_k^\infty} 
        &\ge 0 
        \label{eq:LBMEOhnen}
        \,,\\
        \bra{\psi_k^\infty}P_k^k\sigma_i^+\hat{n}_l\sigma_j^-P_k^k\ket{\psi_k^\infty} 
        &\ge 0 
        \label{eq:LBMEMitn}
        \,,
    \end{align}
    \end{subequations}
    are nonnegative.
    Let us combine both cases by writing $\hat{n}_l^s \in \{\mathds{1},
    \hat{n}_l\}$ for $s \in \{0, 1\}$, respectively.
    For $i = j \in \p$, the claim of nonnegativity is trivial, since
    \begin{align}
        \bra{\psi_k^\infty}P_k^k\sigma_i^+\hat{n}_l^s\sigma_i^-P_k^k\ket{\psi_k^\infty} 
        = \lVert \hat{n}_l^s\sigma_i^-P_k^k\ket{\psi_k^\infty}\rVert^2 
        \ge 0\,.
    \end{align}
    Now let $i \neq j$ with $i,j \in \p$. \cref{lem:GroundState} asserts that
    for $\Omega \in (0, \OOLB]$ and $L \ge \LOLB$ we can write
    $\ket{\psi_k^\infty} = \sum_{\vec{n} \in
    [\mathbb{Z}_2^{|\V|}]_k^\infty}(-1)^{|\vec{n}|}c_{\vec{n}}\ket{\vec{n}}$
    with $c_{\vec{n}} > 0$ in the natural basis. Let us therefore choose $\OORC
    = \OOLB$ and $\LORC = \LOLB$ henceforth.
    Then the matrix elements read
    \begin{align}
        \bra{\psi_k^\infty}P_k^k\sigma_i^+\hat{n}_l^s\sigma_j^-P_k^k\ket{\psi_k^\infty} 
        = \sum_{\vec{n}, \vec{m} \in [\mathbb{Z}_2^{|\V|}]_k^\infty}(-1)^{|\vec{n}| + |\vec m|}
        c_{\vec{n}}c_{\vec m}\bra{\vec{n}}P_k^k\sigma_i^+\hat{n}_l^s\sigma_j^-P_k^k\ket{\vec m}
        \,.
    \end{align}
    Here, $\bra{\vec{n}}P_k^k\sigma_i^+\hat{n}_l^s\sigma_j^-P_k^k\ket{\vec m}$
    is only nonzero if $n_h = m_h$ for qubits $h \notin \p$, and if $n_\p = k =
    m_\p$ on $\p$. (Remember that $m_\p$ is the $\ket{\vec m}$-eigenvalue of
    $\hat{n}_\p=\sum_{h \in \p}\hat n_h$.) This means in particular that for the
    total number of excitations $|\vec{n}| = |\vec m|$.
    This allows us to rewrite the matrix elements
    \begin{align}
        \bra{\psi_k^\infty}P_k^k\sigma_i^+\hat{n}_l^s\sigma_j^-P_k^k\ket{\psi_k^\infty} 
        = \sum_{\vec{n}, \vec{m} \in [\mathbb{Z}_2^{|\V|}]_k^\infty} 
        c_{\vec{n}}c_{\vec m}\bra{\vec{n}}P_k^k\sigma_i^+\hat{n}_l^s\sigma_j^-P_k^k\ket{\vec m} 
        \ge 0\,.
    \end{align}
    For the last inequality, note that $c_{\vec{n}}, c_{\vec m} > 0$, and that
    all matrix elements of $\hat{n}_l^s$, $\sigma_j^\pm$, and $P_k^k$ are
    nonnegative in the natural basis $\ket{\vec m}$. 
    We can now use these matrix elements to evaluate the quantities introduced
    in \cref{def:LowerBound}. We begin with the norm $c_\perp^{(k)} = \lVert
    \sigma_\p^-P_k^k\ket{\psi_k^\infty} \rVert$ that we use to normalize
    $\ket{\psi_\perp^{(k)}} =
    \sigma_\p^-P_k^k\ket{\psi_k^\infty}/c_\perp^{(k)}$:
    \begin{enumerate}[label=(\arabic*)]

        \item We start by upper-bounding $c_\perp^{(k)}$ using the
        compatibility of the operator norm:
        \begin{align}
            c_\perp^{(k)}
            &=
            \lVert \sigma_\p^-P_k^k\ket{\psi_k^\infty}\rVert
            \le 
            \lVert \sigma_\p^-P_k^k \rVert \lVert\ket{\psi_k^\infty}\rVert
            = \sqrt{k}\sqrt{|\p| - k + 1}
            \,.
            \label{eq:incperp1}
        \end{align}
        In the last step, we used $\lVert\ket{\psi_k^\infty}\rVert = 1$, and 
        \cref{eq:L1c} from \cref{lem:OperatorNorms} (with $k$ replaced by $k-1$)
        to evaluate the operator norm.
        We continue by lower-bounding $c_\perp^{(k)}$ using the nonnegativity
        of the matrix element \eqref{eq:LBMEOhnen}:
        \begin{align}
            [c_\perp^{(k)}]^2
            =\lVert \sigma_\p^-P_k^k\ket{\psi_k^\infty}\rVert^2
            = \sum_{i,j \in \p}\bra{\psi_k^\infty}P_k^k\sigma_i^+\sigma_j^-P_k^k\ket{\psi_k^\infty} 
            \ge \bra{\psi_k^\infty}P_k^k\hat{n}_\p P_k^k\ket{\psi_k^\infty} 
            = k\varepsilon_k^2 
            \label{eq:incperp2}
        \end{align}
        with $\varepsilon_k=\lVert P_k^k\ket{\psi_k^\infty}\rVert$. For the
        lower bound, we can drop all terms with $i \neq j$ for $i,j \in \p$
        as they are nonnegative by \cref{eq:LBMEOhnen}, and we use $\sigma_i^+\sigma_i^- = \hat n_i$.
        
        In summary, we obtain the bounds $\sqrt{k}\varepsilon_k \le
        c_\perp^{(k)} \le \sqrt{k}\sqrt{|\p| - k + 1}$.

    \item We can now estimate the energies $M_k$ and $D_k$.
        Consider $M_k$ introduced in \cref{eq:EnergyTerms} in
        \cref{def:LowerBound}; it can be evaluated straightforwardly:
        \begin{align}
            \begin{split}
                M_k 
                &= \bra{\psi_k^\infty}H_{k-1}^\infty\ket{\psi_\perp^{(k)}} 
                \\
                &= \bra{\psi_k^\infty}P_k^\infty H_{k-1}^\infty P_{k-1}^{k-1}\sigma_\p^-\ket{\psi_k^\infty}/c_\perp^{(k)}
                \\
                &= \Omega\bra{\psi_k^\infty} P_k^\infty \sigma_\p^+P_{k-1}^{k-1}\sigma_\p^-\ket{\psi_k^\infty}/c_\perp^{(k)} 
                \\
                &= \Omega\braket{\psi_\perp^{(k)} | \psi_\perp^{(k)}}c_\perp^{(k)} 
                = \Omega\,c_\perp^{(k)}
                \,.
            \end{split}
        \end{align}
        In the second line, we used that
        $\ket{\psi_k^\infty}\in\H_k^\infty$ from \cref{lem:GroundState}; in
        the third line, we used that $P_k^\infty [H_\p + \Omega\sigma_\p^-]
        P_{k-1}^{k-1} = 0$.
        Together with the Ineqs.~\eqref{eq:incperp1} and \eqref{eq:incperp2}
        for $c_\perp^{(k)}$, we directly obtain the first result
        $\Omega\sqrt{k}\varepsilon_k \le M_k \le \Omega\sqrt{k}\sqrt{|\p| - k +
        1}$.

        \item Now consider $D_k$ introduced in \cref{eq:EnergyTerms} in
        \cref{def:LowerBound}; we find
        \begin{align}
            \begin{split}
                D_k 
                &= \bra{\psi_\perp^{(k)}}H_{k-1}^\infty\ket{\psi_\perp^{(k)}} - E_k^\infty
                \\
                &= \bra{\psi_k^\infty}\sigma_\p^+
                P_{k-1}^{k-1}HP_{k-1}^{k-1}\sigma_\p^-\ket{\psi_k^\infty}/[c_\perp^{(k)}]^2 - E_k^\infty 
                \\
                &= \bra{\psi_k^\infty}P_k^\infty \comm{P_k^k\sigma_\p^+}{H} P_{k-1}^{k-1}\sigma_\p^-\ket{\psi_k^\infty}/[c_\perp^{(k)}]^2
                \,.
            \end{split}
            \label{eq:proof-dk-1}
        \end{align}
        In the third line, we use again that $\ket{\psi_k^\infty} \in
        \H_k^\infty$ from \cref{lem:GroundState} to obtain $P_k^\infty H P_k^k
        = P_k^\infty H_k^\infty P_k^k$.
        In the commutator, only three contributions survive that have support
        in $\p$:
        \begin{align}
            P_k^\infty\comm{P_k^k\sigma_\p^+}{H}P_{k-1}^{k-1} 
            = 
            \begin{aligned}[t]
            &\sum_{i \in \p}\Delta_i\sigma_i^+P_{k-1}^{k-1} 
            - \Omega (\sigma_\p^+)^2P_{k-1}^{k-1}
            \\
            &- U \sum_{\{i,j\} \in \E}
            \left(\delta_{i \in\p}\sigma_i^+\hat{n}_j + \delta_{j \in \p}\sigma_j^+\hat{n}_i\right)
            P_{k-1}^{k-1} 
            \,.
            \end{aligned}
            \label{eq:proof-dk-2}
        \end{align}
        Terms with support outside of $\p$ commute with $P_k^k\sigma_\p^+$, and
        for the detuning term we use the commutator
        $\comm{\sigma_\p^+}{\hat{n}_i} = -\sigma_i^+\delta_{i \in \p}$.
        %
        %
        We use the notation $\delta_{i \in \p} = 1$ if $i \in \p$
        and $0$ otherwise. Note that the commutator
        $P_k^\infty\comm{P_k^k\sigma_\p^+}{\sigma_\p^-}P_{k-1}^{k-1}$ vanishes
        due to the projectors.
        
        Plugging \cref{eq:proof-dk-2} into \cref{eq:proof-dk-1} yields the
        expression
        \begin{align}
            \begin{split}
                D_k 
                = 
                &\sum_{i,j \in \p}\Delta_i\frac{\bra{\psi_k^\infty} P_k^k\sigma_i^+ \sigma_j^-P_k^k \ket{\psi_k^\infty}}{[c_\perp^{(k)}]^2} 
                - \Omega \frac{\bra{\psi_k^\infty} (\sigma_\p^+)^2 \ket{\psi_\perp^{(k)}}}{c_\perp^{(k)}}
                \\
                &- U\sum_{\{i,l\} \in \E}\sum_{j \in \p}
                \frac{\bra{\psi_k^\infty} P_k^k
                \left(\delta_{i \in\p}\sigma_i^+\hat{n}_l + \delta_{l \in \p}\sigma_l^+\hat{n}_i\right)
                \sigma_j^-P_k^k\ket{\psi_k^\infty}}{[c_\perp^{(k)}]^2}
                \,, 
            \end{split}
            \label{eq:LongDkterm}
        \end{align}
        where we rediscover the matrix elements from the Ineqs.
        \eqref{eq:LBMEOhnen} and \eqref{eq:LBMEMitn} in the first and third
        term, respectively.
        
        We know already that both matrix elements are nonnegative, so we can
        upper-bound
        \begin{align}
            D_k 
            \le \Deltamax
            + \Omega \frac{|\bra{\psi_k^\infty} (\sigma_\p^+)^2 \ket{\psi_\perp^{(k)}}|}{c_\perp^{(k)}}
            \le \Deltamax
            + \Omega \frac{\lVert(\sigma_\p^+)^2\rVert}{c_\perp^{(k)}}
            \le \Deltamax + \Omega\frac{(|\p| + 1)^2}{4\sqrt{k}\varepsilon_k}
            \,. \label{eq:DkFinal}
        \end{align}
        In the first term we use the nonnegativity from \cref{eq:LBMEOhnen} to
        upper-bound $\Delta_i$ by $\Deltamax$, and we identify $\lVert \ket{\psi_\perp^{(k)}} \rVert^2 = 1$
        from the double sum. The third term is nonpositive by $U > 0$ and by \cref{eq:LBMEMitn},
        and can be upper-bounded by zero. 
        For the second inequality in \eqref{eq:DkFinal} we use Cauchy--Schwarz
        and the compatibility of the operator norm. Finally, we upper-bound the
        operator norm using \cref{eq:L1e} from \cref{lem:OperatorNorms} and
        lower-bound $c_\perp^{(k)}$ using Ineq.~\eqref{eq:incperp2}.
    \end{enumerate}
    This concludes the proof.
\end{proof}

\subsection{Ground state degeneracy}%
\label{subsec:GSD}

In this appendix, we derive the dimension of the classical ground state space
$\G_0$ and its intersection with a symmetry sector $\H^\vsigma$.
By construction of the blockade graph $\GB_L$, its maximum-weight independent
sets are in one-to-one correspondence with the simultaneous ground states of
the vertex operators of the toric code Hamiltonian, i.e., the set of all
\emph{closed-loop patterns}. For the toric code, the dimension of their span 
(in symmetry sector $\H^\vsigma$) is well known~\cite{nielsen2010}. Here we re-derive this
result in the stabilizer formalism for the classical blockade Hamiltonian $H_0$
that is induced by $\GB_L$~\cite{maier2025}.

\begin{applem}[Ground state degeneracy]
    \label{lem:Stabilizer}
    Let $\H^\vsigma$ be an arbitrary symmetry sector. The Hamiltonian $H_0$ is
    gapped by $\Deltamin$, with a degenerate ground state space $\G_0$ that has
    the dimensions
    \begin{subequations}
        \begin{align}
            \dim{(\G_0)} &= 2^{|\P| + 1}
            \,, 
            \label{eq:dimG0}\\
            \dim{(\H^\vsigma\cap\G_0)} &= 4
            \,. 
            \label{eq:dimG0sigma}
        \end{align}
    \end{subequations}
\end{applem}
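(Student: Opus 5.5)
The plan is to translate the classical blockade problem into the toric-code stabilizer language on the honeycomb grid $\GH_L$, and then count. The three claims are the gap $\Deltamin$, the value of $\dim\G_0$, and the value of $\dim(\H^\vsigma\cap\G_0)$; I take them in that order.

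\emph{Gap of $H_0$.} Since $H_0$ is diagonal in the natural basis, its spectrum is the set of classical energies $E(\vec n)=U\sum_{\{i,j\}\in\E}n_in_j-\sum_i\Delta_in_i$. I would show that every non-MWIS pattern has energy at least $E_0+\Deltamin$, arguing locally site by site. With $U\gg\Deltamax$, any pattern that violates a blockade costs at least $U-\Deltamax$, so it suffices to consider independent sets. An independent set has weight $\sum_{s}w_s(\vec n)+\sum_l w_l(\vec n)$, organised by sites and links. By construction of $\GB_L$ (cf.\ \cref{fig:CoarseGraining} and Ref.~\cite{maier2025}), every site and link carries exactly one excitation in a maximal-weight configuration. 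The detunings are $\Deltamin$ on link qubits and $\Deltamax=2\Deltamin$ on site qubits, so all weights are integer multiples of $\Deltamin$. Hence any non-optimal independent set loses at least $\Deltamin$, and the gap is exactly $\Deltamin$.

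\emph{Dimension of $\G_0$.} Next I would establish the bijection between MWISs and closed-loop configurations $c\in C_1(\GH_L;\mathbb Z_2)$ with $\partial c=0$, following Ref.~\cite{maier2025}. Each link has two states, excited qubit $1$ or $2$, read as $0$ or $1$. The site state is then fixed by the parities of its three incident links, and exactly the even-parity combinations are blockade-compatible. The number of cycles is $|Z_1|=2^{|\L|-|\S|+1}$, since $\partial$ has image of codimension one in $C_0$. Euler's formula on the torus, $|\S|-|\L|+|\P|=0$, turns this into $2^{|\P|+1}$, which proves \cref{eq:dimG0}.

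\emph{Dimension of $\H^\vsigma\cap\G_0$.} I would first check that $U_p$ acts on $\G_0$ as the toric-code plaquette flip $\ket{c}\mapsto\ket{c+\partial p}$, which \cref{fig:CoarseGraining}~(a) illustrates. The plaquette boundaries generate $B_1$, with a single relation \cref{eq:prodUp}, so $|B_1|=2^{|\P|-1}$ and $Z_1/B_1\cong\mathbb Z_2^2$.

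On each coset $c+B_1$, the operators $U_p$ therefore generate a free action of $B_1$. Its span decomposes into the $2^{|\P|-1}$ characters that satisfy $\prod_p\sigma_p=1$, each with multiplicity one. Explicitly, the vector $\sum_{b\in B_1}\chi_\vsigma(b)\ket{c+b}$ is a common eigenvector of all $U_p$ with eigenvalues $\sigma_p$. Taking the four homology classes gives $\dim(\H^\vsigma\cap\G_0)=4$, consistent with $4\cdot 2^{|\P|-1}=2^{|\P|+1}$.

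The main obstacle is the first step, namely showing rigorously that the local MWIS structure on sites and links forces the even-parity constraint and yields no further degeneracy. This requires a finite case check on the four-qubit site gadget together with its three adjacent links. I would carry it out by enumerating the independent sets of one site gadget and its link qubits, and verifying that the weight is maximal exactly for the even-parity choices.
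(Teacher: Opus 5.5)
Your proposal is correct and shares the paper's skeleton. The gap follows because all blockade-free energies are integer multiples of $\Deltamin$; you make explicit the $U-\Deltamax$ cost of blockade violations, which the paper leaves implicit. Both arguments also import the MWIS--closed-loop bijection from Ref.~\cite{maier2025} and use that $U_p$ acts on $\G_0$ as the plaquette flip $\ket{c}\mapsto\ket{c+\partial p}$. The routes differ only in the counting. The paper obtains $\dim\G_0=4\cdot 2^{|\P|-1}$ by invoking the four homology classes of the torus. It then gets $\dim(\H^\vsigma\cap\G_0)=2^{N-K}$ from the stabilizer formula with $N=|\P|+1$ and $K=|\P|-1$. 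This is quick, but it tacitly requires identifying $\G_0$ with $|\P|+1$ qubits on which the $U_p$ act as $\sigma^x$-type Pauli strings. You instead compute $|Z_1|$ directly from rank--nullity and Euler's formula $|\S|-|\L|+|\P|=0$. Hence $Z_1/B_1\cong\mathbb{Z}_2^2$ comes out as a consequence of $|B_1|=2^{|\P|-1}$ rather than being an input; the single relation among the $\partial p$ follows from connectivity of the dual lattice. You also replace the stabilizer formula by decomposing the regular representation of $B_1$ on each coset. This gives the explicit basis vectors $\sum_{b\in B_1}\chi_\vsigma(b)\ket{c+b}$ and shows directly why exactly the sectors with $\prod_p\sigma_p=1$ are nonzero. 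You flag the local case check as the main obstacle; the paper does not carry it out and simply cites Ref.~\cite{maier2025}. If you do it, split each link's detuning between its two end sites, as the paper does with the factor $\Delta_i/2$ on bond qubits in $Q_n$ for condition (A1). Otherwise shared links are double-counted. With the split, the total weight is a sum of site-local terms, so a pattern is an MWIS exactly when every site is locally optimal.
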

\begin{proof}
    To show \cref{eq:dimG0}, note that the blockade Hamiltonian $H_0$ has been
    designed such that its ground states map one-to-one to patterns of
    \emph{closed loops} on the honeycomb graph (for details see
    Ref.~\cite{maier2025}). These patterns are generated from the empty pattern
    by application of plaquette operators $U_p$ and nontrivial loop operators
    around the torus. Due to $\prod_{p\in \P}U_p=\Id$, only $2^{|\P|-1}$
    distinct patterns can be generated by the $|\P|$ plaquette operators. Since
    plaquette operators can only change the number of non-contractible loops
    around the two periodic directions of the torus by an even number, there
    are four homologically distinct classes of loop patterns, characterized by
    the parity of the number of non-contractible loops in each direction. Hence, the total
    number of loop patterns is $4\times 2^{|\P|-1}=2^{|\P|+1}$.

    To show \cref{eq:dimG0sigma}, we can view $\H^\vsigma\cap\G_0$ as the
    subspace of $\G_0$ stabilized by the group generated by $K=|\P|-1$ pairwise
    commuting, independent plaquette operators with appropriate signs (due to $\prod_{p\in \P} \sigma_pU_p=\Id$, 
    only $|\P|-1$ stabilizer generators are independent [\cref{subsec:LocalSymmetries}]). This group does not
    contain $-\Id$. With the loop patterns as computational basis,
    we can interpret $\G_0$ as the space of $N=|\P|+1$ ``qubits''
    on which the stabilizers act as $\sigma^x$-type Pauli strings.
    It is then well known~\cite{nielsen2010} that the dimension
    of the stabilized subspace is $\dim{(\H^\vsigma\cap\G_0)}=2^{N-K}=4$.

    The Hamiltonian $H_0$ has only two relevant energy scales: $\Deltamin$ and
    $\Deltamax \equiv 2\Deltamin$ (remember that $U \gg \Deltamax$); hence the
    blockade-free eigenstates have energies that are integer multiples of
    $\Deltamin$, and the ground states are gapped by $\Deltamin$.
\end{proof}

\subsection{Conditions for gap stability}%
\label{subsec:ConditionsForGapStability}

This appendix is a (very lengthy but mostly straightforward)
prerequisite for the proof of \cref{lem:GapStability}.

\subsubsection{Review of the gap stability theorem}
We are concerned with the stability of the excitation gap of a Hamiltonian
$H_\text{ff}$ defined on a square lattice $\barN$ (see \cref{fig:CoarseGraining2} for
our specific lattice), with a tensor product Hilbert space
$\H$ of local subsystems, which \ldots
\begin{enumerate}[label=(A\arabic*)]

    \item \ldots is spatially local (\textsf{Local}). This means $H_\text{ff}=
        \sum_{\barn \in \barN} Q_\barn$ can be written as a sum of local
        operators $Q_\barn$ whose supports have a range bounded independently of $L$.

    \item \ldots satisfies periodic boundary conditions (\textsf{PBC}).

    \item \ldots is frustration-free (\textsf{Frust-Free}). This means that the ground
        states $\ket{\psi_\text{ff}}$ of $H_\text{ff}$ satisfy
        $Q_\barn\ket{\psi_\text{ff}} = q_{0,\barn}\ket{\psi_\text{ff}}$, where
        $q_{0,\barn}$ denotes the minimal eigenvalue of $Q_\barn$.

    \item \ldots has an excitation gap above the ground state space that is
        lower-bounded by $\gamma > 0$ (independent of the system size) for all
        $L \ge 2$ (\textsf{Gap}). 

        The excitation gap allows us to write $\G_\text{ff}$ for the ground
        state space of $H_\text{ff}$. 

\end{enumerate}
For gap stability, \emph{two} additional conditions must be satisfied by the
Hamiltonian $H_\text{ff}$; to formulate these, we first need further concepts:
\begin{itemize}

\item We define the ball $B_\barn(r)$ of radius $r \in \mathbb{N}_0$ around
    $\barn \in \barN$ in the usual $\ell^1$ graph metric on $\barN$. This means
    $B_\barn(r) \subseteq \V$ is just the subset of vertices that lie in nodes from
    $\barN$ within $r$ units of distance from $\barn$.

\item Consider some ``area'' $A \subseteq \V$ given by a subset of vertices. We
    define the spatially restricted Hamiltonian $H_\text{ff}\vert_A :=
    \sum_{\text{supp}(Q_\barn) \subseteq A} Q_\barn$ in the natural way, with
    contributions only from operators $Q_\barn$ with support in $A$. Here, the
    support denotes the subset of vertices on which the operator acts
    nontrivially. In particular, for $A = \V$ we recover $H_\text{ff}\vert_\V =
    H_\text{ff}$.

\end{itemize}
We call $H_\text{ff}$ \emph{locally gapped} by $\gamma_\text{loc}: \mathbb{N}_0
\rightarrow \mathbb{R}_+$ if, for all $\barn \in \barN$ and for all $r \in
\mathbb{N}_0$, the Hamiltonian $H_\text{ff}\vert_{B_\barn(r)}$ is gapped by at
least $\gamma_\text{loc}(r)$ and $\gamma_\text{loc} \in \Omega(r^{-m})$ for
some $m \in \mathbb{N}$ (i.e., decays at most polynomially). With this, we can
formulate the next condition:
\begin{enumerate}[label=(A\arabic*)]\setcounter{enumi}{4}

    \item The Hamiltonian $H_\text{ff}$ is \emph{locally gapped}
        (\textsf{Local-Gap}). 

        This allows us to write $\G_\text{ff}\vert_A$ for the ground state
        space of $H_\text{ff}\vert_A$.

\end{enumerate}

To formulate the last condition, we consider nonnegative integers $r \le L^* <
L$ and $\d \in \{1,\,\ldots\,, L - r\}$, where $L^* \sim L$ is some cutoff
parameter that scales with the system size. We consider the balls $B_\barn(r)$
and $B_\barn(r+\d)$ around some $\barn \in \barN$ of radius $r$ and $r + \d$,
respectively. 
Consider a state
$\ket{\psi_\text{ff}\vert_{B_\barn(r+\d)}}\in\G_\text{ff}\vert_{B_\barn(r+\d)}$
from the ground state space of the spatially restricted Hamiltonian
$H_\text{ff}\vert_{B_\barn(r+\d)}$ and evaluate its partially traced density
matrix
\begin{align}
    \rho_{B_\barn(r)}(\d) 
    := \Tr{\ket{\psi_\text{ff}\vert_{B_\barn(r+\d)}} \bra{\psi_\text{ff}\vert_{B_\barn(r+\d)}}}
    {B_\barn(r+\d)\backslash B_\barn(r)}
    \,. 
    \label{eq:PartTrDenM}
\end{align}
A system exhibits \emph{local topological quantum order} (LTQO) if, for any
two such states $\ket{\psi_\text{ff}\vert_{B_\barn(r+\d)}}$ and
$\ket{\psi_\text{ff}'\vert_{B_\barn(r+\d)}}$ in
$\G_\text{ff}\vert_{B_\barn(r+\d)}$, their density matrices
$\rho_{B_\barn(r)}(\d)$ and $\rho_{B_\barn(r)}'(\d)$ satisfy
\begin{align}
    \lVert \rho_{B_\barn(r)}(\d) - \rho_{B_\barn(r)}'(\d) \rVert_1 
    \le 2F(\d)
    \,,
    \label{eq:LTQOCondition}
\end{align}
with some decaying function $F: \mathbb{N} \rightarrow \mathbb{R}_+$ 
(e.g., for the toric code, $F$ is a step function)\footnote{%
In Ref.~\cite{michalakis2013} this function is denoted $\Delta_0$
(we stick to the notation $F$ of Ref.~\cite{maier2025});
the length $\d$ is denoted $l$ in Refs.~\cite{michalakis2013, maier2025}.
}. Here, $\lVert\cdot\rVert_1$ denotes the Schatten-$1$ matrix
norm. LTQO can be interpreted as the condition that \emph{local} observables
cannot distinguish between different ground states, i.e., that the ground state
degeneracy arises from our choice of topology (here: the PBCs).
%
%
With this, we can formulate the last condition:
\begin{enumerate}[label=(A\arabic*)]\setcounter{enumi}{5}

    \item The Hamiltonian $H_\text{ff}$ exhibits \emph{local topological
        quantum order} (\textsf{LTQO}).

\end{enumerate}

Finally, we need to specify a class of perturbations $V_\text{pert}$ under
which the stability of the gap of $H_\text{ff}$ is asserted. In the following,
we call $V_\text{pert}$ a $(J,f)$-perturbation of strength $J > 0$ and decay $f
\in \mathcal{O}([1+r]^{-4})$ with $f(r) \le 1$ on the domain $r \ge 0$, iff one
can write $V_\text{pert} = \sum_{\barn \in \barN} \sum_{r = 0}^L V_\barn(r)$
such that $\text{supp}(V_\barn(r)) \subseteq B_\barn(r)$ and $\lVert V_\barn(r)
\rVert \le Jf(r)$.
With all these preliminaries, the \emph{theorem on gap stability} by Michalakis
and Zwolak~\cite[Theorem 1]{michalakis2013} can be formulated as follows:
\begin{theorem}[Michalakis and Zwolak]
    \label{the:Michalakis}

    Consider a sequence of Hamiltonians $(H_\mathrm{ff})_L$ that satisfies the
    conditions $\mathrm{(A1)\text{--}(A6)}$, with an excitation gap that is
    lower-bounded by $\gamma > 0$, and a $(J,f)$-perturbation
    $V_\mathrm{pert}$.
    Then there exist constants $J_0 > 0$ and $L_0 \ge 2$ such that, for $J \le
    J_0$ and $L \ge L_0$, the excitation gap $\gamma_L - \delta_L$ of
    $H_\mathrm{ff} + V_\mathrm{pert}$ is bounded from below by
    $\gamma/2$, and the ground state splitting $\delta_L$ is
    bounded by a rapidly decaying function of the system size.

\end{theorem}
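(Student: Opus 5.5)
Since this is the stability theorem of Ref.~\cite{michalakis2013}, which we only invoke, I would reconstruct its proof along the Bravyi--Hastings--Michalakis route, adapted to frustration-free Hamiltonians. Consider the interpolating family $H(s) := H_\mathrm{ff} + sV_\mathrm{pert}$ for $s\in[0,1]$. The plan is to show that, as long as the gap of $H(s)$ stays open, $H(s)$ can be unitarily rotated into a form $H_\mathrm{ff} + \tilde V(s)$. Here $\tilde V(s)$ is again a sum of quasi-local terms. Moreover, each such term acts on the unperturbed ground state space $\G_\mathrm{ff}$ as a multiple of the identity, up to small errors. A bootstrap (continuity) argument in $s$ then keeps the gap above $\gamma/2$ on the whole interval, provided $J\le J_0$.

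The steps, in order, would be as follows. (i) Use the quasi-adiabatic continuation (spectral flow) generator $\mathcal{D}(s)$, built from a filter function adapted to a gap $\ge\gamma/2$. Together with Lieb--Robinson bounds for the local Hamiltonian (A1), this gives a quasi-local unitary $\mathcal{U}(s)$ with $\mathcal{U}(s)^\dagger H(s)\,\mathcal{U}(s) = H_\mathrm{ff} + \tilde V(s) + \text{const}$. Here $\tilde V(s)=\sum_{\barn}\sum_r \tilde V_\barn(r)$ and $\lVert \tilde V_\barn(r)\rVert \le J\tilde f(r)$ for a decay $\tilde f$ that is only slightly worse than $f$. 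This is where the decay $f\in\mathcal O([1+r]^{-4})$ is needed: it must survive the subexponential tails of $\mathcal{D}$. (ii) Use LTQO (A6) to subtract from each term its ground-state expectation. Concretely, set $c_\barn(r) := \operatorname{Tr}[\rho\,\tilde V_\barn(r)]$ for the reduced ground-state density matrices on $B_\barn(r)$. Then $P_{\G}\,(\tilde V_\barn(r)-c_\barn(r))\,P_{\G}$ has norm at most $J\tilde f(r)F(\d)$, with $\d$ comparable to $r$. The frustration-free property (A3) is what makes local ground-state projectors $P_{B_\barn(r+\d)}$ restrict consistently to the global ground state space $\G_\mathrm{ff}$, so that these local estimates transfer. (iii) Show that the centered terms are \emph{relatively bounded} by $H_\mathrm{ff}$, i.e.\ $|\langle\phi|\sum(\tilde V_\barn-c_\barn)|\phi\rangle| \le \beta\,\langle\phi|H_\mathrm{ff}-E_0|\phi\rangle + \epsilon_L$ with $\beta = O(J)$ and $\epsilon_L\to 0$. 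Combined with the min-max principle, this bounds both the gap reduction and the ground-state splitting $\delta_L$.

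I expect step (iii) to be the main obstacle. To attack it, I would first decompose each centered term using the local ground-state projector $P_b := P_{B_\barn(r+\d)}$ and its complement $Q_b := 1-P_b$. Explicitly, $\tilde V-c = P_b(\tilde V-c)P_b + Q_b(\dots) + P_b(\dots)Q_b$. The $P_b\,\cdot\,P_b$ piece is controlled by LTQO, as in step (ii). Each piece carrying a factor $Q_b$ is controlled by the local gap (A5): $Q_b \le \gamma_\mathrm{loc}(r+\d)^{-1} H_\mathrm{ff}|_{B_\barn(r+\d)}$. Because $\gamma_\mathrm{loc}$ decays only polynomially, this polynomial loss is absorbed by the decay of $\tilde f$. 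The sum over $\barn$ of the restricted Hamiltonians overcounts each local term by a factor of order $(r+\d)^2$ in two dimensions; I would again absorb this by choosing $\d\sim r$ and using the fast decay of $\tilde f$.

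Summing over $r$ then yields $\beta \lesssim J\sum_r \tilde f(r)(r+\d)^{2}\gamma_\mathrm{loc}(r+\d)^{-1}<\infty$. This fixes $J_0$ through the requirement $\beta<1/2$. The remaining errors, namely the LTQO tails and the contribution from balls that wrap around the torus (A2), are bounded by terms like $F(L^*)$ that decay with $L$; these determine $L_0$ and the rapidly decaying bound on $\delta_L$.
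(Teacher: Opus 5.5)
The paper does not prove this theorem. It imports Theorem~1 of Michalakis--Zwolak as a black box and only checks (A1)--(A6) and the $(J,f)$-property in the appendix. Judged as a reconstruction of that external proof, your outline has a genuine gap in step~(iii), concerning the off-diagonal pieces $P_bXQ_b$. The local-gap bound $Q_b\le\gamma_{\mathrm{loc}}^{-1}H_{\mathrm{ff}}|_b$ controls only the diagonal piece $Q_bXQ_b$ linearly in $\langle H_{\mathrm{ff}}-E_0\rangle$. For the off-diagonal pieces you only get $|\bra{\phi}P_bXQ_b\ket{\phi}|\le\lVert X\rVert\,\lVert Q_b\phi\rVert$, which is square-root control. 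Summing over the $\sim L^2$ balls by Cauchy--Schwarz or AM--GM then leaves an extensive error; these are exactly the second-order energy shifts.

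In fact step~(iii) uses nothing about $\tilde V$ beyond quasi-locality and LTQO of $H_{\mathrm{ff}}$, so it would apply verbatim to $\tilde V=V$, where its conclusion is false. Take $H_{\mathrm{ff}}=\sum_i\hat n_i$ on $N$ qubits, which is frustration-free, has local gap $1$ and is trivially LTQO, and take $V=J\sum_i\sigma_i^x$. All centering constants are $c=0$. Yet the product state $\bigotimes_i(\cos\theta\ket{0}+\sin\theta\ket{1})$ has $\langle V\rangle=JN\sin2\theta$ and $\langle H_{\mathrm{ff}}-E_0\rangle=N\sin^2\theta$. For any $\theta$ with $0<\tan\theta<2J/\beta$, the violation of $\beta\langle H_{\mathrm{ff}}-E_0\rangle+\epsilon_L$ grows like $N$.

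What is missing is the defining property of the spectral flow, which you never use. It intertwines the projectors, $\mathcal U(s)^\dagger P(s)\,\mathcal U(s)=P_0$. Moreover, $\partial_s[\mathcal U^\dagger H(s)\,\mathcal U]=\mathcal U^\dagger\mathcal F_s(V)\,\mathcal U$, where the filtered perturbation $\mathcal F_s(V)$ has no matrix elements across the gap of $H(s)$. Hence every rotated term $\int_0^s\mathcal U(s')^\dagger\mathcal F_{s'}(V_x(r))\,\mathcal U(s')\,ds'$ commutes with the global $P_0$.

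This global block-diagonality does two jobs:
\begin{enumerate}
\item It lets you treat the ground block $P_0\tilde HP_0$ (giving the splitting $\delta_L$) and the complement $(1-P_0)\tilde H(1-P_0)$ (giving the gap) separately.
\item Together with frustration-freeness ($P_0\le P_b$) and LTQO, it makes the off-diagonal parts small. Indeed $\lVert Q_bOP_b\rVert^2=\lVert P_bO^\dagger OP_b-P_bO^\dagger P_bOP_b\rVert$ is a local ground-state variance. LTQO ties it to the global variance $P_0O^\dagger(1-P_0)OP_0$, which vanishes because $OP_0=P_0OP_0$.
\end{enumerate}

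Even then the LTQO errors must be organized so they do not accumulate extensively, for instance over a hierarchy of radii up to $L^*$. For the ground block you must apply LTQO at scale $\sim L^*$, not with $d$ comparable to $r$ as in your step~(ii). Otherwise $\sum_x J\tilde f(r)F(r)=O(L^2)$ and you get no decaying $\delta_L$. This local block-diagonalization is the technical core of the Bravyi--Hastings--Michalakis and Michalakis--Zwolak arguments, and your outline replaces it with an estimate that does not hold.
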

For the definitions related to the excitation gap, cf. \cref{def:ExcitationGap}.
In the remainder of this appendix, we show that the six conditions (A1)--(A6)
are met by the unperturbed auxiliary Hamiltonian \eqref{eq:AnciHam} introduced in
\cref{subsec:GapStability} so that \cref{the:Michalakis} applies, which is
crucial for the proof of \cref{lem:GapStability}.

\subsubsection{Application of the gap stability theorem}

We consider the auxiliary Hamiltonian $\tilde H_0^{[\vsigma]}(\omega) = H_0 +
H^{[\vsigma]}(\omega)$ [recall \cref{eq:AnciHam} for $\Omega=0$] with 
\begin{align}
    H^{[\vsigma]}(\omega)
    =\omega \sum_{p \in \P} (\mathds{1} - \hP_p^{[\sigma_p]})
    =\omega \sum_{p \in \P} \frac{\mathds{1} - \sigma_pU_p}{2}
\end{align}
as the unperturbed Hamiltonian $H_\mathrm{ff}$. Note that
$H^{[\vsigma]}(\omega)$ energetically penalizes symmetry sectors $\H^{\vsigma'}
\neq \H^\vsigma$. As the perturbation $V_\text{pert}$ we consider $V_k^q(\Omega) =
\Omega \left[\sum_{i \notin \p} \sigma_i^x + \sum_{i \in \p} P_k^q \sigma_i^x
P_k^q \right]$. We can then formulate and prove the following technical lemma:
\begin{applem}[Conditions for gap stability]
    \label{lem:ConditionsForGapStability}
    Let $k \le q$ be nonnegative integers and $\H^\vsigma$ be an arbitrary
    symmetry sector. Then the unperturbed Hamiltonian $H_\mathrm{ff} = \tilde
    H_0^{[\vsigma]}(\omega)$ satisfies the conditions $\mathrm{(A1)\text{--}(A6)}$ of
    \cref{the:Michalakis} and $V_\text{pert} = V_k^q(\Omega)$ is a
    $(J,f)$-perturbation.
\end{applem}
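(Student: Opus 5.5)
We verify the six conditions one at a time for $H_\mathrm{ff}=\tilde H_0^{[\vsigma]}(\omega)=H_0+H^{[\vsigma]}(\omega)$, and then check the perturbation class. The work is set up on a coarse-grained square lattice $\barN$ whose nodes $\barn$ each collect a bounded number of qubits, for instance a unit cell of the honeycomb grid with its plaquette. We write
\begin{equation}
H_\mathrm{ff}=\sum_{\barn}Q_\barn ,
\end{equation}
where $Q_\barn$ gathers the following terms associated with node $\barn$: the blockade terms $U\hat n_i\hat n_j$, the detunings $-\Delta_i\hat n_i$, and the projectors $\omega(\Id-\sigma_pU_p)/2$ of the plaquette $p$ belonging to $\barn$.

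\textbf{(A1) and (A2).} Each $Q_\barn$ is supported on at most two neighbouring plaquettes, so (A1) holds with $L$-independent range. (A2) holds by construction of $\GB_L$.

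\textbf{(A3) and (A4).} Frustration-freeness needs some care. The term $H_0$ is diagonal, and $H^{[\vsigma]}$ commutes with $H_0$ (\cref{eq:symmetries}). It is cleaner to regroup $H_0$ into commuting local terms whose joint minimum is attained on MWIS patterns, i.e. the vertex loop constraints together with the local "self-energy" contributions, shifted so that each local minimum is zero. On closed-loop patterns every local term attains its minimum; this is the statement that MWISs are exactly the closed-loop patterns \cite{maier2025}. The plaquette projectors vanish on $\H^\vsigma$. Since $\G_0\cap\H^\vsigma\neq\{0\}$ by \cref{lem:Stabilizer}, all terms are simultaneously minimized, which gives (A3). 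For (A4), note that all terms commute and have integer-multiple spectra in units of $\Deltamin$ and $\omega$. The gap is therefore $\min\{\Deltamin,2\omega\}=\Deltamin$ for $\omega=\Deltamin/2$, independently of $L$.

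\textbf{(A5) and (A6).} Restricted to a ball $B_\barn(r)$, $H_\mathrm{ff}\vert_{B_\barn(r)}$ is still a sum of commuting terms with the same spectral spacing, so it is gapped by $\Deltamin$ uniformly in $r$. This gives (A5) with constant $\gamma_\mathrm{loc}$.

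LTQO (A6) is the main obstacle. Here I would identify $\G_\mathrm{ff}\vert_{B(r+\d)}$ with the stabilizer code whose stabilizers are the loop (vertex) constraints and the signed plaquette flips $\sigma_pU_p$ supported in the ball, which is an analog of the toric code with signs $\sigma_p$. Ground states of the restricted Hamiltonian are then superpositions of closed-loop configurations in the ball, constrained only by the boundary data. The key step is to show that for $\d\ge 1$ (a ball of at least one plaquette of margin) the reduced state on $B(r)$ is fixed: it is the equal-weight superposition over closed-loop patterns in $B(r)$, with signs determined by $\sigma_p$ through the plaquette stabilizers. This follows because any operator supported on $B(r)$ that preserves the code space is a product of stabilizers in $B(r+\d)$ times the identity, as there are no logical operators inside a disk. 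Hence the partial traces coincide and $F$ is a step function, vanishing for $\d\ge\d_0=\cL$. The delicate point is the signs $\sigma_p=-1$. These only multiply basis states by phases that are fixed locally, so I would handle them by conjugating with a local diagonal sign unitary where possible. Otherwise I would work directly in the stabilizer formalism, where signed stabilizer groups not containing $-\Id$ behave identically to the unsigned case.

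\textbf{Perturbation.} Finally, $V_k^q(\Omega)=\Omega\sum_{i\notin\p}\sigma_i^x+\Omega\,P_k^q\sigma_\p^xP_k^q$. The single-site terms have norm $\Omega$ and are assigned to $r=0$. The term $P_k^q\sigma_\p^xP_k^q$ is supported on the single plaquette $\p$, with norm at most $|\p|$ by \cref{lem:OperatorNorms}, and is assigned to the node of $\p$ at some fixed radius $r_0=\cL$. Choosing $f(r)=\mathbf 1_{r\le r_0}$, which is bounded by $1$ and lies in $\mathcal O([1+r]^{-4})$, and $J=\Omega|\p|$ makes $V_k^q$ a $(J,f)$-perturbation. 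All constants are independent of $k,q$ and $\vsigma$, which is what uniformity in \cref{lem:GapStability} requires.
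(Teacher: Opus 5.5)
Your proposal is correct in outline. For (A1)--(A4) it matches the paper: a local decomposition into commuting terms, $\G_0\cap\H^\vsigma\neq\{0\}$ from \cref{lem:Stabilizer}, and gap $\min\{\Deltamin,2\omega\}$. The factor $2\omega$ deserves one sentence of justification: two admissible sectors differ on an even number of plaquettes because of $\prod_pU_p=\Id$.

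\emph{LTQO (A6).} This is the genuine difference. The paper never sets up a signed stabilizer code. Instead it does the following.
\begin{enumerate}
\item It pairs the plaquettes with $\sigma_p=-1$ by disjoint dual paths and sets $Z_A^{\mathrm{path}}=\prod_{l\subseteq\bar\Gamma\cap A}\sigma^z_{(l,0)}$.
\item On $\G_0|_A$ (one excitation per link) this acts as the loop-$Z$ on each path link. It therefore anticommutes exactly with the $U_p$, $p\subseteq A$, that have $\sigma_p=-1$. Hence $\tilde H_0^{[\vsigma]}|_A$ and $\tilde H_0^{[\s]}|_A$ are unitarily equivalent on $\G_0|_A$.
\item Since $Z^{\mathrm{path}}_{B_\barn(r+\d)}$ factorizes over $B_\barn(r)$ and the annulus, the reduced density matrices differ only by conjugation with a unitary on $B_\barn(r)$. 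This drops out of the trace norm, so LTQO is inherited from the symmetric case proven in Ref.~\cite{maier2025}.
\end{enumerate}
This is exactly your ``local diagonal sign unitary''. It always exists, so the qualifier ``where possible'' is unnecessary.

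Your fallback, a direct TQO-type argument for the signed code, is also sound. Signs do not affect the dichotomy: a Pauli either commutes with all stabilizers and is then a stabilizer up to sign, or anticommutes with one and then $POP=0$. It has the advantage of being self-contained and sign-agnostic. The price is that you must do the work the paper outsources to Ref.~\cite{maier2025}:
\begin{enumerate}
\item Show that a physical operator on $B_\barn(r)$, compressed to $\G_0|_{B_\barn(r+\d)}$, acts only on link (loop) variables inside $B_\barn(r)$.
\item Handle the unconstrained qubits at the boundary of the ball.
\item State the conclusion as $POP=c(O)P$ for every $O$ supported on $B_\barn(r)$. ``Operators preserving the code space are stabilizers'' is not the right statement, and the reduced state is a mixture, not a superposition.
\end{enumerate}

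\emph{Perturbation.} You place $\Omega P_k^q\sigma_\p^xP_k^q$ as a single term of norm at most $\Omega|\p|$ at fixed radius $r_0$, with $f=\mathbf 1_{r\le r_0}$. The paper instead uses $f=\delta_{r,0}$ and distributes $\Omega P_k^q\sigma_i^xP_k^q$ over the nodes and bonds containing $i$. Since $P_k^q$ acts on all of $\p$, those pieces are not actually supported on $\barn$, so your bookkeeping is the more accurate one. Both give $J,f$ uniform in $k,q,\vsigma$, with $J$ also covering the node sums $\Omega\sum_{i\in\barn}\sigma_i^x$.

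\emph{Local gap (A5).} Your value is off. In a ball the global relation $\prod_pU_p=\Id$ is absent, so a single plaquette violation costs only $\omega$. The local gap is therefore $\min\{\Deltamin,\omega\}=\Deltamin/2$, as in the paper, not $\Deltamin$. This is harmless, since (A5) only requires a positive, at most polynomially decaying $\gamma_{\mathrm{loc}}$.
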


\begin{figure}[tbp]
    \includegraphics[width=0.75\linewidth]{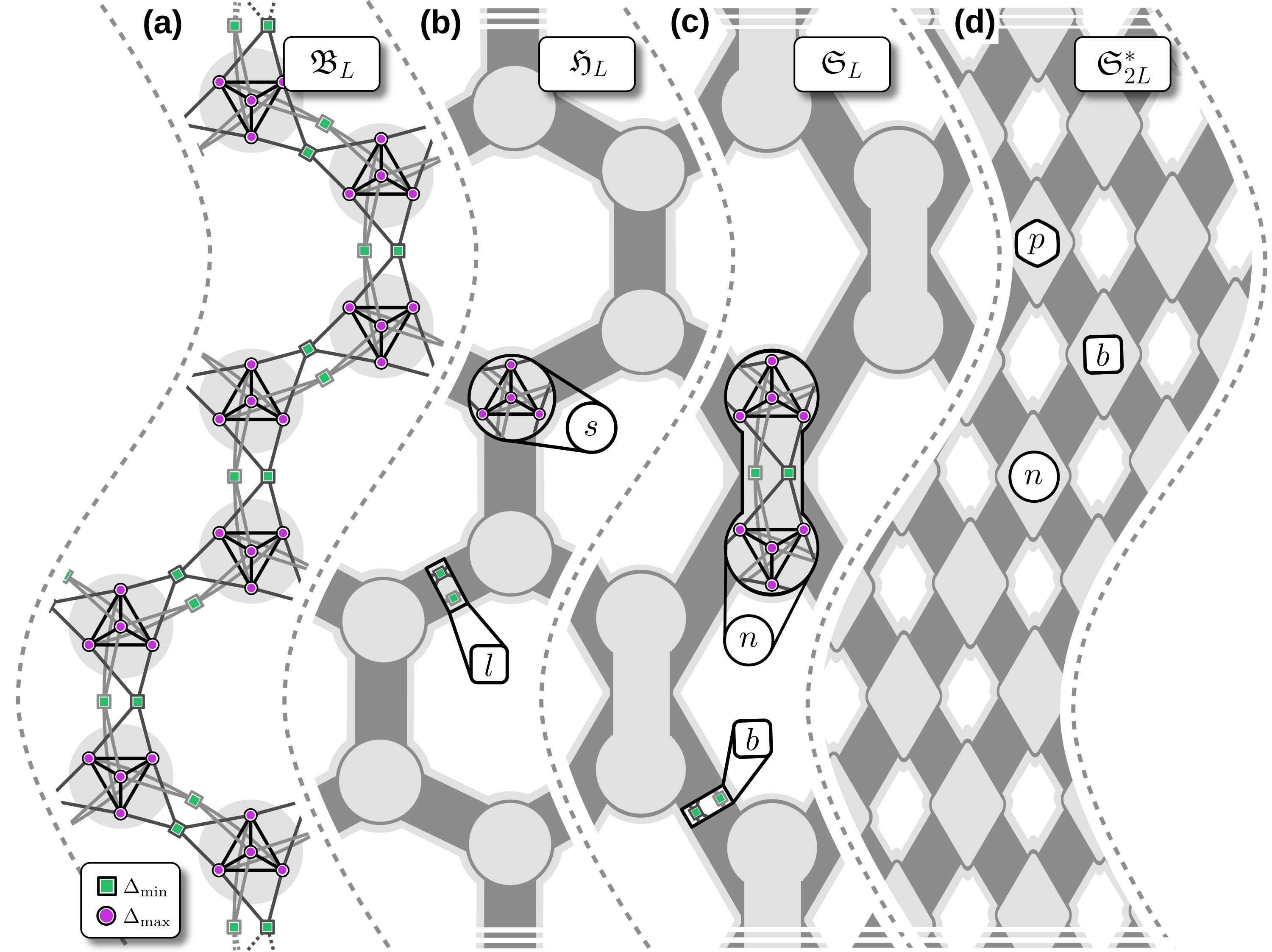}
    \caption{%
        \textbf{Coarse graining the blockade graph.}
        \textbf{(a)} In the \emph{vertex-weighted blockade graph} $\GB_L = (\V,
        \E, W)$ we identify the vertices $i \in \V$ with the qubits (squares
        and circles), the edges $\{i, j\} \in \E$ with the blockades (gray and
        black lines), and the vertex weights $\Delta_i = W(i)$ with the
        detunings (squares $\Deltamin$ and circles $\Deltamax \equiv
        2\Deltamin$).
        \textbf{(b)} In the first coarse-graining step, we identify the sites
        $s \in \S$ with four qubits from $\V$ each (gray circles), and the
        links $l \in \L$ with two qubits (thick gray lines). The constructed
        graph $\GH_L = (\S, \L)$ is a \emph{honeycomb tiling graph}.
        \textbf{(c)} In the next coarse-graining step, we identify the nodes $n
        \in \N$ with the qubits from two adjacent sites $s_n, s_n' \in \S$ in a
        fixed direction and the link $l_n$ between them (gray dumbbells). The
        bonds $b \in \B$ are identified with the two qubits of the remaining
        links (thick gray lines). The constructed graph $\NL =
        (\N, \B)$ is a \emph{square lattice graph}.
        \textbf{(d)} In the last step, we identify the nodes, plaquettes and
        bonds as the new vertex set $\barN = \N\cup \P \cup \B$ (gray
        rhomboids), such that the \emph{extended square lattice graph}
        $\barNL\equiv(\barN, \barB)$ is twice as dense.
    }
    \label{fig:CoarseGraining2}
\end{figure}

Before proving \cref{lem:ConditionsForGapStability}, it is convenient to
introduce a more precise notation for coarse-graining the honeycomb grid (see
\cref{fig:CoarseGraining2}).
We start with the blockade graph $\GB_L$ that is visualized in
\cref{fig:CoarseGraining2}~(a). We already introduced in
\cref{subsec:SpatialStructure} the \emph{honeycomb tiling graph} that emerges
by identifying sites and links with sets of vertices, see
\cref{fig:CoarseGraining2}~(b).
The honeycomb graph can be coarse-grained into a \emph{square lattice graph}
$\NL \equiv (\N, \B)$, see \cref{fig:CoarseGraining2}~(c). We refer to the
vertices $n \in \N \simeq \mathbb{Z}_L^2$ of the square graph as \emph{nodes}
and to the edges $b \in \B$ of the square graph as \emph{bonds}. As for the
honeycomb graph, we identify nodes and bonds with subsets of vertices of the
blockade graph $\GB_L$; each node $n = s_n \cup s_n' \cup l_n$ corresponds to
two adjacent sites $s_n,s_n'\in\S$ from the honeycomb graph and the link $l_n
\in \L$ between them. Furthermore, each bond $b = l_b$ corresponds to a unique
link $l_b\in\L$ of the honeycomb graph. We also introduce the shorthand
notation $b(n)$ for the subset of qubits from bonds that emanate from node $n$.
For technical reasons, it is convenient to define an \emph{extended square
lattice graph} $\barNL\equiv(\barN, \barB)$ by identifying each plaquette
$p \in \P$ and each bond $b \in \B$ as additional nodes in $\barN := \N\cup \P
\cup \B \simeq \mathbb{Z}_{2L}^2$. This makes $\barNL$ twice as dense as $\NL$, 
see \cref{fig:CoarseGraining2}~(d). To
differentiate the extended from the original square graph, we denote nodes from
the extended set by $\barn \in \barN$.

We are now prepared for the proof of \cref{lem:ConditionsForGapStability}. In
summary, we will find the gap $\gamma = \min\{\Deltamin, 2\omega\}$, the local
gap $\gamma_\text{loc}(r) \equiv \min\{\Deltamin, \omega\}$, and a
$(J,f)$-perturbation with $J = \Omega \max_{\barn \in \barN} |\barn|$ and $f(r)
= \delta_{r, 0}$. 
\begin{proof}[Proof of \cref{lem:ConditionsForGapStability}]

    Let us first establish that $V_\text{pert} = V_k^q(\Omega)$ is a
    $(J,f)$-perturbation. To this end, we define for $A \equiv B_\barn(r)$ the
    spatially restricted perturbation
    \begin{align}
        V_k^q(\Omega)|_A 
        := \Omega \left[
        \sum_{i \in A\backslash \p} \sigma_i^x 
        + \sum_{i \in A \cap \p} P_k^q \sigma_i^x P_k^q 
        \right]
    \end{align}
    in the natural way, with contributions only from vertices $i \in A$; for $A
    = \V$ we recover $V_k^q(\Omega)|_\V = V_k^q(\Omega)$.  
    We can now identify $V_\barn(r) := \delta_{r,0}\delta_{\barn \notin \P}V_k^q(\Omega)|_{B_\barn(r)}$,
    which is only nonzero for radius $r = 0$ and $\barn$ a node or a bond (not a plaquette). 
    With this, we can write
    $V_k^q(\Omega) = \sum_{\barn \in \barN} \sum_{r = 0}^L V_\barn(r)$. By
    construction, $V_\barn(r)$ has $\text{supp}(V_\barn(r)) = \barn =
    B_\barn(r)$ for $r = 0$ and $\barn \in \N \cup \B$, and 
    $\text{supp}(V_\barn(r)) = \emptyset \subset B_\barn(r)$ else.
    Then, we have $\text{supp}(V_\barn(r)) \subseteq
    B_\barn(r)$ for all $r \in \mathbb{N}_0$ and $\barn \in \barN$ as required. Additionally, since $V_\barn(r) \neq
    0$ only for $r = 0$, the perturbation has fixed range zero, so that $f(r)
    := \delta_{r, 0}$ can be chosen trivially as a step function (which
    decays sufficiently fast). We can now estimate naively $\lVert
    V_\barn(r) \rVert \le \Omega |\barn|\delta_{r, 0}$ using the triangle
    inequality of the operator norm and \cref{lem:OperatorNorms}. This allows
    us to identify $J := \Omega \max_{\barn \in \barN} |\barn|$ as the strength
    of the perturbation. Hence $V_\text{pert} = V_k^q(\Omega)$ is a
    $(J,f)$-perturbation.
    
    Let us now fix an arbitrary symmetry sector $\H^\vsigma$, which defines the
    unperturbed Hamiltonian $H_\text{ff} = \tilde H_0^{[\vsigma]}$, and show
    that the conditions (A1)--(A6) are met:
    \begin{enumerate}[label=(A\arabic*)]
    
    \item Note that $\tilde H_0^{[\vsigma]}$ is of the same structure as
    $\tilde H_0^{[\s]}$ -- up to flipped signs $\sigma_p$ in
    $H^{[\vsigma]}(\omega)$. The idea is to include the plaquettes $\P$ as
    additional nodes in the \emph{extended lattice} $\barN$ (see
    \cref{fig:CoarseGraining2}). For a plaquette $p \in \P$, we can then
    identify 
    \begin{align}
        Q_p := \omega\left(\mathds{1} - \hP_p^{[\sigma_p]}\right) 
        = \omega\frac{\mathds{1} - \sigma_pU_p}{2}
    \end{align}
    as $\omega$ times the orthogonal projector, with $\text{supp}(Q_p) = p \subseteq B_p(2)$.
    For the bonds we set $Q_b := 0$ so $\text{supp}(Q_b) = \emptyset$.  
    For $H_0 = \sum_{n \in \N} Q_n$ we can make the decomposition
    \begin{align}
        Q_n := 
        -\sum_{i \in n} \Delta_i\hat n_i 
        - \sum_{i \in b(n)} \frac{\Delta_i\hat n_i}{2} 
        + U\hspace{-2ex}\sum_{\substack{\{i,j\}\,\in\,\E \text{ for } \\ i,j\,\in\,n \cup b(n)}}\hspace{-2ex} \hat n_i\hat n_j
    \end{align} 
    with $\text{supp}(Q_n) = n \cup b(n) \subseteq B_n(1)$ by inverting the
    process of \emph{amalgamation} from Ref.~\cite{stastny2023} (note that
    there exist no in-bond blockades; these would require a factor of $1/2$). 
    This yields the spatially local decomposition 
    \begin{align}
        \tilde H_0^{[\vsigma]}(\omega) = \sum_{\barn \in \barN} Q_\barn
        \qquad
        \text{with $\text{supp}(Q_\barn) \subseteq B_\barn(2)$.}
    \end{align}
    All operators are local with constant range of support, so the condition
    \textsf{Local} is satisfied. Note that $Q_p$ (and generally $Q_\barn$)
    depends on $\sigma_p$; we usually suppress this dependence on $\vsigma$ to
    streamline the notation.  

    \item By assumption, we consider blockade Hamiltonians on the torus, i.e.,
    with periodic boundary conditions. Hence condition \textsf{PBC} is
    satisfied.

    \item It is easy to check that the unperturbed Hamiltonian $\tilde
        H_0^{[\vsigma]}$ is \emph{frustration-free}:
    
    The first term $H_0$ is the classical blockade Hamiltonian which we
    know to be frustration-free with respect to the above decomposition. The
    frustration-free ground states of the Hamiltonian $H_0 = \sum_{n \in \N} Q_n$ are given
    by the concatenation of compatible ground states of the Hamiltonians $Q_n$
    at each node $n \in \N$. This yields an extensively degenerate ground state
    space $\G_0$ spanned by the closed-loop patterns that conserve the
    $\mathbb{Z}_2$-valued flux at each site.

    The second term $H^{[\vsigma]}(\omega)$ in the full Hamiltonian introduces
    additional local operators $Q_p = \omega(\mathds{1} - \hP_p^{[\sigma_p]})$
    on each plaquette $p \in \P$. Remember that $P^\vsigma \equiv \prod_{p \in
    \P} \hP_{p}^{[\sigma_{p}]}$, so the frustration-free ground state space 
    of $H^{[\vsigma]}(\omega)$ is simply $\H^\vsigma$. 
    We calculated the dimension of the intersection space
    $\H^\vsigma \cap \G_0$ in \cref{lem:Stabilizer}; in particular, we showed
    that the space is nonzero. Therefore, the combined Hamiltonian $\tilde
    H_0^{[\vsigma]}$ is still frustration-free, and the condition
    \textsf{Frust-Free} is satisfied.

    \item Leveraging the frustration-freeness, it is now easy to show that
        $\tilde H_0^{[\vsigma]}$ is gapped:

    The ground states $\ket{\tilde \psi_0^{[\vsigma]}} \in \G_0 \cap
    \H^\vsigma$ are the states that minimize the energy of both Hamiltonians
    $H_0$ and $H^{[\vsigma]}(\omega)$ individually. Moreover, both Hamiltonians
    are gapped: The ground state space $\G_0$ of $H_0$ is gapped by $\Deltamin$
    (\cref{lem:Stabilizer}), and the ground state space $\H^\vsigma$ of
    $H^{[\vsigma]}(\omega)$ is gapped by $2\omega$ (as symmetry violations
    occur pairwise). Both parts $H_0$ and $H^{[\vsigma]}(\omega) = \omega
    \sum_{p \in \P} (\mathds{1} - \sigma_pU_p)/2$ commute due to
    \cref{eq:symmetries}, so the composite Hamiltonian $\tilde H_0^{[\vsigma]}
    = H_0 + H^{[\vsigma]}(\omega)$ is gapped by $\gamma :=
    \min\{\Deltamin, 2\omega\}$ (independent of the system size) for all $L$.
    We conclude that the condition \textsf{Gap} is fulfilled. In the following,
    we denote the degenerate ground state space of $\tilde H_0^{[\vsigma]}$ by
    $\tilde \G_0^{[\vsigma]} := \G_0 \cap \H^\vsigma$.

    \item The strategy to show the local gap condition is similar to that used
    for frustration-freeness (A3) and the excitation gap (A4). Let us fix
    an arbitrary $\barn \in \barN$ and $r \in \mathbb{N}_0$, and abbreviate
    $A \equiv {B_\barn(r)}$.
    We consider $\tilde H_0^{[\vsigma]}\vert_A = H_0\vert_A +
    H^{[\vsigma]}(\omega)\vert_A$ with the local decomposition
    \begin{align}
        H_0\vert_A = \sum_{\text{supp}(Q_n) \subseteq A} Q_n
        \qquad\text{and}\qquad
        H^{[\vsigma]}(\omega)\vert_A = \sum_{\text{supp}(Q_p) \subseteq A} Q_p
        \,.
        \label{eq:H0A}
    \end{align}
    The first term $H_0\vert_A$ is again a classical blockade Hamiltonian, and
    we know that its decomposition in \eqref{eq:H0A} is frustration-free (since
    the full Hamiltonian $H_0 = H_0\vert_\V$ is frustration-free). The
    Hamiltonian $H_0\vert_A$ has an extensively degenerate ground state space
    $\G_0|_A$, which includes the patterns of closed loops that conserve the
    $\mathbb{Z}_2$-valued flux at sites $s \subset n$ with $\text{supp}(Q_n)
    \subseteq A$. The excited eigenstates are characterized by de-excited
    qubits and are gapped by $\Deltamin$ from $\G_0|_A$
    (recall \cref{lem:Stabilizer}).

    Let us now consider a state $\ket{\vec{n}|_{A}} \in \G_0|_A$ with a
    classical closed-loop pattern on $A$ and define 
    \begin{align}
        \ket{\tilde \psi_0^{[\vsigma]}|_A} 
        := \left[\,
        \prod_{\text{supp}(Q_p) \subseteq A} \sqrt{2}\hP_{p}^{[\sigma_{p}]}
        \,\right]\ket{\vec{n}|_{A}}
        \,. 
        \label{eq:nloopA}
    \end{align}
    For $A \neq \V$ [remember $A = \V$ is covered in (A4)], the plaquette operators $U_p$ with $p = \text{supp}(Q_p)
    \subseteq A$ are functionally independent. Thus, the loop states obtained from
    applying the $\hP_{p}^{[\sigma_{p}]}$ are mutually orthogonal 
    and $\ket{\tilde \psi_0^{[\vsigma]}|_A}$ is normalized. 
    By construction, $\ket{\tilde \psi_0^{[\vsigma]}|_A}$ minimizes the energy
    of every term $Q_p$ in $H^{[\vsigma]}(\omega)\vert_A$ individually (with
    the lowest eigenvalue $q_{0,p} = 0$). The excited eigenstates are
    characterized by violations of the symmetry condition and are gapped by
    $\omega$ from the ground state space $\H^\vsigma\vert_A$ of
    $H^{[\vsigma]}(\omega)\vert_A$.

    Additionally, since $H_0$ and $\hP_{p}^{[\sigma_{p}]}$ commute by
    construction, the state $\ket{\tilde \psi_0^{[\vsigma]}|_A}$ is still a
    frustration-free ground state of $H_0\vert_A$ in $\mathcal{G}_0|_A$. This
    means $\ket{\tilde \psi_0^{[\vsigma]}|_A} \in \mathcal{G}_0|_A \cap
    \H^\vsigma\vert_A$, i.e., $\ket{\tilde \psi_0^{[\vsigma]}|_A}$ is a
    frustration-free ground state of the combined Hamiltonian $\tilde
    H_0^{[\vsigma]}|_A$. Both parts $H_0\vert_A$ and
    $H^{[\vsigma]}(\omega)\vert_A$ commute, so the excited states are separated
    from the ground state space by at least $\min\{\Deltamin,\omega\} > 0$ (independent of $r$); that is,
    $\tilde H_0^{[\vsigma]}$ is locally gapped with
    $\gamma_\text{loc}(r):= \min\{\Deltamin,\omega\}$. We conclude that the
    condition \textsf{Local-Gap} is satisfied. In the following, we write
    $\tilde{\mathcal{G}}_0^{[\vsigma]}|_A := \mathcal{G}_0|_A \cap
    \H^\vsigma\vert_A$ for the degenerate ground state space.

    \item Finally, we show that the Hamiltonian $\tilde H_0^{[\vsigma]}$
    exhibits LTQO for arbitrary $\vsigma \in \{\pm1\}^{|\P|}$. The idea is
    to reduce the problem to the symmetric case $\vsigma = \s$ -- which has
    already been treated in detail in Ref.~\cite[Appendix~D.4]{maier2025}.
    In the following, we often restrict expressions to the ground state space
    $\G_0$ of the classical part $H_0$ of the Hamiltonian. For equalities that
    are valid only within this subspace we write $\stackrel{\G_0}{=}$.
    
    Remember that we always have $\prod_{p \in \P} \sigma_p = 1$ (otherwise
    $\H^\vsigma$ is trivial), that is, the number of plaquettes with $\sigma_p
    = -1$ is \emph{even}. This allows us to pair up the plaquettes $p \in \P$ with
    $\sigma_p = -1$ by connecting them via a path on the \emph{dual}
    lattice. Without loss of generality, we can assume that each link
    contributes to at most one path, i.e., the paths are disjoint (we can
    perform ``surgery'' to drop links that are traversed twice in the two paths --
    this conserves the parity condition at every plaquette). We associate each
    path with the set of vertices of its links, and collect the contributing
    link vertices of all paths in a unified vertex set $\bar\Gamma$, see
    \cref{fig:ZLoop}.

    \begin{figure*}[tb]
        \includegraphics[scale=0.8]{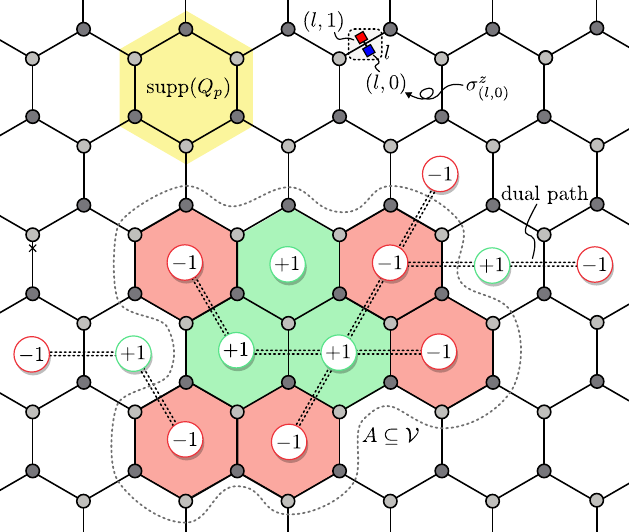}
        \caption{%
            \textbf{Symmetries and plaquettes.} 
            We consider periodic boundary conditions. One finds algebraically
            that there always exists an even number of plaquettes $p \in \P$
            with eigenvalues $\sigma_p = -1$ (red circles). One can connect
            these plaquettes pairwise via disjoint paths on the dual lattice
            (double-dashed lines). The vertices $i \in l$ of links $l$
            traversed by the dual paths define the vertex set $\bar\Gamma$.
            Then, each plaquette with $\sigma_p = -1$ has an \emph{odd} number
            of bounding links $l \subseteq \bar\Gamma$. The remaining
            plaquettes have $\sigma_p = +1$ (green or no circles) and an
            \emph{even} number of bounding links $l \subseteq \bar\Gamma$. As
            an example, the yellow area marks the support of $Q_p$ for a plaquette
            $p$, which fully incorporates the adjacent sites $s \subset p$ and
            the bounding links $l \subset p$.  The dashed area $A \subseteq \V$
            incorporates $8$ plaquettes $p \in \P$ whose support
            $\text{supp}(Q_p) = p$ fully lies in $A$. Each such plaquette is
            colored red (green) if $\sigma_p = -1$ ($\sigma_p = +1$). To each
            link $l \in \L$ there are associated two qubits $(l,0)$ [blue] and
            $(l,1)$ [red]. For each link $l$ that is both in $\bar\Gamma$ and
            $A$, the Pauli $z$-matrix $\sigma_{(l,0)}^z$ is included in the
            path operator $Z_A^\text{path}$.
        }
        \label{fig:ZLoop}
    \end{figure*}

    By construction of the blockade graph $\mathfrak{B}_L$, each link $l$ of
    the honeycomb graph is associated with two vertices, which we label by
    $(l,0)$ and $(l,1)$.  Consider now an ``area'' $A \subseteq \V$ and define
    the path operator
    \begin{align}
        Z_A^\text{path} 
        := \prod_{l \subseteq \bar\Gamma \cap A} \sigma_{(l,0)}^z
        \,,
    \end{align}
    which includes one Pauli $z$-matrix for each link along the paths in $A$.
    Note that $Z_A^\text{path}$ is a product of Pauli $z$-matrices, and hence
    is Hermitian, unitary, and squares to the identity.

    The path operator $Z_A^\text{path}$ commutes with $Q_n$ (since both operators are diagonal in
    the natural basis) so that $Z_A^\text{path}$ preserves the $Q_n$-eigenspace
    $\G_0\vert_A$, as do $\tilde H_0^{[\vsigma]}\vert_A$ and in particular $\tilde H_0^{[\s]}\vert_A$. It
    therefore makes sense to restrict the Hilbert space to the ground state
    space $\G_0\vert_A$. By construction, for each plaquette $p \in \P$ with
    $\text{supp}(Q_p)=p \subseteq A$ and $\sigma_p = +1$ ($\sigma_p = -1$),
    there exists an even (odd) number of outgoing dual paths, i.e., an even
    (odd) number of links $l \subseteq \bar\Gamma \cap p$. Within the ground
    state space $\G_0\vert_A$, on every link $l \in \L$ with 
    $l \subseteq \text{supp}(Q_n) \subseteq A$ for some $n \in \N$, 
    exactly one qubit is excited,
    so we can rewrite $Z_A^\text{path} \stackrel{\G_0\vert_A}{=} \prod_{l
    \subseteq \bar\Gamma \cap A} \left(\sigma_{(l,0)}^z -
    \sigma_{(l,1)}^z\right)/2$. Thus, in $\G_0\vert_A$, the product operator
    $Z_A^\text{path}$ (anti)commutes with the operators $U_p$ of plaquettes $p
    \subseteq A$ for which $\sigma_p = +1$ ($\sigma_p = -1$).
    We can therefore perform a unitary transformation
    \begin{align}
        Z_A^\text{path}Q_n(Z_A^\text{path})^\dag = Q_n
        \qquad\text{and}\qquad
        Z_A^\text{path}Q_p^{[\sigma_p]}(Z_A^\text{path})^\dag 
        \stackrel{\G_0\vert_A}{=} Q_p^{[1]}
        \,.
        \label{eq:Similarity}
    \end{align}
    Here and in the following, we use the notations $Q_p^{[\vsigma]} \equiv
    Q_p^{[\sigma_p]} \equiv \omega(\mathds{1} - \hP_p^{[\sigma_p]})$ and
    $Q_n^{[\vsigma]} \equiv Q_n$ interchangeably.
    \Cref{eq:Similarity} implies that the Hamiltonians $\tilde
    H_0^{[\vsigma]}|_A = \sum_{\text{supp}(Q_\barn^{[\vsigma]}) \subseteq A}
    Q_\barn^{[\vsigma]}$ and $\tilde H_0^{[\s]}|_A$ are unitarily equivalent in
    $\G_0\vert_A$. In particular, both Hamiltonians have the same spectrum in
    $\G_0\vert_A$.
    Indeed, $\ket{\tilde \psi_0^{[\vsigma]}\vert_A} \in
    \G_0\vert_A$ is an eigenvector of $\tilde H_0^{[\vsigma]}|_A$ for some
    eigenvalue, if and only if
    \begin{align}
        \ket{\tilde \psi_0^{[\s]}\vert_A} 
        := Z_A^\text{path}\ket{\tilde \psi_0^{[\vsigma]}\vert_A}
        \label{eq:ZLoopStates}
    \end{align}
    is again in $\G_0\vert_A$ and an eigenvector of $\tilde H_0^{[\s]}|_A
    \stackrel{\G_0\vert_A}{=} Z_A^\text{path}\tilde
    H_0^{[\vsigma]}|_A(Z_A^\text{path})^\dag$ with the same eigenvalue.

    To show LTQO, we consider the two balls $B_\barn(r)$ and $B_\barn(r+\d)$ and
    naturally split
    \begin{align}
        Z_{B_\barn(r+\d)}^\text{path} 
        = Z^\text{path}_{B_\barn(r)} 
        \otimes Z^\text{path}_{B_\barn(r+\d)\backslash B_\barn(r)}
    \end{align}
    into a tensor product of unitaries acting on the different regions.
    We are interested in the (frustration-free) ground states $\ket{\tilde
    \psi_0^{[\vsigma]}\vert_{B_\barn(r+\d)}}$ in
    $\tilde{\mathcal{G}}_0^{[\vsigma]}|_{B_\barn(r+\d)}$ of the Hamiltonian
    $\tilde H_0^{[\vsigma]}|_{B_\barn(r+\d)}$ in region $A = B_\barn(r+\d)$
    [e.g., $\ket{\tilde \psi_0^{[\vsigma]}\vert_A}$ defined in
    \cref{eq:nloopA}].
    To evaluate the subsystem density matrix
    $\tilde{\rho}_{B_\barn(r)}^{[\vsigma]}(\d)$ [defined in
    \cref{eq:PartTrDenM}], we rewrite the density matrix in terms of the
    subsystem density matrix $\tilde{\rho}_{B_\barn(r)}^{[\s]}(\d)$ of a ground
    state $\ket{\tilde \psi_0^{[\s]}\vert_{B_\barn(r+\d)}} \in
    \tilde{\mathcal{G}}_0^{[\s]}|_{B_\barn(r+\d)}$ in the symmetric sector
    [given by \cref{eq:ZLoopStates}]:
    \begin{align}
        \tilde{\rho}_{B_\barn(r)}^{[\vsigma]}(\d) 
        = Z_{B_\barn(r)}^\text{path}\tilde{\rho}_{B_\barn(r)}^{[\s]}(\d)Z_{B_\barn(r)}^\text{path}
        \,.
    \end{align}
    Here we used that $Z^\text{path}_{B_\barn(r+\d)\backslash B_\barn(r)}$
    cancels with itself due to the cyclicity of the (partial) trace.
    Crucially, the Schatten-$1$ norm in the definition \eqref{eq:LTQOCondition}
    of LTQO is unitarily invariant. This implies that the unitaries
    $Z_{B_\barn(r)}^\text{path}$ drop out of the norm:
    \begin{align}
        \lVert \tilde\rho_{B_\barn(r)}^{[\vsigma]}(\d) - \tilde{\rho}'\-_{B_\barn(r)}^{[\vsigma]}(\d) \rVert_1 
        \begin{aligned}[t]
        = \lVert\tilde{\rho}_{B_\barn(r)}^{[\s]}(\d) - \tilde{\rho}'\-_{B_\barn(r)}^{[\s]}(\d)\rVert_1 
        \le 2F(\d)
        \,.
        \end{aligned}
    \end{align}
    The last inequality for the symmetric case follows from Ref.~\cite[Appendix
    D.4]{maier2025}, where it was shown that $F$ can be chosen as a step
    function. Hence we conclude that the condition \textsf{LTQO} is satisfied
    for $\tilde H_0^{[\vsigma]}$ in every symmetry sector $\H^\vsigma$.
        
    \end{enumerate}
    In summary, we showed that all conditions (A1)--(A6) needed for
    \cref{the:Michalakis} are satisfied by the unperturbed Hamiltonian $\tilde
    H_0^{[\vsigma]}$, and that $V_k^q(\Omega)$ is a $(J,f)$-perturbation.
\end{proof}

\subsection{Spectral flow}%
\label{subsec:SpectralFlow}

In the proofs of \cref{lem:GapStability,lem:GroundState} we argued that the
ground state space of a smoothly parameter-dependent Hamiltonian remains within a fixed
decoupled subspace as long as the gap does not close. For example, we used that the
ground state space $\tilde\G_k^q(\Omega,\omega;\vsigma)$ of the auxiliary
Hamiltonian \eqref{eq:AnciHam} remains a subspace of the symmetry sector
$\H^\vsigma$ for all $0\leq\Omega\leq\Omega_1'$. Here we show this rigorously
using spectral flow arguments in a generic setting:
\begin{applem}[Spectral flow]
    \label{lem:SpectralFlow}
    
    Let $H(\Omega) := H_0 + V(\Omega)$ for $\Omega \in [0,\Omega_{\max}]$ be a
    smooth family of Hermitian operators on a finite-dimensional Hilbert space $\H$
    with $V(0) = 0$. Let
    $\G(\Omega)$ denote the ground state space of $H(\Omega)$, and assume that
    there is a $\gamma > 0$ such that $H(\Omega)$ is gapped by at least
    $\gamma$ above $\G(\Omega)$ for all $\Omega \in [0,\Omega_{\max}]$. Let $P$
    be a projector with $\com{H(\Omega)}{P} = 0$ for all $\Omega \in
    [0,\Omega_{\max}]$ such that $\G(0) \leq P\H$. Then $\G(\Omega) \leq
    P\H$ for all $\Omega \in [0,\Omega_{\max}]$.

\end{applem}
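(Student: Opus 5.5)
The plan is to exploit that $P$ commutes with $H(\Omega)$, so $H(\Omega)$ splits as $H(\Omega)P \oplus H(\Omega)(\Id-P)$. Every eigenspace of $H(\Omega)$, in particular $\G(\Omega)$, is then spanned by eigenvectors lying in $P\H$ or in $(\Id-P)\H$. Write $E(\Omega)$ for the ground state energy of $H(\Omega)$. Let $E_P(\Omega)$ be the lowest eigenvalue of $H(\Omega)$ restricted to $P\H$, and $E_\perp(\Omega)$ the lowest eigenvalue restricted to $(\Id-P)\H$ (set $E_\perp=+\infty$ if this space is trivial). Then $E(\Omega)=\min\{E_P,E_\perp\}$. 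The claim $\G(\Omega)\le P\H$ is equivalent to the strict inequality $E_\perp(\Omega)>E(\Omega)$. By the gap assumption, and since every eigenvalue of $H(\Omega)$ other than $E(\Omega)$ lies in $[E(\Omega)+\gamma,\infty)$, this is in turn equivalent to $E_\perp(\Omega)\ge E(\Omega)+\gamma$.

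Next, I would set up a continuity argument. Eigenvalues of a continuous family of Hermitian matrices on a finite-dimensional space depend continuously on the parameter; this follows from Weyl's inequality $|\lambda_j(A)-\lambda_j(B)|\le\lVert A-B\rVert$. Hence $E(\Omega)$, $E_P(\Omega)$ and $E_\perp(\Omega)$ are all continuous in $\Omega$. For the latter two, apply Weyl to the restrictions $PH(\Omega)P$ on $P\H$ and $(\Id-P)H(\Omega)(\Id-P)$ on $(\Id-P)\H$; $P$ is $\Omega$-independent, so these restrictions are themselves continuous families. Define $f(\Omega):=E_\perp(\Omega)-E(\Omega)$, which is continuous. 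The gap assumption forces, for every $\Omega$, either $f(\Omega)=0$ (when $(\Id-P)\H$ meets $\G(\Omega)$) or $f(\Omega)\ge\gamma$. At $\Omega=0$ we have $\G(0)\le P\H$, so $f(0)\ge\gamma$.

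The key step is then the intermediate value theorem. The continuous function $f$ on the connected interval $[0,\Omega_{\max}]$ takes values in $\{0\}\cup[\gamma,\infty)$. It cannot jump from $[\gamma,\infty)$ to $0$ without taking intermediate values in $(0,\gamma)$. Hence $f\ge\gamma$ throughout, which gives $\G(\Omega)\le P\H$ for all $\Omega$. The trivial case $(\Id-P)\H=\{0\}$ is immediate.

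I expect the main obstacle to be justifying continuity of the restricted ground state energies, including the dichotomy for $f$. The subtlety is that $E(\Omega)$ itself might be degenerate and have components in both blocks; the dichotomy handles exactly this, since any component of $\G(\Omega)$ in $(\Id-P)\H$ forces $f=0$. If one prefers an honest spectral-flow formulation, there is an alternative. The Riesz projector $P_\G(\Omega)=\frac{1}{2\pi \mathrm{i}}\oint_{\Gamma}(z-H(\Omega))^{-1}dz$, over a contour enclosing only $E(\Omega)$ (possible locally thanks to the uniform gap), is smooth and commutes with $P$. Then $(\Id-P)P_\G(\Omega)$ is a continuous family of projectors, so its rank is locally constant, and it vanishes at $\Omega=0$. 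The continuity argument above avoids the contour bookkeeping, so I would present that version first.
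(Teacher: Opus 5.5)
Your argument is sound only in the case where $\G(\Omega)=\ker(H(\Omega)-E(\Omega))$ is an exact eigenspace and every other eigenvalue lies in $[E(\Omega)+\gamma,\infty)$. That is not how the paper uses the term. ``Ground state space'' is meant in the sense of \cref{def:ExcitationGap}: the spectral subspace of a low-lying cluster in $[E,E+\delta]$, with possibly nonzero splitting $\delta$, separated by the gap from the rest of the spectrum. The lemma is applied in exactly this sense in step (3) of the proof of \cref{lem:GapStability} and step (1) of the proof of \cref{lem:GroundState}. There $\G(\Omega)$ is the $4$-dimensional cluster, which is exactly degenerate at $\Omega=0$ but split for $\Omega>0$. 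For $\vsigma=\s$ and $k<q$, \cref{lem:GroundState} even makes the lowest state unique. So the exact ground eigenspace has no uniform gap as $\Omega\to0^+$, and your version of the lemma could not be invoked. In that setting your step ``every eigenvalue other than $E(\Omega)$ lies in $[E(\Omega)+\gamma,\infty)$'' is false, and the dichotomy fails with it. For example, a cluster eigenvector in $(\Id-P)\H$ at energy $E+\delta/2$ gives $f(\Omega)=\delta/2\in(0,\gamma)$.

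The repair is easy.
\begin{itemize}
\item Hold $m=\dim\G(\Omega)$ fixed along the family. This consistency condition is implicit in the paper's statement too, and it holds with $m=4$ in the applications.
\item Compare with the top of the cluster instead of its bottom. Set $g(\Omega):=E_\perp(\Omega)-\lambda_m(\Omega)$, where $\lambda_m$ is the $m$-th eigenvalue and is continuous by Weyl's inequality.
\item Then $g\le 0$ if $\G(\Omega)\not\le P\H$, and $g\ge\gamma$ otherwise, so your connectedness argument goes through verbatim.
\item Equivalently, your Riesz-projector variant works once the contour encloses the whole cluster rather than only $E(\Omega)$. Then $\operatorname{tr}[(\Id-P)P_\G(\Omega)]$ is continuous, integer-valued and zero at $\Omega=0$.
\end{itemize}

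Once repaired, your route differs genuinely from the paper's. The paper invokes the spectral-flow theorem of Nachtergaele \emph{et al.}: $\G(\Omega)=U(\Omega)\G(0)$, with $U$ generated by $D(\Omega)=\int e^{\mathrm{i}tH(\Omega)}V'(\Omega)e^{-\mathrm{i}tH(\Omega)}W(t)\,dt$. Since $\Id-2P$ commutes with $H(\Omega)$ and $V'(\Omega)$, it commutes with $D(\Omega)$, and uniqueness of the ODE solution forces $[U(\Omega),P]=0$. That version handles clusters automatically and produces an explicit $P$-commuting unitary intertwiner, but it needs differentiability and an external theorem. Yours needs only continuity of $H(\Omega)$, finite dimension, and connectedness of $[0,\Omega_{\max}]$, which is all the lemma actually requires.
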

\begin{proof}

    Note that $\com{H(\Omega)}{P} = 0$ for all $\Omega$ implies $\com{H_0}{P} =
    0$, $\com{V(\Omega)}{P} = 0$ (since $V(0)=0$), and therefore also
    $\com{V'(\Omega)}{P} = 0$.
    The existence of a gap allows us to apply a theorem on spectral flow by
    Nachtergaele~\etal~\cite[Theorem 6.3]{nachtergaele2019} to the
    parameter-dependent Hamiltonian $H(\Omega) = H_0 + V(\Omega)$ with
    parameter $\Omega$. 
    The theorem asserts that $\G(\Omega) = U(\Omega)\G(0)$ (for elementwise
    operation), with the unitary spectral flow $U(\Omega)$ uniquely determined
    by
    \begin{align}
        \frac{d}{d\Omega} U(\Omega) 
        = -\mathrm{i}D(\Omega)U(\Omega) 
        \qquad\text{and}\qquad 
        U(0) = \mathds{1}
        \,,\label{eq:Nachtergaele1}
    \end{align}
    where $D(\Omega) := \int_\mathbb{R}e^{\mathrm{i}tH(\Omega)}\,V'(\Omega)\,e^{-\mathrm{i}tH(\Omega)}W(t)\, dt$
    for some weight function $W \in L^1(\mathbb{R})$ that defines the spectral
    flow. It is guaranteed that a unique solution of \cref{eq:Nachtergaele1}
    exists.
    
    Let us assume that $\com{U(\Omega)}{P}=0$; then, with elementwise
    operation on $\G(\Omega)$, we can write
    \begin{align}
        P\G(\Omega) 
        = P U(\Omega)\G(0)
        = U(\Omega)P\G(0)
        = U(\Omega)\G(0)
        = \G(\Omega)
        \,,
    \end{align}
    and therefore $\G(\Omega) \le P\H$. Thus it remains to be shown that
    $\com{U(\Omega)}{P}=0$.
    To this end, define the Hermitian unitary $U_P := \mathds{1} - 2P$, which
    commutes with $H(\Omega)$ and $V'(\Omega)$ since these operators commute
    with $P$. Then, $U_P$ commutes with $D(\Omega)$, and for
    \cref{eq:Nachtergaele1} we find
    \begin{align}
        \frac{d}{d\Omega} [U_P^\dag U(\Omega)U_P] 
        = U_P^\dag[-\mathrm{i}D(\Omega)U(\Omega)]U_P 
        = -\mathrm{i}D(\Omega)[U_P^\dag U(\Omega)U_P]\,,
        \qquad
        U_P^\dag U(0)U_P = \mathds{1}
        \,;\label{eq:h2}
    \end{align}
    hence by uniqueness of the solution
    \begin{align}
        U(\Omega) 
        = U_P^\dag U(\Omega)U_P 
        = U(\Omega) 
        - 2P U(\Omega) 
        - 2U(\Omega)P  
        + 4P U(\Omega)P 
        \,.\label{eq:h3}
    \end{align}
    Left and right multiplication by $P$ yields $P U(\Omega) = P U(\Omega)P =
    U(\Omega)P$ and therefore $\com{U(\Omega)}{P}=0$ as required. 
\end{proof}

\subsection{Weakened gap stability for non-Abelian groups}%
\label{subsec:GapNonAbelian}

Here we show the weakened version of \cref{lem:GroundState}, which we used in
\cref{sec:nonabelian} to argue that \cref{the:ExcitationGap} is actually valid
for the much larger class of blockade Hamiltonians introduced in
Ref.~\cite{buchler2026quantum}. To this end, we assume that the Hilbert space
splits into symmetry sectors $\H^\vec{\rho}$, labeled by (potentially
higher-dimensional) irreducible representations $\rho_p$ of some finite group $G$ on
each plaquette $p\in\P$. Crucially, we only assume that \cref{lem:GapStability} 
holds for the symmetric sector $\H^\s$, where the plaquette symmetries act
with the trivial representation on every plaquette. Then one can show:

\setcounter{lemma}{1}
\renewcommand{\thelemma}{\arabic{lemma}$\,'$}
\renewcommand{\theHlemma}{nonabelian.\arabic{lemma}}

\begin{lemma}[Ground state]
    \label{lem:GroundState_prime}

    There exist $\OOLB > 0$, $\LOLB \in \mathbb{N}$, such that for all $\Omega
    \in (0, \OOLB]$, $L \ge \LOLB$, and for all nonnegative integers $k$, $q$
    with $k \le \nO \le q$, the following holds:
    \begin{itemize}
   
    \item The ground state space $[\G_k^q]^\s$ of the constrained
    Hamiltonian $[H_k^q]^\s$ is a subspace of $\H_k^q$.
    
    \item There exists a ground state $\ket{\psi_k^q} = \sum_{\vec{n} \in
        [\mathbb{Z}_2^{|\V|}]_k^q}(-1)^{|\vec{n}|}c_{\vec{n}}\ket{\vec{n}}$ of
        the Hamiltonian $H_k^q$ with coefficients $c_{\vec{n}} \ge 0$ in the
        natural basis.
        This ground state $\ket{\psi_k^q}$ lies in the symmetric sector $\H^\s$, i.e., $\ket{\psi_k^q}
        = \ket{[\psi_k^q]^\s}$ is also a ground state of $[H_k^q]^\s$.

    \end{itemize}

\end{lemma}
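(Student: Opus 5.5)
The plan is to follow the four steps of the proof of \cref{lem:GroundState}, but to replace every use of gaps in asymmetric sectors by a variational comparison with the symmetric sector.

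\emph{First bullet.} Only the symmetric sector is involved. By the assumed validity of \cref{lem:GapStability} for $\H^\s$, the Hamiltonian $[H_k^q]^\s(\Omega)$ is gapped by $\Deltamin/2$ for $\Omega\le\OOLA$ and $L\ge\LOLA$. At $\Omega=0$ its ground state space is $\G_0\cap\H^\s\le\H_{\nO}^{\nO}\le\H_k^q$, because $k\le\nO\le q$. The projector $P_k^q$ commutes with $H_k^q$ and with the symmetric-sector projector; this is the analog of \cref{lem:CommutationsRelations}, since $U_p(g)$ permutes qubits within plaquettes and therefore conserves $n_\p$. Applying \cref{lem:SpectralFlow} to the family $[H_k^q]^\s(\Omega)$ on $\H^\s$ with projector $P_k^q$ then gives $[\G_k^q]^\s\le\H_k^q$. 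Accordingly I set $\OOLB:=\OOLA$ and $\LOLB:=\LOLA$.

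\emph{Second bullet: existence of a sign-alternating ground state.} I would use the corollary of Perron--Frobenius for essentially nonnegative matrices that are not necessarily irreducible. Set $Z=\prod_i\sigma_i^z$; then $M=-ZH_k^q(\Omega)P_k^qZ$ has nonnegative off-diagonal entries on $\H_k^q$. Its largest eigenvalue therefore has an eigenvector with entries $c_{\vec n}\ge0$, and $\ket{\psi}=Z\sum c_{\vec n}\ket{\vec n}$ attains $\inf\mathrm{spec}(H_k^qP_k^q)$. This works for every $k\le q$, including $k=q$.

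I must still show that this is the global minimum of $H_k^q$ on all of $\H$, and not only on $\H_k^q$. This is the step where the original proof used gaps in all sectors. To avoid that, I would argue directly. The orthogonal block $H_k^q(1-P_k^q)$ equals $H_\p(1-P_k^q)$, which is diagonal on $\p$. For each fixed pattern on $\p$ outside $[k,q]$ it reduces to a blockade Hamiltonian on the rest of the lattice with modified detunings. Applying Perron--Frobenius once more in that block, its minimum is attained by a nonnegative (after $Z$) vector. Symmetrizing that vector over the plaquette groups keeps it nonnegative and nonzero, so it lies in $\H^\s$. Hence $\inf\mathrm{spec}(H_k^q)=\inf\mathrm{spec}([H_k^q]^\s)$.

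Given this, the first bullet shows that the minimum in $\H^\s$ is attained inside $\H_k^q$. So the global ground state energy of $H_k^q$ coincides with that of $H_k^qP_k^q$, and the Perron--Frobenius vector $\ket{\psi}$ is a ground state of $H_k^q$.

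\emph{Symmetry of the chosen ground state.} Here I would not rely on uniqueness. Instead I average: set $\ket{\psi_k^q}\propto\prod_p\bigl(\tfrac1{|G|}\sum_{g}U_p(g)\bigr)\ket{\psi}$. Each $U_p(g)$ is a permutation of natural basis states that preserves $|\vec n|$ and $n_\p$. Consequently, the averaged coefficients remain of the form $(-1)^{|\vec n|}c'_{\vec n}$ with $c'_{\vec n}\ge0$. They are nonzero because no cancellation can occur among nonnegative terms. The averaged vector still lies in the ground eigenspace, since the $U_p(g)$ commute with $H_k^q$. It lies in $\H^\s$ by construction, and so it is also a ground state of $[H_k^q]^\s$.

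\emph{Main obstacle.} I expect the hardest step to be the one in the second bullet: rigorously showing that the global ground state energy of $H_k^q$ is attained in $\H^\s\cap\H_k^q$ when no gap is available in the asymmetric sectors. The symmetrization argument handles the reduction to $\H^\s$, but one must check that the Perron--Frobenius vector of each diagonal-on-$\p$ block is genuinely nonnegative after the $Z$ conjugation. One must also track the sign bookkeeping carefully for $k=q$, where irreducibility fails.
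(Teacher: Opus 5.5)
Your proposal is correct. Its skeleton matches the paper's: the first bullet is proved via \cref{lem:SpectralFlow} inside $\H^\s$ only, the sign structure comes from Perron--Frobenius, and the first bullet is then used to place a symmetric ground state in $\H_k^q$. Where you differ is in the tools used for the second bullet.

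The paper never splits $\H$ into $\H_k^q$ and its complement. It adds $\epsilon$ to every entry of $-ZH_k^q(\Omega)Z$ on the full space and applies the strong Perron--Frobenius theorem to get a \emph{unique}, strictly positive vector. It then gets $U_p(g)$-invariance from that uniqueness: each $U_p(g)$ acts on the one-dimensional eigenspace by a phase, and comparing the signed coefficients forces the phase to be $+1$. Finally it sends $\epsilon\to0$ to obtain a nonnegative global ground state in $\H^\s$, which the first bullet places in $\H_k^q$.

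You instead use the weak Perron--Frobenius theorem for reducible essentially nonnegative matrices, and you obtain symmetry from the averaging projector $P^\s=\prod_p\frac{1}{|G|}\sum_g U_p(g)$. That projector needs neither uniqueness nor a limit. It maps sign-alternating vectors with $c_{\vec n}\ge0$ to nonzero vectors of the same type, in the same eigenspace and the same $n_\p$-block. This is more direct than the regularize-and-take-the-limit route.

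Your block decomposition, however, is unnecessary, and so is the pattern-wise reduction of $H_\p(1-P_k^q)$ to blockade Hamiltonians with modified detunings. Applying weak Perron--Frobenius plus averaging directly to $-ZH_k^qZ$ on all of $\H$ already gives a sign-alternating global ground state in $\H^\s$. The first bullet then puts it in $\H_k^q$, exactly as in the paper.

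The ``main obstacle'' you flag is also not a real one. Let $A$ be a real symmetric matrix with nonnegative off-diagonal entries. Replacing a Rayleigh-quotient maximizer $x$ by its entrywise absolute value $|x|$ cannot decrease $x^TAx$. Hence $|x|$ is also a top eigenvector, so a nonnegative one always exists, and the case $k=q$ needs no separate sign bookkeeping.
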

\begin{proof}[Proof.]

    Consider the symmetric sector $\H^\s$, and let $k$ and $q$ be nonnegative
    integers such that $k \le \nO \le q$. From \cref{lem:GapStability}, we know
    that $[H_k^q]^\s \equiv P^\s H_k^q P^\s$ is gapped for $\Omega \le \OOLA$
    and $L \ge \LOLA$, with a ground state space $[\G_k^q]^\s$ the degeneracy
    of which depends on the group $G$. Let us therefore set $\OOLB:=\OOLA$ and
    $\LOLB:=\LOLA$. The proof splits into five steps:
    \begin{enumerate}[label=(\arabic*)]

    \item
    In a first step, we want to show that the ground state space $[\G_k^q]^\s$
    lies within the subspace $\H_k^q$ for all $\Omega \le \OOLB$. [Note that
    this is trivial for the special case $(k, q) = (0, \infty)$.]
    We start with the classical case ($\Omega=0$) where $H^q_k(0)=H_0$ is
    exactly solvable and it is easy to show that $[\G_k^q]^\s(0) = \G_0 \cap
    \H^\s$ (cf.~\cref{lem:Stabilizer}). These classical ground states have
    $\nO$ excitations on $\p$ (by definition of $n_0$), so that $[\G_k^q]^\s(0)
    \le \H_{\nO}^{\nO} \le \H_k^q$ (since $k \le \nO \le q$) and therefore
    $[\G_k^q]^\s(0) = P_k^q[\G_k^q]^\s(0)$  with $P_k^q$ the projector onto
    $\H_k^q$. 
    Now note that the parametric Hamiltonian $[H_k^q]^\s(\Omega) = H_0P^\s +
    V_k^q(\Omega)P^\s$ commutes with the projector $P_k^q$
    (cf.~\cref{lem:CommutationsRelations}) and remains gapped for all
    $0\leq\Omega\leq\Omega_2'$. Hence the same continuity argument as in step
    (3) of the proof of \cref{lem:GapStability} yields $[\G_k^q]^\s(\Omega) =
    P_k^q[\G_k^q]^\s(\Omega)$ and therefore $[\G_k^q]^\s \le \H_k^q$ for all
    $\Omega \in (0, \OOLB]$ (\cref{lem:SpectralFlow}).
    Note that the ground state energy of $[H_k^q]^\s$ is always negative (use a
    superposition of classical ground states in $\H^\s$ to show that
    $[H_k^q]^\s$ has negative expectation values). Hence $[\G_k^q]^\s$ is also
    the ground state space of the Hamiltonian $P_k^q[H_k^q]^\s P_k^q=[H_k^q]^\s
    P_k^q$ projected onto $\H_k^q$.

    In contrast to the proof of \cref{lem:GroundState} in \cref{subsec:GroundState}, we now
    \emph{cannot} conclude that the \emph{global} ground state $\ket{\psi_k^q}$
    of $H_k^q$ lives in $\H_k^q$ since we lack knowledge about the ground
    states of $[H_k^q]^\vec{\rho}$ in the nontrivial symmetry sectors.

    \item
    Consider in the following $0<\Omega \leq \OOLB$ and define $Z:=
    \prod_{i\in\V}\sigma_i^z$ so that $Z\ket{\vec{n}} =
    (-1)^{|\vec{n}|}\ket{\vec{n}}$ where $|\vec{n}|=\sum_{i\in\V}n_i$ denotes
    the total number of excitations. With this, we can write $-H_k^q(-\Omega) =
    -ZH_k^q(\Omega)Z$ (note that we do not project onto $\H^q_k$ as in
    \cref{subsec:GroundState}). Note that the matrix $\bra{\vec
    n}[-H_k^q(-\Omega)]\ket{\vec n'}$ is \emph{not} irreducible for $\vec
    n,\vec n'\in \mathbb{Z}_2^{|\V|}$ since $H_k^q(-\Omega)$ does not couple
    excitation numbers outside of $\{k,\ldots,q\}$ on the plaquette $\p$.
    To fix this, we consider the auxiliary matrix $M_{\vec n,\vec n'}(\epsilon) :=
    \bra{\vec n}[-H_k^q(-\Omega)]\ket{\vec{n}'} + \epsilon$ for some $\epsilon >
    0$. This matrix is clearly \emph{essentially positive} (as it has only
    positive off-diagonal elements $\epsilon$ and $\epsilon + \Omega$).
    
    \item
    The essential positivity of $M(\epsilon)$ allows us to invoke the theorem by Perron
    and Frobenius~\cite{perronfrobenius1912}. It asserts that the largest
    eigenvalue $-E_k^q(\Omega;\epsilon)$ of $-H_k^q(-\Omega)
    +\epsilon\sum_{\vec n,\vec n'}\ket{\vec n}\bra{\vec n'}$ is
    \emph{non-degenerate}, and that its eigenvector
    $Z\ket{\psi_k^q(\Omega;\epsilon)} \equiv \sum_{\vec{n} \in
    \mathbb{Z}_2^{|\V|}}c_{\vec{n}}(\Omega;\epsilon)\ket{\vec{n}}$ can be
    chosen such that $c_{\vec{n}}(\Omega;\epsilon) > 0$.
    In turn, the \emph{smallest} eigenvalue $E_k^q(\Omega;\epsilon)$ of
    $H_k^q(\Omega)-\epsilon\sum_{\vec n,\vec n'}(-1)^{|\vec n|+|\vec n'|}\ket{\vec
    n}\bra{\vec n'}$ is non-degenerate, and its eigenvector can be chosen
    such that $\ket{\psi_k^q(\Omega;\epsilon)} = \sum_{\vec{n} \in
    \mathbb{Z}_2^{|\V|}}(-1)^{|\vec{n}|}c_{\vec
    n}(\Omega;\epsilon)\ket{\vec{n}}$ with $c_{\vec{n}}(\Omega;\epsilon) > 0$. 

    In contrast to the proof in \cref{subsec:GroundState}, we cannot conclude
    that $\ket{\psi_k^q(\Omega;\epsilon)}$ is the ground state of $H^q_k$ at
    this point.

    \item
    We know that $U_p(g)$ commutes with $H_k^q(\Omega)$ for all $p \in \P$ and
    $g\in G$. Since the $U_p(g)$ act by permutations on qubits, they also
    commute with the auxiliary operator $H_k^q(\Omega)-\epsilon\sum_{\vec n,\vec
    n'}(-1)^{|\vec n|+|\vec n'|}\ket{\vec n}\bra{\vec n'}$. Hence they leave
    all eigenspaces invariant -- in particular the non-degenerate ground state,
    and we find $U_p(g) \ket{\psi_k^q(\Omega;\epsilon)} =
    e^{\mathrm{i}\varphi_p(g)}\ket{\psi_k^q(\Omega;\epsilon)}$ for some phase
    $\varphi_p(g)$. On the other hand, we know that $U_p(g)\ket{\vec{n}} =
    \ket{\phi_p(g) \cdot \vec{n}}$ acts on the natural basis via a permutation
    $\phi_p(g)$ of the vertices in $p$.  
    Comparing coefficients by their absolute value yields $c_{\phi_p(g)^{-1}
    \cdot \vec{n}}=c_{\vec{n}}$; comparing coefficients by their phase yields
    $(-1)^{|\phi_p(g)^{-1} \cdot \vec n|}=e^{\mathrm{i}\varphi_p(g)}(-1)^{|\vec{n}|}$. 
    Since the number of excited qubits is invariant under permutations, we have
    $|\vec{n}| = |\phi_p(g)^{-1} \cdot \vec{n}|$ and the eigenvalue must be
    $e^{\mathrm{i}\varphi_p(g)}=+1$. We therefore conclude that
    $U_p(g) \ket{\psi_k^q(\Omega;\epsilon)} = +1\ket{\psi_k^q(\Omega;\epsilon)}$
    for all $p \in \P$ and $g\in G$, i.e., the ground state transforms under
    the trivial representation on all plaquettes:
    $\ket{\psi_k^q(\Omega;\epsilon)}\in\H^\s$. 

    \item
    In the limit $\epsilon \rightarrow 0$ we have $M_{\vec n,\vec n'}(\epsilon)
    \rightarrow \bra{\vec n}[-H_k^q(-\Omega)]\ket{\vec{n}'}$, and we recover a
    (now possibly degenerate) normalized ground state of the form
    $\ket{\psi_k^q} = \sum_{\vec{n} \in
    \mathbb{Z}_2^{|\V|}}(-1)^{|\vec{n}|}c_{\vec n}\ket{\vec{n}}$ with
    eigenenergy $E_k^q$ and coefficients $c_{\vec{n}} \ge 0$ in the natural
    basis (note $\geq$ and $\mathbb{Z}_2^{|\V|}$, instead of $>$ and
    $[\mathbb{Z}_2^{|\V|}]_k^q$ as in \cref{subsec:GroundState}). From
    $P^\s\ket{\psi_k^q(\Omega;\epsilon)} = \ket{\psi_k^q(\Omega;\epsilon)}$
    (with $P^\s$ continuous), it follows that
    $P^\s\ket{\psi_k^q}=\ket{\psi_k^q}$ in the limit $\epsilon\rightarrow 0$.
    Since $\ket{\psi_k^q}$ is a global ground state of $H^q_k$, it is in
    particular a ground state of $[H^q_k]^\s$ so that $\ket{\psi_k^q} \in
    [\G_k^q]^\s$. Finally, using $[\G_k^q]^\s \le \H_k^q$, we have shown that
    $\ket{\psi_k^q} \in \H_k^q$ so that the ground state takes the form
    $\ket{\psi_k^q} = \sum_{\vec{n} \in
    [\mathbb{Z}_2^{|\V|}]^q_k}(-1)^{|\vec{n}|}c_{\vec n}\ket{\vec{n}}$ with
    $c_{\vec n}\geq 0$.

    \end{enumerate}
    This concludes the proof.
\end{proof}

\end{document}